\DeclareMathOperator{\mrg}{merge}
\DeclareMathOperator{\gr}{gr}
\DeclareMathOperator{\Set}{Set}
\DeclareMathOperator{\img}{Im}
\newcommand{\AS}{\widehat{S}}
\newcommand{\R}{\mathbb{R}}
\newcommand{\Z}{\mathbb{Z}}
\newcommand{\F}{\mathbb{F}}
\newcommand{\Pp}{\mathcal{P}}
\newcommand{\Ss}{\mathcal{S}}
\newcommand{\Q}{\mathcal{Q}}
\newcommand{\M}{\mathcal{M}}
\newcommand{\sse}{\subseteq}
\newtheorem{theorem}{Theorem}
\newtheorem{lemma}[theorem]{Lemma}
\newtheorem{corollary}[theorem]{Corollary}
\newtheorem{definition}[theorem]{Definition}
\newtheorem{remark}[theorem]{Remark}
\newtheorem{example}[theorem]{Example}
\numberwithin{theorem}{section}
\title{Computing $p$-presentation distances is hard}
\title{Computing $p$-presentation distances is hard}
\author{H\aa vard Bakke Bjerkevik\thanks{Department of Mathematics \& Statistics, SUNY Albany, USA (current); \newline Department of Mathematics, Technical University of Munich, Germany (former).}\, and Magnus Bakke Botnan\thanks{Department of Mathematics, Vrije Universiteit Amsterdam, The Netherlands.}}
\begin{document}
\maketitle
\begin{abstract}
Recently, $p$-presentation distances for $p\in [1,\infty]$ were introduced for merge trees and multiparameter persistence modules as more sensitive variations of the respective interleaving distances ($p=\infty)$. It is well-known that computing the interleaving distance is NP-hard in both cases.
We extend this result by showing that computing the $p$-presentation distance is NP-hard for all $p\in [1,\infty)$ for both merge trees and $t$-parameter persistence modules for any $t\geq 2$.
Though the details differ, both proofs follow the same novel strategy, suggesting that our approach can be adapted to proving the NP-hardness of other distances based on sums or $p$-norms.
\end{abstract}


\section{Introduction}
Merge trees and persistence modules are related and fundamental concepts in topological data analysis (TDA). Starting with a function $f\colon X\to \R$, a merge tree 
tracks the evolution of the connected components of the sublevel sets of $f$. Such trees are useful in data visualization, have found applications in many settings, and can be readily computed. Similarly, persistence modules typically arise from applying homology to the nested family of sublevelsets. Perhaps the most classical setting is when $X=\mathbb{R}^n$, $Q\subset X$ is a finite subset of points, and $f(x) = \min_{q\in Q} ||x-q||_2$. These approaches to data science have found many applications to a wide range of disciplines \cite{dey2022computational,otter2017roadmap}. Recently, there has been an increasing interest in studying the homology of filtrations dependent on more than a single variable. The resulting objects are called multiparameter persistence modules, and how to effectively use them in data analysis is a topic of much ongoing research. We refer the reader to \cite{botnan2022introduction} for a recent overview.

In the literature, merge trees and (multiparameter) persistence modules are often compared using their respective interleaving distances. For single-parameter persistence, the interleaving distance is equal to the bottleneck distance, which is integral to many results central to TDA. However, despite its theoretical importance, the practitioner often prefers $p$-Wasserstein distances for a small $p$ \cite{carlsson2004persistence, cohen2010lipschitz,robinson2017hypothesis}; the case $p=\infty$ is precisely the bottleneck distance. This is because as $p$ decreases, the distances become relatively less sensitive to outlying intervals, measuring something closer to the sum of the differences between the modules rather than the maximal difference.
While the interleaving distance for merge trees \cite{de2016categorified} and multiparameter persistence modules \cite{lesnick2015theory} is an established part of the persistence literature, $p$-Wasserstein extensions only recently appeared \cite{bjerkevik2021ell, cardona2022universal}; another generalization that we shall not consider can be found in \cite{bubenik2023exact}. These distances are called $p$-presentation distances, and, importantly, it was also shown that these distances are all \emph{universal} in their respective settings. This naturally prompts the question of whether the $p$-presentation distances can be efficiently computed. For $p=\infty$, the computation was shown to be NP-hard for merge trees in \cite{agarwal2018computing} and for 2-parameter persistence modules in \cite{bjerkevik2020computing}.

\subsection{Contributions}
Our main two results are summarized in the following theorem.
\begin{theorem}
\label{thm:main-merge}
The problems of computing the $p$-presentation distance for merge trees and $t$-parameter persistence modules for any $t\geq 2$ are NP-hard for all $p\in [1, \infty)$. 
\end{theorem}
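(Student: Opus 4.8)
The plan is to give, for each $p\in[1,\infty)$, a polynomial-time reduction from an NP-hard constraint-satisfaction problem --- say $3$-SAT, or a convenient variant of it --- to the problem of deciding whether the $p$-presentation distance between two explicit objects lies below a threshold. For $t$-parameter modules we build the $t=2$ construction and extend it by $t-2$ dummy coordinates in which the modules are constant (all structure maps isomorphisms); since the gadget analysis below is insensitive to these extra coordinates, it suffices to describe and analyze the case $t=2$. Given an instance $\Phi$ with variables $x_1,\dots,x_n$ and clauses $C_1,\dots,C_m$, I would build a pair $M_\Phi,N_\Phi$ --- two merge trees in the first case, two bigraded modules in the second --- of size polynomial in $|\Phi|$, together with numbers $0\le c_0<(c_0^p+c_1^p)^{1/p}$, so that $d_p(M_\Phi,N_\Phi)\le c_0$ when $\Phi$ is satisfiable and $d_p(M_\Phi,N_\Phi)\ge (c_0^p+c_1^p)^{1/p}$ otherwise. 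Crucially the objects $M_\Phi,N_\Phi$ themselves do not depend on $p$; only the numerical analysis of the gap uses $p$, which is why the \emph{same} construction handles all finite $p$ at once. Since an algorithm computing (or suitably approximating) $d_p$ would decide on which side of this positive gap we lie, NP-hardness follows; we need only the reduction, not membership in NP.

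The heart of the construction is a collection of gadgets engineered so that every near-optimal comparison of $M_\Phi$ and $N_\Phi$ decomposes into \emph{independent local contributions}. First a ``choice skeleton'': one sub-structure per variable $x_i$, admitting exactly two comparisons of cost $c_0$, corresponding to $x_i=\mathrm{true}$ and $x_i=\mathrm{false}$, and arranged so that any comparison that is ``mixed'' on $x_i$ costs strictly more --- hence near-optimal comparisons may be assumed ``pure.'' For merge trees this can be realized with pairs of branches placed at carefully spaced heights, where the shift defining the comparison either raises or lowers a branch; for bigraded modules, with interval/rectangle summands whose grades encode the two options, in the spirit of the $p=\infty$ construction of \cite{bjerkevik2020computing} but with grades spread out so that mismatches are \emph{separated and additive} rather than overlapping. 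On top of the skeleton I would attach, for each clause $C_j=\ell_{j,1}\vee\ell_{j,2}\vee\ell_{j,3}$, a clause gadget wired to the three relevant variable sub-structures and carrying a ``private'' coordinate (in the merge-tree case a private height window, in the module case a private region of $\R^2$ used only by $C_j$). The gadget is designed so that, given a pure choice of the surrounding variables, a local comparison through it of cost $c_0$ exists iff at least one literal of $C_j$ is satisfied, and otherwise every local comparison through it must in addition pay $c_1$ in the coordinate private to $C_j$. Privacy of these coordinates is exactly what forces the clause contributions to add independently under the $p$-norm, so that a comparison encoding an assignment with $m'$ unsatisfied clauses has $p$-cost $(c_0^p+m'c_1^p)^{1/p}$.

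For correctness, the upper bound is the easy direction: from a satisfying assignment, follow the skeleton comparison dictated by the assignment together with the ``good'' branch of each clause gadget to exhibit an explicit comparison of $p$-cost exactly $c_0$. The real work is the matching lower bound, $d_p(M_\Phi,N_\Phi)\ge(c_0^p+c_1^p)^{1/p}$ for unsatisfiable $\Phi$, which I expect to be the main obstacle. It rests on a \emph{quantization / local-decoupling lemma}: any comparison can be transformed, without increasing its $p$-cost (for any $p\ge1$, using convexity and monotonicity of $\|\cdot\|_p$), into one that is pure on every variable and interacts with each gadget only through its designated interface; its cost vector then has the ``skeleton entries plus one private entry per violated clause'' shape, and $m'\ge 1$ since $\Phi$ is unsatisfiable. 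Proving this rules out ``global cheating'' --- comparisons that pay a bit extra in one place to economize across several gadgets at once --- and must be done carefully because $d_p$ is an infimum that need not be attained, so the argument has to work uniformly over $\varepsilon$-optimal comparisons; this is also where the merge-tree and multiparameter proofs genuinely diverge, even though the gadget philosophy is shared. Everything else --- that $M_\Phi,N_\Phi$ have polynomial size, that $c_0,c_1$ are small rationals, and that a single oracle call with polynomial precision suffices to read off the answer --- is routine bookkeeping.
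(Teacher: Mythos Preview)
Your proposal is a plan rather than a proof, and the step you yourself flag as ``the main obstacle'' --- the quantization / local-decoupling lemma --- is exactly where all the content lies and is left as an assertion. You claim any comparison can be made ``pure'' on each variable gadget without increasing its $p$-cost ``using convexity and monotonicity of $\|\cdot\|_p$,'' but the $p$-presentation distance is an infimum over \emph{all} finite sequences $M=M_1,\dots,M_\ell=N$ of arbitrary intermediate objects with arbitrary compatible presentations of consecutive pairs; nothing forces the $M_i$ or their presentations to respect any gadget decomposition, and convexity of the norm alone says nothing about this. The paper never attempts such a purification. Instead, after refining an $\epsilon$-optimal sequence by linear interpolation so that no grade crosses an integer checkpoint in one step (\cref{lem_refinement}, \cref{lem_refinement_mods}), it tracks two families of invariants through the whole zigzag: barcode endpoints (paired via the standard persistence algorithm, whose pairing depends only on the total order, not the grades) and a distinguished merge level $m_i$ (resp.\ the birth-coordinate of a distinguished vector $u\in M^\infty$). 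The lower bound comes from showing that each tracked endpoint must accumulate displacement $\ge 1$ toward its target parity, that $m_i$ must in addition traverse the window $[1,2]$, and --- the delicate point --- that this extra traversal can be counted \emph{on top of} the endpoint costs without double-counting (\cref{lem:single-step-merge}, \cref{lem_single-step}). None of this machinery appears in your sketch.

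Two further issues. The paper does not reduce from $3$-SAT: for merge trees it reduces from Balanced Partition and for modules from the Constrained Invertibility problem, precisely because those problems already have the right shape (a balanced partition is a bijection of branches; a CI solution is literally an isomorphism $M^\infty\to N^\infty$ with prescribed zero pattern). Your variable/clause design relies on ``private coordinates'' or ``private height windows'' so that clause penalties add independently, but merge trees live over a single real parameter and every relation sees every height; there is no evident way to make height windows private, and you give no construction. Your passage from $t=2$ to general $t$ by appending constant coordinates is essentially what the paper does, but one must also check the reverse inequality --- that projecting an arbitrary $t$-parameter presentation to the first two coordinates does not increase cost --- which the paper verifies explicitly.
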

Our reductions are closely related to those for the case of $p=\infty$, but significantly more involved, requiring several novel ideas. Furthermore, our two proofs are the first hardness proofs for $\ell^p$ metrics in TDA that we are aware of, and we believe these proofs can serve as a blueprint for how to extend NP-hardness of distances involving maxima to NP-hardness of distances involving sums.
The following is an outline of the proof strategy in both cases:
\begin{itemize}
\item Associate to each instance of a NP-complete decision problem $C$ a pair of spaces (in our cases, merge trees or persistence modules) such that $C$ has a `yes' answer if and only if the two  spaces are close in the $\ell^\infty$-distance.
\item Show that if the two  spaces are close in the $\ell^\infty$-distance, then a small $\ell^p$-distance can be obtained by moving a set of objects (in our cases, generators or relations) a distance of at most $1$.
\item Show that independent of the instance of $C$, you have to move each object a distance of at least 1.
\item Show that if the $\ell^\infty$-distance is large (i.e., $C$ has a `no' answer), then there is at least one object that has to move a distance of more than 1, giving a larger $\ell^p$-distance than if the $\ell^\infty$-distance is small.
\end{itemize}
The fact that this proof strategy works for algebraic objects like persistence modules and combinatorial objects like merge trees suggests that our ideas can apply to a wide range of settings in TDA and beyond. While it is true that the distances we consider are closely related, the only essential element in the approach described above is that the distances are defined in terms of maxima and sums (or $p$-norms) of vectors describing movements of certain objects, and this is a natural basis for any definition of a sum-like distance, not just the presentation distances we consider.  

\subsection{Outline}
In \cref{sec:background}, we introduce basic notation, and recall $p$-Wasserstein distances. In \cref{sec:merge}, we prove the hardness result for merge trees, and in \cref{sec:multi]} we prove the analogous result for multiparameter persistence modules. 

\subsection{Acknowledgments}
The authors thank the Centre for Advanced Study (CAS) in Oslo for hosting the authors for a period in 2023 as part of the program Representation Theory: Combinatorial Aspects and Applications, where the main results of the paper were obtained.
The first author was funded by the Deutsche Forschungsgemeinschaft (DFG - German Research Foundation) - Project-ID 195170736 - TRR109.

\section{Background material}
\label{sec:background}
In this paper, we consider diagrams of vector spaces or sets indexed by a partially ordered set $P$. This means that we have objects $\{M_q\}_{q\in P}$ and  maps $\{M_{q\to q'}\colon M_q\to M_{q'}\}_{q\leq q'}$ satisfying the conditions that $M_{q\to q}$ is the identity on $M_q$ and $M_{r\to s}\circ M_{q\to r} = M_{q\to s}$ for all $q\leq r \leq s$. In the case of vector spaces, each $M_q$ is a vector space over some fixed field $\mathbb{F}$, and the morphisms are linear maps. For sets, $M_q$ is a set, and the maps are set maps. Considering $P$ as a category $P$ with objects the elements of $P$, and a unique morphism from $a$ to $b$ if $a\leq b$, the object above is a \emph{functor} $M\colon P\to \mathcal{C}$ where $\mathcal{C}$ is either the category $\mathrm{Vec}$ of vector spaces over $\mathbb{F}$, or the category $\mathrm{Set}$ of sets. We say that two diagrams $M$ and $N$ are \emph{isomorphic} if they are isomorphic as functors. Explicitly, this means that there is a family of isomorphisms $\{\phi_q\colon M_q\to N_q\}$ such that $\phi_r\circ M_{q\to r} = N_{q\to r}\circ\phi_q$ for all $q\leq r$. Here an isomorphism is either a linear isomorphism or a bijection, depending on which of the two settings we are working in. Although there is a growing interest in TDA in considering different posets $P$, we shall limit ourselves to $P = \R^n$.
\subsection{Persistence modules}

When $\mathcal{C} = \mathrm{Vec}$, we further say that $M\colon P\to \mathcal{C}$ is a \emph{persistence module}. Such modules typically arise by applying homology with field coefficients to the sublevel sets of a multivalued function. That is, if $f\colon X\to \mathbb{R}^n$, then we obtain a persistence module $N\colon \mathbb{R}^n\to \mathrm{Vec}$ by letting
\[ N_t = H_i\left(f^{-1}\{x\in X \mid f(x) \leq t\}; \mathbb{F}\right).\]

As is well known, if $P=\R$, and $\dim M_a < \infty$ for all $a$, then $M$ is uniquely described by a multiset $B(M)$ of intervals in $\R$ called the barcode of $M$ \cite{crawley2015decomposition}. In this work, we shall work under an even stronger assumption, namely that the persistence modules are \emph{finitely presented} (see \cref{subsec.mod.ppresdist}). For $n=1$, this is equivalent to saying that $B(M)$ contains a finite number of intervals and that each interval is of the form $[a,b)$ for $a<b$ and where $b$ is allowed to be infinite. Hence, an interval $[a,b)$ in the barcode can be identified with the point $(a,b)\in (\R\cup \{\infty\})^2$. The following definition of Wasserstein distances thus applies to barcodes of finitely presented persistence modules. 

\subsubsection{Wasserstein distances} Let $X$ and $Y$ be two finite subsets of $\{ (a,b) \mid a<b\}\subset (\R\cup \{\infty\})^2$, and let $\Pi\colon \R^2\to \{(a,a) \mid a\in \R\}$ be the orthogonal projection onto the diagonal in $\R ^2$. A bijection $\phi\colon X\supseteq A\to B\subseteq Y$ is called a \emph{matching} between $X$ and $Y$. For any such $\phi$ and $p\in [0, \infty)$, we let 
\[ \text{$p$-cost}(\phi) = \left(\sum_{x\in A}||x-\phi(x)||_p^p + \sum_{z\in (X\setminus A)\cup (Y\setminus B)} ||z-\Pi(z)||_p^p\right)^{1/p},
\]
where follow the convention that
\[||(a, \infty) - (a',\infty)||_p  = |a-a'| \qquad \text{and} \qquad ||(a, \infty) - \Pi((a, \infty))||_p = \infty.\]

For $p=\infty$, 
\[
\text{$p$-cost}(\phi) = \max\left(\max_{x\in A}||x-\phi(x)||_p^p, \max_{z\in (X\setminus A)\cup (Y\setminus B)} ||z-\Pi(z)||_p^p\right).
\]
The \emph{$p$-Wasserstein distance} \cite{carlsson2004persistence,cohen2010lipschitz,robinson2017hypothesis} between $X$ and $Y$ is \[d^p_W(X,Y) = \min_{\substack{\text{matchings }\\ \phi \colon X\to Y}} \text{$p$-cost}(\phi).\]

\subsection{Merge trees}
\label{sec:basicmerge}
A merge tree is classically defined through the sublevel sets of a sufficiently tame function $f\colon X\to \R$. Examples include when $f$ is a Morse function on a compact manifold, or a piecewise-linear function on a finite simplicial complex. The \emph{epigraph} of $f$ is the  set $E_f = \{(x,t) \mid f(x) \leq t\}$, and the (geometric) merge tree $\mathcal{M}_f$ of $f$ is the \emph{Reeb graph} of the projection map $\Pi\colon E_f\to \R$ defined by $\pi(x,t) = t$. Explicitly, 
\[ \mathcal{M}_f = E_f /{\sim} \quad \text{where} \quad (x,s) \sim (y,t) \iff s=t \text{ and } [x] = [y] \in \pi_0(\Pi^{-1}(t)). \]
Here $\pi_0$ denotes the functor that takes a topological space to its set of connected components. Since equivalent points in the quotient space have the same function value by definition, $\Pi$ descends to a continuous map $\tilde \Pi\colon \mathcal{M}_f\to \R$. Both $(X,f)$ and $(\mathcal{M}_f, \tilde{\Pi})$ are examples of $\R$-spaces\footnote{An $\R$-space is a pair $(X,f)$, where $f\colon X\to \R$ is continuous. Two $\R$-spaces $(X,f)$ and $(Y,g)$ are isomorphic if there exists a homeomorphism $h\colon X\to Y$ such that $g\circ h = f$.}. An illustration of a merge tree is given in \cref{fig.merge}. 

Associated to $f$, we also obtain a functor $M_f\colon \R\to \mathrm{Set}$ by letting $(M_f)_t = \pi_0(f^{-1}(-\infty, t])$, and observing that we have a set map $(M_f)_{s\to t}\colon (M_f)_s\to (M_f)_t$ induced by the inclusion $f^{-1}(-\infty, s]\hookrightarrow f^{-1}(-\infty, t]$. The salient point is that the functor $M_f$ and the geometric merge tree $\mathcal{M}_{\tilde{\Pi}}$ determine each other up to isomorphism. Firstly, since $\mathcal{M}_{\tilde{\Pi}}$ was constructed precisely to capture the connected components of sublevel sets, we have natural bijections $\pi_0(f^{-1}(-\infty, t]) \cong \pi_0(\tilde{\Pi}^{-1}(-\infty,t])$ and therefore $M_{f}\cong M_{\tilde{\Pi}}$. Conversely, from $M_f$ one can construct a geometric merge tree $|M_f|$, called the \emph{geometric realization}, which is isomorphic to $(\mathcal{M}_f, \tilde{\Pi})$ as an $\R$-space. A formal approach to the interplay between functors from $\R$ to $\Set$ and geometric merge trees can be found in \cite{cardona2022universal}.

\begin{figure}
\centering
\begin{tikzpicture}

    \draw[thick, name path=upper] plot[smooth, tension=1] coordinates {(-4,0) (-3,2) (-2,-1) (-1,1) (0,-2) (1,0.5)};

    \fill[gray!50, opacity=0.5] 
        (-4,3) -- (1,3) -- 
        plot[smooth, tension=1] coordinates {(1,0.5) (0,-2) (-1,1) (-2,-1) (-3,2) (-4,0)} -- cycle;
        
    \draw[dashed, black!50] (-4,0) -- (7,0);
       \draw[dashed, black!50] (-3,2) -- (7,2);
      \draw[dashed, black!50](-1,1) -- (7,1);
       \draw[dashed, black!50](0,-2) -- (7,-2);
          \draw[dashed, black!50] (-2,-1) -- (7,-1);

          \node[right] at (7,-2) {$c_1$};
           \node[right] at (7,-1) {$c_2$};
            \node[right] at (7,0) {$c_3$};
             \node[right] at (7,1) {$c_4$};
              \node[right] at (7,2) {$c_5$};

              \node at (0.5,2.5) {$E_f$};
              \node at (-3.5,1) {$f$};
                 
    \begin{scope}[xshift=6cm]
    \draw[thick] plot[smooth, tension=0] coordinates {(-4,0) (-3,2) (-1,1) (0,-2)};
    \draw[thick] (-2,-1) to (-1,1);
    
    \foreach \x/\y in {-4/0, -3/2, -2/-1, -1/1, 0/-2} {
        \fill (\x,\y) circle (3pt);
    }

    \draw[->, thick](-3,2) -- (-3,3);
        \node at (-2.5,2.5) {$\mathcal{M}_f$};

    \node[below] at (0,-2.1) {$a$};
     \node[below] at (-4,-0.1) {$c$};
    \node[below] at (-2,-1.1) {$b$};

    \end{scope}

    \begin{scope}[xshift=12cm]

    \foreach \x/\y in {-4/-2, -3.5/-1, -3/0} {
        \fill (\x,\y) circle (3pt);
    }
        \foreach \x/\y in {-3.5/1, -3/2} {
        \draw (\x,\y) circle (3pt);
    }

    \draw[thick,->] (-4,-2) -- (-4, 3);
    \draw[thick] (-3.5,-1) -- (-3.5, 1);
    \draw[thick] (-3,0) -- (-3, 2);
        \node at (-3,2.5) {$B(\mathcal M_f)$};
    \end{scope}
\end{tikzpicture}
\caption{A function $f\colon X\to \mathbb{R}$ together with its epigraph $E_f$ on the left, the associated (geometric) merge tree $\mathcal{M}_f$ in the middle (where the local minima have been labeled), and the $H_0$ barcode of sublevel filtration of $\mathcal{M}_f$ to the right. We observe that the merge tree contains precise information about which branches merge, whereas this information is lost in the barcode.}
\label{fig.merge}
\end{figure}

\subsubsection{Merge trees through presentations}
\label{sec:basic}

To introduce the $p$-presentation distance for merge trees, we must first discuss presentations of merge trees. In fact, we shall use this to give a purely categorial definition of a merge tree without reference to a real-valued function on a topological space. To avoid extensive use of categorical language, we work with sets and relations. However, our constructions are equivalent to those given in \cite{cardona2022universal}. 

A \emph{merge tree presentation} (referred to as just a \emph{presentation} in this section) is a triplet $P = (G,R, \gr)$ where $G$ is a finite set, $R$ is a multiset of subsets of $G$ of cardinality 2, and $\gr$ is a function from $G\cup R$ to $\R$ such that for all $r = \{g,g'\}\in R$, $\gr(r)\geq \max\{\gr(g),\gr(g')\}$.
We refer to $G$ as the set of \emph{generators}, $R$ as the set of \emph{relations}, and $\gr$ as the \emph{grading function}.
For $s\in \R$, let $G_s = \{ g\in G \mid \gr(g) \leq s\}$ and $R_s = \{ r\in R \mid \gr(r) \leq s\}$,  and let 
$\sim_s$ denote the equivalence relation on $G_s$ generated by $g\sim_s h$ if  $\{g,h\}\in R_s$. Associated to any such triplet there  is a functor $F(G,R,\gr)\colon \R\to \mathrm{Set}$ given by 
\[ F(G,R,\gr)_s = G_s/{\sim}_s \]
and where the maps are induced by the inclusions of sets $G_s\subseteq G_t$ for $s\leq t$.

\begin{definition}
    A functor $M\colon \R \to \mathrm{Set}$ is a \emph{merge tree} if there exists a triplet $(G,R, \gr)$ such that $M\cong F(G,R,\gr)$ and the cardinality of $M_s$ is 1 for all sufficiently large $s$. We shall refer to such a triplet as a \emph{presentation} of $M$.
\end{definition}
\begin{remark}
    The condition on the cardinality for large $s$ is to ensure that all branches ultimately merge; without this condition we would have a merge \emph{forest}.
\end{remark}

This definition of a merge tree is slightly less general than the one given in \cite{cardona2022universal} in that we assume the generator and relation sets to be finite.
However, this finiteness condition is implicitly assumed in all the results of \cite{cardona2022universal}, so we are effectively working at the same level of generality.

Note that the functors $M_f$ obtained in the previous section are all merge trees according to this definition provided $f$ is sufficiently tame. 
\begin{example}
\label{ex.mergetree1}
Consider the geometric merge tree $\mathcal{M}_f$ in the middle of \cref{fig.merge}. This merge tree is a geometric realization of $M_f$, and for $t_1 < t_2$ in an interval $(c_i, c_{i+1})$, the merge tree remains constant. It therefore suffices to specify the functor $M_f\colon \R\to \mathrm{Set}$ at the critical values $c_1, \ldots, c_5$. Using the first-appearing vertex in each component as a representative of the connected component,
\[\{[a]\} \xrightarrow{[a]\mapsto [a]} \{[a],[b]\}\xrightarrow{\substack{[a]\mapsto [a] \\ [b]\mapsto [b]}}\{[a], [b], [c]\}\xrightarrow{\substack{[a]\mapsto [a] \\ [b]\mapsto [a] \\ [c]\mapsto [c]}} \{[a], [c]\}\xrightarrow{\substack{[a]\mapsto [a] \\ [c]\mapsto [a]}} \{[a]\}. \]
We obtain a presentation $P$ by letting $G$ be the leaves, $G=\{[a], [b], [c]\}$, and the relations $R=\{\{[a], [b]\}, \{[a], [c]\}\}$ and 
\[\gr([a]) = c_1 \qquad \gr([b])= c_2 \qquad \gr([c]) = c_3 \qquad \gr\{[a], [b]\} = c_4 \qquad \gr\{[a], [b]\} = c_5.  \]
We can visualize the generators as strands, and the relations as merging strands, from which how to recover the geometric merge tree becomes clear.
\begin{center}
\begin{tikzpicture}
        \draw[gray!40, dashed] (1,1) grid (8,-2);
        \draw[->, thick, blue] (1,1) -- (8,1) node[right] {$[a]$};
        \draw[->, thick, blue] (2,0) -- (8,0) node[right] {$[b]$};;
        \draw[->,thick, blue] (3,-1) -- (8,-1) node[right] {$[c]$};;

        \fill[red] (4,-2) circle (3pt) node[left] {$\{[a], [b]$\}};
        \fill[red] (5,-2) circle (3pt) node[right] {$\{[a], [c]$\}};

        \draw[red, thick, bend right] (4,-2) to (4,1);
        \draw[red, thick, bend left] (4,-2) to (4,0);
         \draw[red, thick, bend right] (5,-2) to (5,1);
        \draw[red, thick, bend left] (5,-2) to (5,-1);
        
\end{tikzpicture}
\end{center}
Note that this presentation is non-unique. Firstly, replacing the relation $\{[a], [c]\}$ with $\{[b], [c]\}$ gives another presentation for $M_f$. One can also add an additional generator $g$ with $\gr(g)\geq c_1$, together with the relation $r=\{[a], [g]\}$ with $\gr(r) = \gr(g)$. In conclusion, there is quite a lot of freedom in choosing a presentation, and this will be important when discussing compatible presentations below. 
\end{example}

When a presentation $(G,R,\gr)$ of $M$ is clear from the context, we shall use $[g]\in M_{s}$, for $s\geq \gr([g])$, to denote the equivalence class of $g\in G_s$. For $a\geq \gr(g)$ we define $M_a(g)$ to be the equivalence class of $[g]$ under the map $G_{\gr(g)}\to G_a$.  We shall employ the notation $g=h$ for a relation $\{g,h\}$.

\subsubsection{Barcodes of merge trees}
By considering each $M_t$ as a topological space with the discrete topology, we can apply the $0$-th homology functor $H_0(-;\mathbb{F})$ to get a persistence module $H_0(M; \mathbb{F})$ defined pointwise by $H_0(M; \mathbb{F})_t = H_0(M_t; \mathbb{F})$. For any merge tree $M$, we thus define  $B(M) := B(H_0(M;\mathbb{F}))$. Unsurprisingly, $B(M)$ coincides with the `standard' $0$-th persistent homology of the sublevel set filtration of the geometric realization of $M$ \cite{cardona2022universal}. 

\begin{example}[\cref{ex.mergetree1} continued.]
    Applying $H_0$\footnote{This is the same as applying the free functor from sets to vector spaces, i.e., linearization of the set maps.} to\[\{[a]\} \xrightarrow{[a]\mapsto [a]} \{[a],[b]\}\xrightarrow{\substack{[a]\mapsto [a] \\ [b]\mapsto [b]}}\{[a], [b], [c]\}\xrightarrow{\substack{[a]\mapsto [a] \\ [b]\mapsto [a] \\ [c]\mapsto [c]}} \{[a], [c]\}\xrightarrow{\substack{[a]\mapsto [a] \\ [c]\mapsto [a]}} \{[a]\}, \]
    we obtain
    \[\F \xrightarrow{{\begin{bmatrix} 1 \\ 0 \end{bmatrix}}} \F^2\xrightarrow{\begin{bmatrix} 1 & 0 \\ 0 & 1 \\ 0 & 0 \end{bmatrix}}\F^3\xrightarrow{\begin{bmatrix} 1 & 1 & 0 \\ 0 & 0 & 1\end{bmatrix}}\F^2\xrightarrow{\begin{bmatrix} 1 & 1 \end{bmatrix}} \F. \]  
The barcode of this persistence module can be obtained by changing bases or by applying the inclusion-exclusion formula to compute the barcode (persistence diagram) from the ranks of the linear maps; see, e.g., \cite{cohen2005stability}. This yields the barcode  
\[
\{ [c_1, \infty), [c_2, c_4), [c_3, c_5) \}
\]
exactly as for the geometric merge tree in \cref{fig.merge}.
\end{example}

\section{Computing $d_I^p$ for Merge trees}
\label{sec:merge}
In this section we show that computing the $p$-presentation distance between merge trees is NP-hard for all $p\in [0,\infty)$; NP-hardness for $p=\infty$ follows from the proof of \cite[Theorem 3.3]{agarwal2018computing}, and our argument here is an adaptation of the reduction found in that work.

\subsection{The $p$-presentation distance}Two presentations $P= 
(G_P, R_P, \gr_P)$ and $Q=(G_Q, R_Q, \gr_Q)$ of not necessarily isomorphic merge trees $M$ and $N$ are $\sigma$-\emph{compatible} if there is a bijection $\sigma\colon G_P\cup R_P\to G_Q\cup R_Q$ restricting to bijections $G_P\to G_Q$ and $R_P\to R_Q$ such that $\sigma(\{g,h\}) = \{\sigma(g), \sigma(h)\}$ for all $\{g,h\}\in R_P$. If $\sigma$ can be chosen to be the identify function, then we say that the two presentations are \emph{compatible}. It is shown in \cite[Lemma~3.8]{cardona2022universal} that any two merge trees admit compatible presentations. We shall assume throughout this section that $\sigma$ is the identity.

For compatible presentations $P$ and $Q$, and $p\in [1, \infty]$, we define
\[
d^p(P,Q) = \begin{cases} \left(\sum_{h\in G_P\cup R_P} |\gr_P(h)-\gr_Q(h)|^p\right)^{\frac 1 p} & \text{if $p<\infty$} \\
\max_{h\in G_P\cup R_P} \left|\gr_P(h)-\gr_Q(h)\right| & \text{if $p=\infty$.}\end{cases}
\]
For merge trees $M$ and $N$, 
\[\hat d_I^p(M,N) = \inf\{d^p(P,Q) \mid P\text{ and }Q \text{ are compatible presentations of }M\text{ and } N\}.\]

\begin{example}
\label{ex.comppres}
Returning to the merge tree $M_f$ from \cref{ex.mergetree1} with presentation $P$, let us also consider another merge tree $N$ with a presentation $Q$ with generators $\{g_1, g_2\}$ with $\gr(g_1) = \gr(g_2) = c_1$ and a relation $r = \{g_1,g_2\}$ with $\gr(r) = c_2$. Labeling the rows and columns by the grades of the generators and relations, respectively, we can represent the presentations by matrices as follows,
\begin{equation}
P=
\begin{bNiceMatrix}[
  first-row,code-for-first-row=\scriptstyle,
  first-col,code-for-first-col=\scriptstyle,
]
& c_4 & c_5 \\
c_1 & 1 & 1 \\
c_2 & 1 & 0\\
c_3 & 0 & 1 
\end{bNiceMatrix}\qquad \qquad Q = \begin{bNiceMatrix}[
  first-row,code-for-first-row=\scriptstyle,
  first-col,code-for-first-col=\scriptstyle,
]
& c_2  \\
c_1 & 1 \\
c_1 & 1
\end{bNiceMatrix}
\end{equation}

Discarding the labels, two presentations are compatible (with $\sigma$ as the identity) if and only if their matrices are identical. A straightforward way to achieve identical matrices is to take their direct sum, ensuring that the added rows and columns have the same grade as the first appearing generator in each respective presentation. Finally, an additional relation is introduced to merge the newly added generators with the first appearing generator in the original presentation. 
Specifically for the example above, we get 
\begin{equation}
\hat{P}=
\begin{bNiceMatrix}[
  first-row,code-for-first-row=\scriptstyle,
  first-col,code-for-first-col=\scriptstyle,
]
& c_4 & c_5 & {\color{red} c_1} &{\color{red} c_1} \\
c_1 & 1 & 1 & {\color{red} 0}  & {\color{red} 1}  \\
c_2 & 1 & 0 & {\color{red} 0}  & {\color{red} 0}  \\
c_3 & 0 & 1 & {\color{red} 0}  & {\color{red} 0}   \\
{\color{red} c_1} & {\color{red} 0}  & {\color{red} 0} & {\color{red} 1} & {\color{red} 1} \\
{\color{red} c_1} & {\color{red} 0}  & {\color{red} 0}  & {\color{red} 1} & {\color{red} 0} 
\end{bNiceMatrix}\qquad \qquad \hat{Q} = \begin{bNiceMatrix}[
  first-row,code-for-first-row=\scriptstyle,
  first-col,code-for-first-col=\scriptstyle,
]
& {\color{red} c_1} & {\color{red} c_1} & c_2 & {\color{red} c_1} \\
{\color{red} c_1}& {\color{red} 1} & {\color{red} 1}  & {\color{red} 0}  & {\color{red} 1}  \\
{\color{red} c_1} & {\color{red} 1}  & {\color{red} 0}  & {\color{red} 0}  & {\color{red} 0}  \\
{\color{red} c_1} & {\color{red} 0}  & {\color{red} 1}  & {\color{red} 0}  & {\color{red} 0}   \\
c_1 & {\color{red} 0}  & {\color{red} 0}  & 1 & {\color{red} 1}  \\
c_1 & {\color{red} 0}  & {\color{red} 0}  & 1 & {\color{red} 0} 
\end{bNiceMatrix}.
\end{equation}
In the following, we shall assume that the critical values are equidistant. One then sees that, $d^\infty(\hat{P}, \hat{Q}) = c_5-c_1$. This can be further improved, but it is not necessarily true that matrices of minimal dimensions minimize $d^\infty$. Consider for instance the compatible presentations, 
\begin{equation}
P=
\begin{bNiceMatrix}[
  first-row,code-for-first-row=\scriptstyle,
  first-col,code-for-first-col=\scriptstyle,
]
& c_4 & c_5 \\
c_1 & 1 & 1 \\
c_2 & 1 & 0\\
c_3 & 0 & 1 
\end{bNiceMatrix}\qquad \qquad \hat{Q}' = \begin{bNiceMatrix}[
  first-row,code-for-first-row=\scriptstyle,
  first-col,code-for-first-col=\scriptstyle,
]
& c_2 & c_4 \\
c_1 & 1 & 1 \\
c_1 & 1 & 0\\
c_4 & 0 & 1 
\end{bNiceMatrix}.
\end{equation}
For this case, we get $d^\infty (P, \hat{Q}') = c_4-c_2$. To improve on this, consider
\begin{equation}
\bar{P} =
\begin{bNiceMatrix}[
  first-row,code-for-first-row=\scriptstyle,
  first-col,code-for-first-col=\scriptstyle,
]
& c_4 & c_5 & (c_1+c_2)/2 \\
c_1 & 1 & 1 & 1 \\
c_2 & 1 & 0 & 0 \\
c_3 & 0 & 1 & 0 \\
(c_1+c_2)/2 & 0 & 0 & 1 
\end{bNiceMatrix}\qquad \qquad \bar{Q} = \begin{bNiceMatrix}[
  first-row,code-for-first-row=\scriptstyle,
  first-col,code-for-first-col=\scriptstyle,
]
& c_3 & c_4 & c_2 \\
c_1 & 1 & 1 & 1 \\
c_3 & 1 & 0 & 0 \\
c_4 & 0 & 1 & 0 \\
c_1 & 0 & 0 & 1 
\end{bNiceMatrix}.
\end{equation}
from which we obtain 
\[\hat d^\infty(\bar{P}, \bar{Q}) =  c_4 - c_3.\]
This realizes $\hat d_I^\infty(M_f, N)$ since $d^\infty(\bar{P}, \bar{Q})\geq \hat d_I^\infty(M_f,N) = d_I(M_f, N)$ and $d_I(M_f,N)$ is known to be lower-bounded by $d_B(B(M_f), B(N)) = c_4-c_3$ \cite{morozov2013interleaving}; the optimal matching is given by matching the infinite bars and leaving other bars unmatched. The barcodes are shown in \cref{fig.merge2}.
\end{example}
\begin{figure}
\centering
\begin{tikzpicture}

\draw[thick] plot[smooth, tension=0] coordinates {(-4,0) (-3,2) (-1,1) (0,-2)};
    \draw[thick] (-2,-1) to (-1,1);
    
    \foreach \x/\y in {-4/0, -3/2, -2/-1, -1/1, 0/-2} {
        \fill (\x,\y) circle (3pt);
    }

    \draw[->, thick](-3,2) -- (-3,3);
        \node at (-2.5,2.5) {$|M_f|$};

    \node[below] at (0,-2.1) {$a$};
     \node[below] at (-4,-0.1) {$c$};
    \node[below] at (-2,-1.1) {$b$};
        
    \draw[dashed, black!50] (-4,0) -- (5,0);
       \draw[dashed, black!50] (-3,2) -- (5,2);
      \draw[dashed, black!50](-1,1) -- (5,1);
       \draw[dashed, black!50](0,-2) -- (5,-2);
          \draw[dashed, black!50] (-2,-1) -- (5,-1);

          \node[right] at (5,-2) {$c_1$};
           \node[right] at (5,-1) {$c_2$};
            \node[right] at (5,0) {$c_3$};
             \node[right] at (5,1) {$c_4$};
              \node[right] at (5,2) {$c_5$};

    \begin{scope}[xshift=6cm]
    \draw[thick] plot[smooth, tension=0] coordinates { (-4,-2)  (-3,-1) (-2,-2) };

    \foreach \x/\y in {-4/-2, -3/-1, -2/-2} {
        \fill (\x,\y) circle (3pt);
    }

    \draw[->, thick](-3,-1) -- (-3,3);
        \node at (-2.5,2.5) {$|N|$};

    \node[below] at (-2,-2.1) {$g_1$};
    \node[below] at (-4,-2.1) {$g_2$};

    \end{scope}

    \begin{scope}[xshift=10cm]

    \foreach \x/\y in {-4/-2, -3.5/-1, -3/0, -1.5/-2, -1/-2} {
        \fill (\x,\y) circle (3pt);
    }
        \foreach \x/\y in {-3.5/1, -3/2, -1/-1} {
        \draw (\x,\y) circle (3pt);
    }

    \draw[thick,->] (-4,-2) -- (-4, 3);
        \draw[thick,->] (-1.5,-2) -- (-1.5, 3);
        \draw[<->,red,thick,dashed] (-3.8,-2)--(-1.7,-2);
         \draw[thick] (-1,-2) -- (-1, -1);
    \draw[thick] (-3.5,-1) -- (-3.5, 1);
    \draw[thick] (-3,0) -- (-3, 2);
        \node at (-3,2.5) {$B(M_f)$};
                \node at (-0.8,2.5) {$B(N)$};
    \end{scope}
\end{tikzpicture}
\caption{Geometric realizations of the merge trees $M_f$ and $N$ from \cref{ex.comppres} and their barcodes. The dashed lines represents the optimal matching in the computation of $d_W^p(B(M_f),B(N))$ for all $p$.}
\label{fig.merge2}
\end{figure}

For $p\neq\infty$, $\hat d_I^p$ does not satisfy the triangle inequality.
The following example is a rescaling of \cite[Example 4.2]{cardona2022universal} showing this.
\begin{example}
\label{ex_triangle_fail}
Consider the three merge trees $M_1$, $M_2$ and $M_3$ in \cref{fig.no_triangle_ineq}.
Below, a presentation $P_1$ of $M_1$ is given on the left, presentations $P_2$ and $P'_2$ of $M_2$ are in the middle, and a presentation $P_3$ of $M_3$ is on the right,
\[
\begin{tikzcd}[row sep=0.04cm,column sep=2cm, ampersand replacement=\&]
    P_1 \& P_2 \& P_2' \& P_3 \\
    \begin{bNiceMatrix}[
  first-row,code-for-first-row=\scriptstyle,
  first-col,code-for-first-col=\scriptstyle,
]
& \epsilon \\
0 & 1 \\
0 & 1 \\
\end{bNiceMatrix} \ar[r, leftrightarrow, "compatible"] \&  \begin{bNiceMatrix}[
  first-row,code-for-first-row=\scriptstyle,
  first-col,code-for-first-col=\scriptstyle,
]
& 0 \\
0 & 1 \\
0 & 1 \\
\end{bNiceMatrix}
\& \begin{bNiceMatrix}[
  first-row,code-for-first-row=\scriptstyle,
  first-col,code-for-first-col=\scriptstyle,
]
 \\
0 \\
\end{bNiceMatrix}
\ar[r, leftrightarrow, "compatible"] 
\&
\begin{bNiceMatrix}[
  first-row,code-for-first-row=\scriptstyle,
  first-col,code-for-first-col=\scriptstyle,
]
 \\
1 \\
\end{bNiceMatrix}
\end{tikzcd} 
\]
The presentations $P_1$ and $P_2$ are compatible, and therefore $\hat d^p_I(M_1,M_2)\leq \epsilon$, which we can assume to be smaller than $2^{\frac 1p}-1$. Likewise, $P'_2$ and $P_3$ are also compatible and, therefore, $\hat d^p_I(M_2,M_3) \leq 1$.
But by an argument similar to the one given in \cite{cardona2022universal}, $\hat d^p_I(M_1,M_3) \geq 2^{\frac 1p}$; intuitively, this is because we need to move two generators from $0$ to $1$.
It follows that the triangle inequality does not hold for $\hat d^p_I$, because we can sometimes take ``shortcuts'' through two different presentations of an intermediate module.
\end{example}

\begin{figure}
\centering
\begin{tikzpicture}

       \draw[dashed, black!50] (6,2) -- (9,2);
      \draw[dashed, black!50] (0,-1) -- (9,-1);
       \draw[dashed, black!50](-1,-2) -- (9,-2);

          \node[right] at (9,-2) {$0$};
           \node[right] at (9,-1) {$\epsilon$};
              \node[right] at (9,2) {$1$};
              
    \begin{scope}[xshift=0cm]
    \draw[thick] plot[smooth, tension=0] coordinates {(-1,-2) (0,-1) (1,-2)};
    \draw[thick,->] (0,-1) to (0,3);
    \node at (.5,2.5) {$M_1$};
    
    \foreach \x/\y in {-1/-2, 0/-1, 1/-2} {
        \fill (\x,\y) circle (3pt);
    }
    \end{scope}
                 
    \begin{scope}[xshift=3cm]
    \draw[thick,->] (0,-2) to (0,3);
    \node at (.5,2.5) {$M_2$};
    
    \foreach \x/\y in {0/-2} {
        \fill (\x,\y) circle (3pt);
    }
    \end{scope}
                 
    \begin{scope}[xshift=6cm]
    \draw[thick,->] (0,2) to (0,3);
    \node at (.5,2.5) {$M_3$};
    
    \foreach \x/\y in {0/2} {
        \fill (\x,\y) circle (3pt);
    }
    \end{scope}
\end{tikzpicture}
\caption{An example borrowed from \cite{cardona2022universal} where for a sufficiently small $\epsilon$ depending on $p$, which is assumed to be finite, $\hat d^p_I(M_1,M_3)>\hat d^p_I(M_1,M_2)+\hat d^p_I(M_2,M_3)$.}
\label{fig.no_triangle_ineq}
\end{figure}
This observation prompts the following definition. 

\begin{definition}[\cite{cardona2022universal}]
\label{def.ppres.merge}
For $p\in [1,\infty]$, the $p$-presentation distance between two merge trees $M$ and $N$ is
\[
d_I^p(M,N) = \inf\sum_{i=1}^{\ell-1} \hat{d}_I^p(M_i,M_{i+1}),
\]
where the infimum is taken over all sequences $M=M_1, M_2,\dots,M_\ell=N$ of merge trees. 
\end{definition}

For $p=\infty$, $d_I^p$ is the well-known \emph{interleaving distance} for merge trees \cite[Theorem~1.4]{cardona2022universal}.
As illustrated in \cref{ex_triangle_fail}, using intermediate modules is sometimes needed to find $d_I^p(M,N)$:
$\hat{d}_I^p(M_1,M_3) > \hat{d}_I^p(M_1,M_2) + \hat{d}_I^p(M_2,M_3)$, so the infimum in \cref{def.ppres.merge} is not realized by $\hat{d}_I^p(M_1,M_3)$.
This becomes relevant when we need to lower bound $d^p_I(M,N)$ for two merge trees $M$ and $N$ in \cref{sec_merge_l_bound}; much of the difficulty in obtaining the lower bound comes from the fact that we need to consider all sequences $M=M_1, M_2,\dots,M_\ell=N$ of merge trees, instead of just looking at compatible presentations of $M$ and $N$.

\begin{example}[Continuing \cref{ex.comppres}]
    For any finite $p$, 
    \begin{align*}
    d^p(\bar{P}, \bar{Q})^p &= (c_4-c_3)^p + (c_5-c_4)^p + 2((c_2-c_1)/2)^p + (c_3-c_2)^p + (c_4-c_3)^p\\
    &= 4(c_2-c_1)^p + 2^{1-p}(c_2-c_1)^p = (4+2^{1-p})(c_2-c_1)^p.
    \end{align*}
The $p$-Wasserstein distance is obtained (as for the case $p=\infty)$ by matching the two long bars, and leaving the rest unmatched. A small computation, using that $(c_4-c_2) = (c_5-c_3) = 2(c_2-c_1)$, gives that 
\[d_W^p(B(M_f), B(N))^p = (4+2^{1-p})(c_2-c_1)^p.\]
By \cite[Theorem 1.2]{cardona2022universal},
\[d_W^p(B(M_f), B(N)) \leq d_I^p(M_f, N) \leq d^p(\bar{P}, \bar{Q}),\]
and therefore $d_I^p(M_f, N)$ is realized by the presentation matrices $\bar{P}$ and $\bar{Q}$ without the need for a sequence of merge trees as in \cref{def.ppres.merge}. 
\end{example}

\subsection{Merge trees and the partition problem}
\label{sec:3part}
Let $S=\{s_1, \ldots, s_m\}$ be a multiset  of positive integers, $n=s_1 + \ldots + s_m\geq 2$, and let $1\leq k\leq m$ be an integer. The computational problem \emph{Balanced partition} (BAL-PART) asks if there exists a partition $S = \cup_{i=1}^k S_i$ such that, $S_i\cap S_j = \emptyset$ for $i\neq j$, and 
\[\AS \coloneqq \frac{n}{k} = \sum_{s\in S_i} s, \qquad \text{for all $i$}.\]
It may be assumed that $n$ is a multiple of $k$. This problem is known to be strongly NP-complete by a reduction from 3-Partition \cite{agarwal2018computing}. 

Throughout this section, we will consider a fixed instance $\Ss \coloneqq (S,k)$ of BAL-PART, as defined above. Associated to $\Ss$ we construct the following merge trees $M$ and $N$. In the following, $C=4n^{\frac 1p}$.

\subsubsection{The merge tree $M$.} Let $M$ be the merge tree with $n$ generators
\[g_{1,1},\dots,g_{1,s_1},g_{2,1},\dots,g_{2,s_2},\dots,g_{m,1},\dots,g_{m,s_m}\]
at $-C$, together with relations $r_{i,j}: g_{i,j} = g_{i,j+1}$ at grade $0$ for all $1\leq i\leq m$ and $1\leq j\leq s_i-1$. In addition, there are relations $r_i: g_{i,s_i} = g_{i+1,1}$ at grade $2$ for $1\leq i\leq m-1$. We shall denote by $a_i$ the number of relations at grade $i$ in the presentation of $M$. The merge tree $M$ is illustrated in \cref{fig:mergeM}.

\subsubsection{The merge tree $N$.} Let $N$ be the merge tree with $n$ generators
\[h_{1,1},\dots,h_{1,\AS},h_{2,1},\dots,h_{2,\AS},\dots,h_{k,1},\dots,h_{k, \AS}\]
at $-C$, together with relations $r'_{i,j}: h_{i,j} = h_{i,j+1}$ at grade $1$ for all $1\leq i\leq k$ and $1\leq j\leq \AS-1$. In addition, there are relations $r'_i: h_{i,\AS} = h_{i+1,1}$ at grade $3$ for $1\leq i\leq k-1$. This is illustrated in \cref{fig:mergeN}. Let $b_1$ be the number of relations at $1$, and $b_3=n-1-b_1$ the number of relations at $3$. 

\subsubsection{Properties of $M$ and $N$}
Observe that $a_0\leq b_1$. Indeed, $a_0 = s_1+ \ldots + s_m - m = n-m$, $b_1 = (\hat{S}-1)\cdot k = n-k$, and $1\leq k\leq m$ by assumption. Since the total number of relations are equal in the two presentations, we have that $a_0+a_2 = b_1+b_3$. It thus follows that $a_2\geq b_3$. 

These observations are used in the following lemma.

\begin{lemma}
Let $p\in [1,\infty)$. Then, $d_W^p(B(M), B(N)) = (n-1)^\frac{1}{p}.$
\label{lem:dwp}
\end{lemma}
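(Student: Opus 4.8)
The plan is to write down both barcodes explicitly and then solve the resulting optimal matching problem directly. First I would record the barcodes. All $n$ generators of $M$ are born at $-C$, and the cardinality of $M_s$ drops from $n$ to $m$ as $s$ crosses $0$ and from $m$ to $1$ as $s$ crosses $2$, so the finite bars of $B(M)$ are $a_0=n-m$ copies of $[-C,0)$ and $a_2=m-1$ copies of $[-C,2)$, together with the single infinite bar $[-C,\infty)$; likewise the finite bars of $B(N)$ are $b_1=n-k$ copies of $[-C,1)$ and $b_3=k-1$ copies of $[-C,3)$, plus one copy of $[-C,\infty)$. The two infinite bars can be matched only to each other (any other assignment has infinite $p$-cost) and contribute nothing, so the problem reduces to matching the $n-1$ finite bars of $B(M)$ to the $n-1$ finite bars of $B(N)$. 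Identifying $[a,b)$ with $(a,b)\in\R^2$, all of these points lie on the vertical line $x=-C$, so $\|(-C,y)-(-C,y')\|_p=|y-y'|$, while the distance from $(-C,y)$ to the diagonal is $2^{1/p}(C+y)/2\ge C/2$.

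For the upper bound I would exhibit a single matching. Using $a_0\le b_1$ (equivalently $k\le m$, the inequality noted just before the lemma): match the $n-m$ bars $[-C,0)$ to $n-m$ of the bars $[-C,1)$; match the $k-1$ bars $[-C,3)$ to $k-1$ of the bars $[-C,2)$; and match the remaining $m-k$ bars $[-C,2)$ to the remaining $m-k$ bars $[-C,1)$. The multiplicities balance, nothing is sent to the diagonal, and every matched pair has $\ell^p$-cost $1$, so this matching has $p$-cost $(n-1)^{1/p}$, giving $d_W^p(B(M),B(N))\le(n-1)^{1/p}$.

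For the lower bound I would first use the size of $C$: since $C=4n^{1/p}$, sending even one finite bar to the diagonal already costs at least $2^{1/p}(C/2)\ge 2n^{1/p}>(n-1)^{1/p}$ in $p$-cost, so, having already produced a matching of $p$-cost $(n-1)^{1/p}$, any optimal matching must be a genuine bijection between the $n-1$ finite bars of $B(M)$ and those of $B(N)$. For such a bijection every matched pair $(-C,y)\leftrightarrow(-C,y')$ has $y\in\{0,2\}$ and $y'\in\{1,3\}$, so its cost $|y-y'|$ equals $3$ when $\{y,y'\}=\{0,3\}$ and equals $1$ in every other case; hence a bijection using $t$ pairs of the first kind has $p$-cost $\bigl((n-1-t)+t\cdot 3^p\bigr)^{1/p}\ge(n-1)^{1/p}$, with equality precisely when $t=0$. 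Combining the two bounds yields $d_W^p(B(M),B(N))=(n-1)^{1/p}$.

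I do not expect a genuine obstacle; the argument is essentially bookkeeping. The two points needing a moment's care are verifying that the multiplicities in the upper-bound matching actually fit together — which is exactly where $a_0\le b_1$ enters — and confirming that $C$ is large enough to forbid any matching that touches the diagonal, which the crude estimate above does with room to spare. What the lemma really records is that the $p$-Wasserstein distance of the barcodes is the same for every instance of BAL-PART, so distinguishing `yes' from `no' instances must exploit the merge-tree (presentation) structure rather than the barcodes alone.
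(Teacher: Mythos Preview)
Your proof is correct and follows essentially the same approach as the paper: you construct the same explicit matching for the upper bound using $a_0\le b_1$, and for the lower bound you argue that $C$ is large enough to rule out unmatched bars and that any bijection between the finite bars has per-pair cost at least $1$. Your case analysis via the count $t$ of $\{0,3\}$-pairs is a slight elaboration of the paper's one-line observation that every finite bar of $B(M)$ is at distance at least $1$ from every bar of $B(N)$, but the substance is identical.
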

\begin{proof}
Both barcodes consist of $n$ intervals: $B(M) = \{[-C, \infty)_1, [-C, 0)_{a_0}, [-C, 2)_{a_2}\}$, and $B(N) = \{[-C, \infty)_1, [-C, 1)_{b_1}, [-C, 3)_{b_3}\}$. Here the subscripts denote the multiplicities of the intervals. From the constraints $a_0\leq b_1$ and $a_2\geq b_3$, we can construct a bijection $\sigma\colon B(M)\to B(N)$ by letting $\sigma([-C, \infty)) = [-C,\infty)$, $\sigma([-C,0)) = [-C,1)$, and $\sigma^{-1}([-C, 3)) = [-C,2)$. The remaining intervals in $B(M)$ are of the form $[-C,2)$ and are paired with the remaining intervals $[-C, 1)\in B(N)$. We conclude that 
\[ d_W^p(B(M), B(N)) \leq \text{$p$-cost}(\sigma) =  \left(\sum_{I\in B(M)} ||I-\sigma(I)||_p^p\right)^{1/p} = (n-1)^{1/p}.\]To see that this is optimal, we observe that any finite interval $I$ in $B(M)$ must be matched with an interval $J$ in $B(N)$ for which $d(I,J)\geq 1$; leaving a finite interval unmatched would result in a cost of at least $((C/2)^p + (C/2)^p)^{1/p} \geq 2^{(p+1)/p}n^{1/p}  \geq 2^{1/p}n^{1/p}$. Since there are $n-1$ finite intervals in $B(M)$, it follows that $\sigma$ is optimal, and 
\[ d_W^p(B(M), B(N)) = (n-1)^{1/p}.\qedhere\]
\begin{center}
\begin{figure}
\begin{tikzpicture}
\node [draw, shape = circle, fill = black, minimum size = 0.1cm, inner sep=0pt] (Mt) at (0,1){};
\node [draw, shape = circle, fill = black, minimum size = 0.1cm, inner sep=0pt] (Ml) at (-2,-1){};
\node [draw, shape = circle, fill = black, minimum size = 0.1cm, inner sep=0pt] (Mc) at (0,-1){};
\node [draw, shape = circle, fill = black, minimum size = 0.1cm, inner sep=0pt] (Mr) at (2,-1){};
\node [draw, shape = circle, fill = black, minimum size = 0.1cm, inner sep=0pt] (Mr-r) at (4,-1){};
\node (Mdots) at (3,-1) {$\cdots$};
\node (Mdots2) at (3,-4) {$\cdots$};
\draw (Mt) -- (Ml);
\draw (Mt) -- (Mc);
\draw (Mt) -- (Mr);
\draw (Mt) -- (Mr-r);
\draw[] (Mt) -- (Mdots);

\node[left] at (-2.5,-0.5) {$r_{1,1}$};
\node[left] at (-2.5,-1) {$r_{1,2}$};
\node[left] at (-2.5,-1.5) {$\vdots$};
\node[left] at (-2.5,-2) {$r_{1,s_1-1}$};

\draw[dashed] (-2.5,-0.5) -- (Ml);
\draw[dashed] (-2.5,-1) -- (Ml);
\draw[dashed] (-2.5,-1.5) -- (Ml);
\draw[dashed] (-2.5,-2) -- (Ml);

\node[left] at (-0.5,-0.5) {$r_{2,1}$};
\node[left] at (-0.5,-1) {$r_{2,2}$};
\node[left] at (-0.5,-1.5) {$\vdots$};
\node[left] at (-0.5,-2) {$r_{2,s_2-1}$};

\draw[dashed] (-0.5,-0.5) -- (Mc);
\draw[dashed] (-0.5,-1) -- (Mc);
\draw[dashed] (-0.5,-1.5) -- (Mc);
\draw[dashed] (-0.5,-2) -- (Mc);

\node[left] at (1.5,-0.5) {$r_{3,1}$};
\node[left] at (1.5,-1) {$r_{3,2}$};
\node[left] at (1.5,-1.5) {$\vdots$};
\node[left] at (1.5,-2) {$r_{3,s_3-1}$};

\draw[dashed] (1.5,-0.6) -- (Mr);
\draw[dashed] (1.5,-1) -- (Mr);
\draw[dashed] (1.5,-1.5) -- (Mr);
\draw[dashed] (1.5,-2) -- (Mr);

\node[right] at (4.5,-0.5) {$r_{m,1}$};
\node[right] at (4.5,-1) {$r_{m,2}$};
\node[right] at (4.5,-1.5) {$\vdots$};
\node[right] at (4.5,-2) {$r_{m,s_m-1}$};

\draw[dashed] (4.5,-0.5) -- (Mr-r);
\draw[dashed] (4.5,-1) -- (Mr-r);
\draw[dashed] (4.5,-1.5) -- (Mr-r);
\draw[dashed] (4.5,-2) -- (Mr-r);

\node[left] at (-1,2) {$r_1$};
\node[left] at (-1,1.5) {$r_2$};
\node[left] at (-1,1) {$\vdots$};
\node[left] at (-1,0.5) {$r_{m-1}$};

\draw[dashed] (-1,2) -- (Mt);
\draw[dashed] (-1,1.5) -- (Mt);
\draw[dashed] (-1,1) -- (Mt);
\draw[dashed] (-1,0.5) -- (Mt);

\node [draw, shape = circle, fill = black, minimum size = 0.1cm, inner sep=0pt, label=below:$g_{1,1}$] (Mll) at (-2.5,-4){};
\node (mlc) at (-2,-4) {$\cdots$};
\node [draw, shape = circle, fill = black, minimum size = 0.1cm, inner sep=0pt, label=below:$g_{1,s_1}$] (Mlr) at (-1.5,-4){};

\node [draw, shape = circle, fill = black, minimum size = 0.1cm, inner sep=0pt, label=below:$g_{2,1}$] (Mcl) at (-0.5,-4){};
\node (mcc) at (-0,-4) {$\cdots$};
\node [draw, shape = circle, fill = black, minimum size = 0.1cm, inner sep=0pt, label=below:$g_{2,s_2}$] (Mcr) at (0.5,-4){};

\node [draw, shape = circle, fill = black, minimum size = 0.1cm, inner sep=0pt, label=below:$g_{3,1}$] (Mrl) at (1.5,-4){};
\node (mrc) at (2,-4) {$\cdots$};
\node [draw, shape = circle, fill = black, minimum size = 0.1cm, inner sep=0pt, label=below:$g_{3,s_3}$] (Mrr) at (2.5,-4){};

\node [draw, shape = circle, fill = black, minimum size = 0.1cm, inner sep=0pt, label=below:$g_{m,1}$] (Mr-rl) at (3.5,-4){};
\node (mrrc) at (4,-4) {$\cdots$};
\node [draw, shape = circle, fill = black, minimum size = 0.1cm, inner sep=0pt, label=below:$g_{m,s_m}$] (Mr-rr) at (4.5,-4){};

\draw (Ml) -- (Mll);
\draw (Ml) -- (Mlr);

\draw (Mc) -- (Mcl);
\draw (Mc) -- (Mcr);

\draw (Mr) -- (Mrl);
\draw (Mr) -- (Mrr);

\draw (Mr-r) -- (Mr-rr);
\draw (Mr-r) -- (Mr-rl);

\draw[] (Mdots) -- (Mdots2);

\draw[-] (-4,-4.5) -- (-4,2.5);

\draw[-] (-4.1,-4) -- (-3.9, -4);
\draw[-] (-4.1,-1) -- (-3.9, -1);
\draw[-] (-4.1,1) -- (-3.9, 1);
\draw[-] (-4.1,2) -- (-3.9, 2);
\draw[-] (-4.1,0) -- (-3.9, 0);
\node[left] at (-4,-4) {$-C$};
\node[left] at (-4,-1) {$0$};
\node[left] at (-4,0) {$1$};
\node[left] at (-4,1) {$2$};
\node[left] at (-4,2) {$3$};
\end{tikzpicture}
\caption{The merge tree $M$ associated to an instance $(S,k)$.}
\label{fig:mergeM}
\end{figure}
\end{center}

\begin{center}
\begin{figure}
\begin{tikzpicture}
\node [draw, shape = circle, fill = black, minimum size = 0.1cm, inner sep=0pt] (Mt) at (0,2){};
\node [draw, shape = circle, fill = black, minimum size = 0.1cm, inner sep=0pt] (Ml) at (-2,0){};
\node [draw, shape = circle, fill = black, minimum size = 0.1cm, inner sep=0pt] (Mc) at (0,0){};
\node [draw, shape = circle, fill = black, minimum size = 0.1cm, inner sep=0pt] (Mr) at (2,0){};
\node [draw, shape = circle, fill = black, minimum size = 0.1cm, inner sep=0pt] (Mr-r) at (4,0){};
\node (Mdots) at (3,0) {$\cdots$};
\node (Mdots2) at (3,-4) {$\cdots$};
\draw (Mt) -- (Ml);
\draw (Mt) -- (Mc);
\draw (Mt) -- (Mr);
\draw (Mt) -- (Mr-r);
\draw[] (Mt) -- (Mdots);

\node[left] at (-2.5,-0.5) {$r'_{1,1}$};
\node[left] at (-2.5,-1) {$r'_{1,2}$};
\node[left] at (-2.5,-1.5) {$\vdots$};
\node[left] at (-2.5,-2) {$r'_{1,\AS-1}$};

\draw[dashed] (-2.5,-0.5) -- (Ml);
\draw[dashed] (-2.5,-1) -- (Ml);
\draw[dashed] (-2.5,-1.5) -- (Ml);
\draw[dashed] (-2.5,-2) -- (Ml);

\node[left] at (-0.5,-0.5) {$r'_{2,1}$};
\node[left] at (-0.5,-1) {$r'_{2,2}$};
\node[left] at (-0.5,-1.5) {$\vdots$};
\node[left] at (-0.5,-2) {$r'_{2,\AS-1}$};

\draw[dashed] (-0.5,-0.5) -- (Mc);
\draw[dashed] (-0.5,-1) -- (Mc);
\draw[dashed] (-0.5,-1.5) -- (Mc);
\draw[dashed] (-0.5,-2) -- (Mc);

\node[left] at (1.5,-0.5) {$r'_{3,1}$};
\node[left] at (1.5,-1) {$r'_{3,2}$};
\node[left] at (1.5,-1.5) {$\vdots$};
\node[left] at (1.5,-2) {$r'_{3,\AS-1}$};

\draw[dashed] (1.5,-0.6) -- (Mr);
\draw[dashed] (1.5,-1) -- (Mr);
\draw[dashed] (1.5,-1.5) -- (Mr);
\draw[dashed] (1.5,-2) -- (Mr);

\node[right] at (4.5,-0.5) {$r'_{k,1}$};
\node[right] at (4.5,-1) {$r'_{k,2}$};
\node[right] at (4.5,-1.5) {$\vdots$};
\node[right] at (4.5,-2) {$r'_{k,\AS-1}$};

\draw[dashed] (4.5,-0.5) -- (Mr-r);
\draw[dashed] (4.5,-1) -- (Mr-r);
\draw[dashed] (4.5,-1.5) -- (Mr-r);
\draw[dashed] (4.5,-2) -- (Mr-r);

\node[left] at (-1,3) {$r'_1$};
\node[left] at (-1,2.5) {$r'_2$};
\node[left] at (-1,2) {$\vdots$};
\node[left] at (-1,1.5) {$r'_{k-1}$};

\draw[dashed] (-1,3) -- (Mt);
\draw[dashed] (-1,2.5) -- (Mt);
\draw[dashed] (-1,1.5) -- (Mt);
\draw[dashed] (-1,1) -- (Mt);

\node [draw, shape = circle, fill = black, minimum size = 0.1cm, inner sep=0pt, label=below:$h_{1,1}$] (Mll) at (-2.5,-4){};
\node (mlc) at (-2,-4) {$\cdots$};
\node [draw, shape = circle, fill = black, minimum size = 0.1cm, inner sep=0pt, label=below:$h_{1,\AS}$] (Mlr) at (-1.5,-4){};

\node [draw, shape = circle, fill = black, minimum size = 0.1cm, inner sep=0pt, label=below:$h_{2,1}$] (Mcl) at (-0.5,-4){};
\node (mcc) at (-0,-4) {$\cdots$};
\node [draw, shape = circle, fill = black, minimum size = 0.1cm, inner sep=0pt, label=below:$h_{2,\AS}$] (Mcr) at (0.5,-4){};

\node [draw, shape = circle, fill = black, minimum size = 0.1cm, inner sep=0pt, label=below:$h_{3,1}$] (Mrl) at (1.5,-4){};
\node (mrc) at (2,-4) {$\cdots$};
\node [draw, shape = circle, fill = black, minimum size = 0.1cm, inner sep=0pt, label=below:$h_{3,\AS}$] (Mrr) at (2.5,-4){};

\node [draw, shape = circle, fill = black, minimum size = 0.1cm, inner sep=0pt, label=below:$h_{k,1}$] (Mr-rl) at (3.5,-4){};
\node (mrrc) at (4.25,-4) {$\cdots$};
\node [draw, shape = circle, fill = black, minimum size = 0.1cm, inner sep=0pt, label=below:$h_{k,\AS}$] (Mr-rr) at (5,-4){};

\draw (Ml) -- (Mll);
\draw (Ml) -- (Mlr);

\draw (Mc) -- (Mcl);
\draw (Mc) -- (Mcr);

\draw (Mr) -- (Mrl);
\draw (Mr) -- (Mrr);

\draw (Mr-r) -- (Mr-rr);
\draw (Mr-r) -- (Mr-rl);

\draw[] (Mdots) -- (Mdots2);

\draw[-] (-4,-4.5) -- (-4,2.5);

\draw[-] (-4.1,-4) -- (-3.9, -4);
\draw[-] (-4.1,-1) -- (-3.9, -1);
\draw[-] (-4.1,1) -- (-3.9, 1);
\draw[-] (-4.1,2) -- (-3.9, 2);
\draw[-] (-4.1,0) -- (-3.9, 0);
\node[left] at (-4,-4) {$-C$};
\node[left] at (-4,-1) {$0$};
\node[left] at (-4,0) {$1$};
\node[left] at (-4,1) {$2$};
\node[left] at (-4,2) {$3$};
\end{tikzpicture}
\caption{The merge tree $N$ associated to an instance $(S,k)$.}
\label{fig:mergeN}
\end{figure}
\end{center}

\end{proof}

\subsection{Proof of NP-hardness}
The goal of this section is to show the following. 
\begin{theorem}
Let $M$ and $N$ be merge trees associated to an instance $\Ss$ of the BAL-PART problem, and let $p\in [1,\infty)$.
Then $\Ss$ has a solution if and only if
\[d_I^p(M,N) \leq (n-1)^{1/p}.\]
\end{theorem}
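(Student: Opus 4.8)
The plan is to prove the two implications separately. The forward implication (``$\Ss$ has a solution $\Rightarrow d_I^p(M,N)\le(n-1)^{1/p}$'') is a direct construction of compatible presentations; the reverse implication combines the Wasserstein lower bound of \cref{lem:dwp} with a rigidity analysis of near‑optimal sequences.

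For the forward direction, suppose $S=S_1\sqcup\dots\sqcup S_k$ is a balanced partition and write $m_\ell=|S_\ell|$, so $\sum_\ell m_\ell=m$ and $\sum_{s\in S_\ell}s=\widehat S$ for every $\ell$. The first point is that $M$ is presented not only by the triplet of \cref{fig:mergeM} but by any triplet with the same $n$ generators at $-C$, the same within‑block path relations at grade $0$, and \emph{any} spanning tree on the $m$ blocks realized by relations at grade $2$ (both triplets yield the same functor, as one checks by counting the components of $F$ at thresholds $<0$, in $[0,2)$, and $\ge 2$). Choose such a presentation $P$ of $M$ whose grade‑$2$ relations form a spanning tree on the blocks inside each part $S_\ell$ together with a spanning tree on the $k$ parts; there are $\sum_\ell(m_\ell-1)=m-k$ relations of the first kind and $k-1$ of the second. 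Let $Q$ be the presentation with the \emph{same} generators and relations (so $P$ and $Q$ are compatible via $\sigma=\mathrm{id}$), graded as follows: all $n$ generators at $-C$; the $n-m$ within‑block relations and the $m-k$ within‑part relations at grade $1$; and the $k-1$ between‑part relations at grade $3$. Counting components of $F(Q)$ at thresholds $<1$, in $[1,3)$, and $\ge3$ yields $n$, then $k$ classes each of size $\widehat S$, then $1$ — exactly the profile of $N$ — so $F(Q)\cong N$. Finally
\[
d^p(P,Q)=\Big((n-m)\cdot1+(m-k)\cdot1+(k-1)\cdot1\Big)^{1/p}=(n-1)^{1/p},
\]
hence $d_I^p(M,N)\le \hat d_I^p(M,N)\le(n-1)^{1/p}$.

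For the reverse direction, the starting point is the lower bound $d_I^p(M,N)\ge d_W^p(B(M),B(N))=(n-1)^{1/p}$: the $p$‑presentation distance dominates the $p$‑Wasserstein distance of barcodes. This is the standard stability fact — for compatible presentations of tree type the elder rule puts the $n-1$ non‑oldest generators in bijection with the $n-1$ relations, matching barcodes generator‑by‑generator has $p$‑cost at most $d^p(P,Q)$ since births and deaths change by at most the corresponding changes in generator and relation grades, and the chain case follows from the triangle inequality for $d_W^p$ (\cref{lem:dwp}) — so the hypothesis $d_I^p(M,N)\le(n-1)^{1/p}$ forces $d_I^p(M,N)=(n-1)^{1/p}$.

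It remains to extract a balanced partition from this equality, and this is where the work lies. Equality forces every step of a near‑optimal sequence $M=M_1,\dots,M_\ell=N$ (and every compatible presentation along it) to be extremal for the barcode bound: the total $\ell^p$‑budget is exactly $n-1$, the generators at grade $-C$ (a distance $\gg1$ from every relevant barcode point, using $C=4n^{1/p}$) cannot move, and each of the $n-1$ finite intervals of $B(M)$ is transported a distance of exactly $1$ into $B(N)$ — so the $n-m$ grade‑$0$ relations go to grade $1$ and the $m-1$ grade‑$2$ relations split between grades $1$ and $3$. The grade‑$2$ relations sent to $1$ are exactly ``within‑part'' merges; they partition the $m$ blocks of $M$ into groups, and because the resulting tree is $N$, whose components over $[1,3)$ all have size $\widehat S$, each group of blocks must have total size $\widehat S$ — a balanced partition. \textbf{The main obstacle} is making this rigidity argument work for arbitrary sequences of merge trees and arbitrary compatible presentations along them rather than for a single re‑grading: one must either show that intermediate trees cannot beat the bound (reducing to one re‑grading step) or carry the barcode \emph{together with} a coarser block‑structure invariant through the whole chain and argue that the budget $n-1$ is consumed exactly, and only, by a partition‑respecting reorganization. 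I expect this bookkeeping — ruling out sequences that stay within budget while evading the partition constraint — to be the technical heart of the proof.
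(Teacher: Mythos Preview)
Your forward direction is correct and is a mild variant of the paper's construction (you change the spanning tree on the blocks to align with the partition; the paper keeps the path and instead reorders indices so the partition is contiguous).  Either way one obtains compatible presentations with $d^p(P,Q)=(n-1)^{1/p}$.

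The reverse direction, however, is not a proof but a plan, and the plan is aimed at a harder target than necessary.  You try to extract a partition from the \emph{equality} $d_I^p(M,N)=(n-1)^{1/p}$ via a rigidity argument, and you correctly flag that controlling arbitrary sequences of intermediate merge trees and presentations is the crux.  The paper sidesteps this rigidity analysis entirely: it proves the contrapositive with a strict gap, namely that if $\Ss$ has no solution then $d_I^p(M,N)\ge n^{1/p}$.  The mechanism is not ``equality forces structure'' but rather ``absence of a solution produces one extra unit of unavoidable cost''.  Concretely: along any near-optimal chain $M=M_1,\dots,M_\ell=N$ one first refines (\cref{lem_refinement}) so that no grade crosses a point of $\{0,1,2,3\}$ in a single step; one then tracks the $n-1$ right endpoints $v_j^i$ of the finite bars through matchings (\cref{cor_matching}, \cref{lem:matching}), each of which must travel a net distance $\ge 1$; and, crucially, one tracks an \emph{$n$th} coordinate $m_i$, the merge level of a fixed pair $g,h$ of generators witnessing (via \cref{lem:bijection-3part} and \cref{lem:sigma-comp}) the failure of the partition, which must traverse from $0$ to $3$.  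The point of \cref{lem:single-step-merge} is that $d^p(P_i',P_{i+1})\ge \|(\delta v_1^i,\dots,\delta v_{n-1}^i,\delta m_i)\|_p$ with no double counting: the increment $\delta m_i$ is only nonzero on the stretch $[1,2]$, and on that stretch the corresponding barcode endpoint is moving \emph{away} from $1$, so its $\delta v_j^i$ vanishes.  Summing and applying the triangle inequality gives $n^{1/p}$.

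Your outline does not contain this extra tracked coordinate or the ``disjoint intervals'' trick that makes it independent of the barcode contributions, and without it the budget argument tops out at $(n-1)^{1/p}$, which is exactly the bound you already have from Wasserstein and cannot by itself distinguish solvable from unsolvable instances.
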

\cref{thm:main-merge} is then an immediate consequence of this 
theorem and the NP-hardness of BAL-PART. Our proof proceeds in two steps.
First, we assume that $\Ss$ has a solution, and use this to conclude that $d_I^p(M,N) = (n-1)^{\frac 1p}$.
Then, we show that if $\Ss$ does not have a solution, then $d_I^p(M,N) \geq n^{\frac 1p}$.

\subsubsection{Step 1: If $\Ss$ has a solution, then $d_I^p(M,N) = (n-1)^{\frac 1p}$.}
We will show that if the BAL-PART instance $\Ss$ has a solution, then
\[d_I^p(M,N) = d_W^p(B(M), B(N))= (n-1)^{\frac 1p}.\]
By \cite[Theorem 1.2]{cardona2022universal}, $d_I^p(M,N) \geq d_W^p(B(M), B(N))$, so it suffices to show that
\[d_I^p(M,N) \leq (n-1)^{\frac 1p} \stackrel{\cref{lem:dwp}}{=} d_W^p(B(M), B(N)).\]

By assumption, there exists a partitioning of $S$ solving $\Ss = (S,k)$. By rearranging indices, we can assume that the sets of the partition are 
\[\{s_1,\ldots ,s_{i_1}\}, \{s_{i_1+1}, \ldots, s_{i_2}\}, \ldots, \{s_{i_{k-1}+1}, \ldots, s_m\}.\]
As illustrated in \cref{fig:mergeM} and \cref{fig:N3PART}, the presentations $P=(G_P, R_P, \gr_P)$ and $Q = (G_Q, R_Q, \gr_Q)$ in the definitions of $M$ and $N$ above are compatible.
This can easily be seen by renaming the generators as
\begin{align*}
(g_1,\dots,g_n) &\coloneqq (g_{1,1},\dots,g_{1,s_1},g_{2,1},\dots,g_{2,s_2},\dots,g_{m,1},\dots,g_{m,s_m}),\\
(h_1,\dots,h_n) &\coloneqq (h_{1,1},\dots,h_{1,\AS},h_{2,1},\dots,h_{2,\AS},\dots,h_{k,1},\dots,h_{k,\AS}).
\end{align*}
Then the relations in $R_P$ are $g_1=g_2, g_2=g_3, \dots, g_{n-1}=g_n$, and the relations of $R_Q$ are $h_1=h_2, h_2=h_3, \dots, h_{n-1}=h_n$.
It remains to show that
\[
|\gr_P(g_i=g_{i+1})-\gr_Q(h_i=h_{i+1})|\leq 1
\]
for all $i$. Since the grades of the relations in $R_P$ are either $0$ or $2$ and the grades of the relations in $R_Q$ are either $1$ or $3$, it is enough to show that if $\gr_P(g_i=g_{i+1}) = 0$, then $\gr_Q(h_i=h_{i+1}) = 1$. But if $\gr_P(g_i=g_{i+1}) = 0$, then $g_i=g_{i+1}$ is of the form $g_{k,\ell} = g_{k,\ell+1}$, and it follows from the partitioning provided by the solution of $\Ss$ that $h_i=h_{i+1}$ is of the form $h_{k',\ell'} = h_{k,\ell'+1}$, so $\gr_Q(h_i=h_{i+1}) = 1$.
\begin{center}
\begin{figure}
\begin{tikzpicture}
\node [draw, shape = circle, fill = black, minimum size = 0.1cm, inner sep=0pt] (Mt) at (0,2){};
\node [draw, shape = circle, fill = black, minimum size = 0.1cm, inner sep=0pt] (Mc) at (0,0){};
\node [draw, shape = circle, fill = black, minimum size = 0.1cm, inner sep=0pt] (Mr-r) at (4,0){};
\node (Mdots) at (2,0) {$\cdots$};
\node (Mdots2) at (3,-4) {$\cdots$};
\draw (Mt) -- (Mc);
\draw (Mt) -- (Mr-r);
\draw[] (Mt) -- (Mdots);

\node[left] at (-1.5,-0.5) {$r'_{1,1}$};
\node[left] at (-1.5,-1) {$r'_{1,2}$};
\node[left] at (-1.5,-1.5) {$\vdots$};
\node[left] at (-1.5,-2) {$r'_{1,t-1}$};

\draw[dashed] (-1.5,-0.5) -- (Mc);
\draw[dashed] (-1.5,-1) -- (Mc);
\draw[dashed] (-1.5,-1.5) -- (Mc);
\draw[dashed] (-1.5,-2) -- (Mc);

\node[right] at (4.5,-0.5) {$r'_{k,1}$};
\node[right] at (4.5,-1) {$r'_{k,2}$};
\node[right] at (4.5,-1.5) {$\vdots$};
\node[right] at (4.5,-2) {$r'_{k,\AS-1}$};

\draw[dashed] (4.5,-0.5) -- (Mr-r);
\draw[dashed] (4.5,-1) -- (Mr-r);
\draw[dashed] (4.5,-1.5) -- (Mr-r);
\draw[dashed] (4.5,-2) -- (Mr-r);

\node[left] at (-1,2) {$r'_1$};
\node[left] at (-1,1.5) {$r'_2$};
\node[left] at (-1,1) {$\vdots$};
\node[left] at (-1,0.5) {$r'_{k-1}$};

\draw[dashed] (-1,2) -- (Mt);
\draw[dashed] (-1,1.5) -- (Mt);
\draw[dashed] (-1,1) -- (Mt);
\draw[dashed] (-1,0.5) -- (Mt);

\node [draw, shape = circle, fill = black, minimum size = 0.1cm, inner sep=0pt, label=below:$h_{1,1}$] (Mll) at (-2.5,-4){};
\node (mlc) at (-2,-4) {$\cdots$};
\node [draw, shape = circle, fill = black, minimum size = 0.1cm, inner sep=0pt, label=below:$h_{1,s_1}$] (Mlr) at (-1.5,-4){};

\node [draw, shape = circle, fill = black, minimum size = 0.1cm, inner sep=0pt, label=below:$h_{1,s_1+1}$] (Mcl) at (-0.5,-4){};
\node (mcc) at (0.25,-4) {$\cdots\cdots$};
\node [draw, shape = circle, fill = black, minimum size = 0.1cm, inner sep=0pt, label=below:$h_{1,s_1+s_2}$] (Mcr) at (1,-4){};

\node [draw, shape = circle, fill = black, minimum size = 0.1cm, inner sep=0pt, label=below:$ $] (Mrl) at (1.5,-4){};
\node (mrc) at (2,-4) {$\cdots$};
\node [draw, shape = circle, fill = black, minimum size = 0.1cm, inner sep=0pt, label=below:$h_{1,\AS}$] (Mrr) at (2.5,-4){};

\node [draw, shape = circle, fill = black, minimum size = 0.1cm, inner sep=0pt, label=below:$h_{k,1}$] (Mr-rl) at (3.5,-4){};
\node (mrrc) at (4.25,-4) {$\cdots$};
\node [draw, shape = circle, fill = black, minimum size = 0.1cm, inner sep=0pt, label=below:$h_{k,\AS}$] (Mr-rr) at (5,-4){};

\draw (Mc) -- (Mll);
\draw (Mc) -- (Mlr);

\draw (Mc) -- (Mcl);
\draw (Mc) -- (Mcr);

\draw (Mc) -- (Mrl);
\draw (Mc) -- (Mrr);

\draw (Mr-r) -- (Mr-rr);
\draw (Mr-r) -- (Mr-rl);

\draw[] (Mdots) -- (Mdots2);

\draw[-] (-4,-4.5) -- (-4,2.5);

\draw[-] (-4.1,-4) -- (-3.9, -4);
\draw[-] (-4.1,-1) -- (-3.9, -1);
\draw[-] (-4.1,1) -- (-3.9, 1);
\draw[-] (-4.1,2) -- (-3.9, 2);
\draw[-] (-4.1,0) -- (-3.9, 0);
\node[left] at (-4,-4) {$-C$};
\node[left] at (-4,-1) {$0$};
\node[left] at (-4,0) {$1$};
\node[left] at (-4,1) {$2$};
\node[left] at (-4,2) {$3$};
\end{tikzpicture}
\caption{The merge tree $N$, assuming a solution to BAL-PART.}
\label{fig:N3PART}
\end{figure}
\end{center}
\subsubsection{Step 2: If $\Ss$ has no solution, then $d_I^p(M,N) \geq n^{\frac 1p}$.}
\label{sec_merge_l_bound}

We will now assume that $\Ss$ does not have a solution and show that $d_I^p(M,N) \geq n^{\frac 1p}$.

For a presentation $P=(G,R,\gr)$ of a merge tree $M$, we define $\mrg_P\colon G\times G\to \R$ by \[\mrg_P(g_1,g_2) = \min \{s \mid \gr(g_1)\geq s, \gr(g_2)\geq s \text{, and } g_1\sim_s g_2\},\]
where $\sim_s$ is defined in \cref{sec:basic}. Geometrically, $\mrg_{P}(g_1,g_2)$ is the earliest point at which the branches containing $g_1$ and $g_2$ merge. This minimum always exists.

\begin{lemma}
\label{lem:bijection-3part}
Let $P=(G,R,\gr)$ and $Q=(H, R', \gr')$ denote the presentations given for $M$ and $N$ associated to the instance $(S,k)$ of BAL-PART. If there exists a bijection $\sigma\colon G\to H$ such that $\mrg_Q(\sigma(g_1), \sigma(g_2))=1$ for all $g_1, g_2\in G$ for which $\mrg_P(g_1, g_2) = 0$, then $(S,k)$ has a solution.
\end{lemma}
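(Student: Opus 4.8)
The plan is to analyze what the hypothesis on $\sigma$ forces about the structure of the generators of $M$ and $N$, and to read off from this a valid balanced partition of $S$. First I would record the relevant values of $\mrg_P$ and $\mrg_Q$. In $P$, two generators $g_{i,j}$ and $g_{i',j'}$ satisfy $\mrg_P(g_{i,j},g_{i',j'})=0$ precisely when $i=i'$ (they lie in the same block of $s_i$ generators merged at grade $0$), and $\mrg_P = 2$ when $i\neq i'$. In $Q$, $\mrg_Q(h_{a,b},h_{a',b'})=1$ exactly when $a=a'$, and $=3$ when $a\neq a'$. So the hypothesis says: whenever $g,g'$ lie in the same $M$-block, $\sigma(g)$ and $\sigma(g')$ lie in the same $N$-block.

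The key step is then purely combinatorial. Define, for each $i\in\{1,\dots,m\}$, the image block $\sigma(\{g_{i,1},\dots,g_{i,s_i}\})$; by the previous paragraph this set is contained in a single $N$-block, say block $a(i)\in\{1,\dots,k\}$. This yields a well-defined function $a\colon\{1,\dots,m\}\to\{1,\dots,k\}$, and setting $S_a = \{\, s_i : a(i)=a \,\}$ (as a multiset) gives a partition $S=\bigsqcup_{a=1}^k S_a$. It remains to check the balance condition $\sum_{s\in S_a} s = \AS$ for every $a$. For this, count elements: the $N$-block $a$ has exactly $\AS$ generators, and since $\sigma$ is a bijection, the number of $M$-generators mapping into $N$-block $a$ equals $\AS$; but that number is exactly $\sum_{i : a(i)=a} s_i = \sum_{s\in S_a} s$. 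Hence $\sum_{s\in S_a}s=\AS$ for all $a$, so $(S,k)$ has a solution.

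I expect the only real subtlety — and hence the main thing to state carefully rather than a genuine obstacle — is justifying that "same $M$-block maps into a single $N$-block" is a consequence of the hypothesis even though the hypothesis is phrased pairwise: this is immediate since "lying in a common $N$-block" is, for the $N$ of this construction, an equivalence relation detected pairwise by $\mrg_Q(\cdot,\cdot)=1$, so transitivity within a block of $M$ is automatic. One should also note the edge case $s_i=1$, where the $i$-th $M$-block is a singleton and the hypothesis is vacuous for it, but the counting argument still assigns it to whichever $N$-block its image lands in, so nothing breaks. Everything else is bookkeeping with the explicit grades $0,2$ (for $M$) and $1,3$ (for $N$).
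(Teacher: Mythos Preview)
Your proposal is correct and follows essentially the same approach as the paper: show that each $M$-block $\{g_{i,1},\dots,g_{i,s_i}\}$ is mapped by $\sigma$ into a single $N$-block (since $\mrg_P=0$ within a block forces $\mrg_Q=1$ on the images), then use that $\sigma$ is a bijection and that each $N$-block has exactly $\AS$ elements to conclude the resulting partition of $S$ is balanced. Your write-up is in fact more explicit than the paper's, which compresses the counting step into a single sentence.
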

\begin{proof}
For any $1\leq j\leq m$ and $1<u\leq s_j$, $\mrg_P(g_{j,1}, g_{j,u})=0$, and therefore \[\mrg_Q(\sigma(g_{j,1}), \sigma(g_{j,u})) = 1\] by assumption.  In particular, there exists an $1\leq l \leq k$ such that 
\[\{\sigma(g_{j,1}), \sigma(g_{j,2}), \ldots, \sigma(g_{j, s_j})\} \subseteq \{h_{l,1}, \ldots, h_{l, \hat{S}}\}.\]
Since $\sigma$ is a bijection, this immediately gives a solution of $(S,k)$.
\end{proof}

\paragraph{High-level approach}Before continuing, we give an informal overview of the remainder of the section.
We show in \cref{lem_subdivide_merge} that when connecting $M$ to $N$ through merge trees $M_i$, we can choose matchings between $B(M_i)$ and $B(M_{i+1})$ such that the change in the right endpoint of an interval corresponds to a change in the degree of a relation in the two compatible presentation matrices.
Moreover, the composition of matchings \[B(M)\to B(M_2) \to B(M_3)\to \cdots\] is injective, and induces a bijection between $B(M)$ and $B(N)$ (\cref{lem:matching}).
In particular, an endpoint in $B(M)$ at $0$ or $2$ must ``travel'' to $1$ or $3$, and therefore travel a distance of at least 1. Similarly, \cref{lem:bijection-3part} and \cref{lem:sigma-comp} tell us that at least one relation $r$ must ``travel'' from 0 to 3.
The idea is then to show that the cost associated to the two compatible presentations of $M_i$ and $M_{i+1}$ is at least $||(d_1, \ldots, d_{n-1}, \delta)||_p$ where $d_j$ is the change in the right endpoint of the $j$-th bar in $B(M)$ as it moves from $B(M_i)$ to $B(M_{i+1})$ through the matching, and $\delta$ denotes the change in grade of the tracked relation $r$ when going from $M_i$ to $M_{i+1}$. Adding this up, with the help of the triangle inequality, as we traverse from $M$ to $N$, gives  $d_I^p(M,N)+\epsilon > ||(1,\ldots, 1)||_p = n^{\frac 1p}$.
Here $\epsilon>0$ is a sufficiently small constant; recall that the $p$-presentation matrix is defined through an infimum.  However, this approach does not work on the nose because the effect of moving the relation $r$ will already be taken into account in the change of the right endpoint of an interval. Care must therefore be taken to ensure that we avoid double counting. To address this, we count only the contribution from right endpoints moving in a certain way, and we apply a complementary approach for the relation. The thick blue and red lines in \cref{fig:track} illustrate this. The change in the right endpoint from \(M_3\) to \(M_4\) is added to the total cost, whereas the same type of movement (from 2 in the direction of 1) of the relation from $M_4$ to $M_5$ is ignored.  The notation in the caption will become clear later.

To proceed, we first need the following technical lemma. 

\begin{figure}
\centering
\begin{tikzpicture}[scale=1.5]
\node at (0,1){$P_1$};
\draw[shorten <=.3cm, shorten >=.3cm] (0,1) to (.6,0);
\node at (.32,0){$M = M_1$};
\draw[shorten <=.3cm, shorten >=.3cm] (1.2,1) to (0.6,0);
\node at (1.2,1){$P'_1$};
\draw (1.4,1) to (2.4,1);
\node at (1.9,1.4){same gens.};
\node at (1.9,1.14){and rels.};
\node at (2.6,1){$P_2$};
\draw[shorten <=.3cm, shorten >=.3cm] (2.6,1) to (3.2,0);
\node at (3.2,0){$M_2$};
\draw[shorten <=.3cm, shorten >=.3cm] (3.8,1) to (3.2,0);
\node at (3.8,1){$P'_2$};
\node at (4.9,.5){$\dots$};
\begin{scope}[xshift=5cm]
\node at (1,0){$M'_{\ell-1}$};
\draw[shorten <=.3cm, shorten >=.3cm] (1.3,1) to (1,0);
\node at (1.3,1){$P'_{\ell-1}$};
\draw (1.6,1) to (2.5,1);
\node at (2.05,1.4){same gens.};
\node at (2.05,1.14){and rels.};
\node at (2.7,1){$P_\ell$};
\draw[shorten <=.3cm, shorten >=.3cm] (2.7,1) to (3,0);
\node at (3.3,0){$M_\ell = N$};
\end{scope}
\end{tikzpicture}
\caption{The ``zigzag'' of merge trees and presentations in \cref{lem_refinement}, going from $M$ on the left to $N$ on the right, in each step changing from one presentation of $M_i$ to another and then changing only the grade function of the presentation.}
\label{fig_M_i_P_i_zz}
\end{figure}

\begin{lemma}
\label{lem_refinement}
For every $\epsilon>0$, there are merge trees $M=M_1, M_2, \dots, M_\ell=N$ and presentations $P'_1,P_2,P'_2,\dots, P'_{\ell-1},P_\ell$ such that
\begin{equation}
\label{eq_pres_seq}
d_I^p(M,N) + \epsilon > \sum_{i=1}^{\ell-1} d^p(P_i', P_{i+1})
\end{equation}
and such that the following conditions are satisfied,
\begin{itemize}
\item[(i)] $P_i$ and $P'_i$ are presentations of $M_i$,
\item[(ii)] $P'_i = (G_i,R_i,\gr'_i)$ and $P_{i+1} = (G_i,R_i,\gr_{i+1})$ for some $G_i$ and $R_i$ (so $P'_i$ and $P_{i+1}$ have the same sets $G_i$ and $R_i$ of generators and relations),
\item[(iii)] we can equip $G_i\cup R_i$ with a total order $\leq_i$ such that for any $c\leq_i c'\in G_i\cup R_i$, we have $\gr'_i(c)\leq \gr'_i(c')$ and $\gr_{i+1}(c)\leq \gr_{i+1}(c')$,
\item[(iv)] we do not have $\gr'_i(c) < z < \gr_{i+1}(c)$ or $\gr'_i(c) > z > \gr_{i+1}(c)$ for any $c\in G_i\cup R_i$, $z\in \{0,1,2,3\}$.
\end{itemize}
\end{lemma}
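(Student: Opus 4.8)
The starting point is the definition of $d_I^p(M,N)$ as an infimum over sequences of merge trees $M = M_1, \dots, M_\ell = N$ of $\sum_i \hat d_I^p(M_i, M_{i+1})$, where each $\hat d_I^p(M_i, M_{i+1})$ is itself an infimum over compatible presentations. So, first, fix $\epsilon > 0$ and pick a sequence $M_1, \dots, M_\ell$ and, for each $i$, compatible presentations $\widetilde P_i$ of $M_i$ and $\widetilde Q_{i+1}$ of $M_{i+1}$ with $\sum_i d^p(\widetilde P_i, \widetilde Q_{i+1}) < d_I^p(M,N) + \epsilon/2$ (absorbing the two layers of infimum into one small slack). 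Being compatible, $\widetilde P_i$ and $\widetilde Q_{i+1}$ share the same underlying $(G_i, R_i)$ and differ only in the grading function; this already gives condition (ii) with $P'_i := \widetilde P_i$ and $P_{i+1} := \widetilde Q_{i+1}$ (reindexing so $P_1$ is irrelevant and $P_\ell$ is the last one). The remaining work is to upgrade this raw sequence to one satisfying (iii) and (iv) while only increasing the total cost by another $\epsilon/2$, and while keeping (i) and (ii).

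Condition (iii), the existence of a common total order refining both grading functions, is the routine part: on $G_i \cup R_i$ the relation "$\gr'_i(c) \le \gr'_i(c')$ and $\gr_{i+1}(c) \le \gr_{i+1}(c')$" need not be total, but one can perturb the grades infinitesimally (within $\epsilon$ in $p$-cost, and without crossing any integer, which is needed for (iv)) so that both $\gr'_i$ and $\gr_{i+1}$ become injective and induce the \emph{same} linear order — e.g.\ break ties in $\gr'_i$ using $\gr_{i+1}$ and vice versa, then nudge. One must check this perturbation can be done simultaneously on the shared pair $(P'_i, P_{i+1})$ and is compatible with the validity constraint $\gr(r) \ge \gr(g), \gr(g')$ for relations; since we only perturb downward-respecting ties, this is harmless. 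Condition (iv) says no generator or relation crosses one of the critical integer values $\{0,1,2,3\}$ during the step $P'_i \to P_{i+1}$. This is achieved by \emph{subdividing} each step: if some $c$ has $\gr'_i(c) < z < \gr_{i+1}(c)$, insert an intermediate merge tree $M_i^z$ whose presentation has $c$ (and everything below the relevant threshold) parked exactly at $z$, splitting one step into two. By the triangle inequality for the $\ell^p$ norm on the grading-difference vectors, subdividing never increases the summed cost $\sum_i d^p(P'_i, P_{i+1})$, so \eqref{eq_pres_seq} is preserved; one only needs finitely many subdivisions since there are only four critical values and finitely many $c$. Here it is crucial that the intermediate object $M_i^z$ is genuinely a merge tree — i.e.\ that the modified $(G,R,\gr)$ still presents a merge tree — which holds because stopping the grade flow at $z$ keeps $\gr(r) \ge \gr(g), \gr(g')$ intact, as the order is preserved by (iii).

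The step I expect to be the main obstacle is making conditions (iii) and (iv) hold \emph{simultaneously and stably}: a naive perturbation to enforce injectivity (iii) might push some grade across an integer, violating (iv), while subdividing at integers to enforce (iv) reshuffles which grades coincide and can break (iii). The clean way around this is to do it in the right order and with the right bookkeeping — first subdivide all steps at the four critical values $\{0,1,2,3\}$ so that within each (sub)step every $c$ stays within a single half-open slab determined by consecutive critical values, then perturb within each slab (which has room to spare, bounded away from the slab's endpoints) to get a common refining order. One should also double-check that the final merge trees $M_i$ in the subdivided sequence are still presented by legitimate presentations in the sense of \cref{sec:basic} (the cardinality-$1$-for-large-$s$ condition is automatic since we never touch the grades of generators at $-C$ or the unique unbounded branch), and that $d^p$ is still being computed between \emph{compatible} presentations at each step, which it is by construction since $P'_i$ and $P_{i+1}$ are built on the identical $(G_i, R_i)$. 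Adding the $\epsilon/2 + \epsilon/2$ slacks gives \eqref{eq_pres_seq}.
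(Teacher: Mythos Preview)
Your plan has a genuine gap in how you handle condition (iii). You write that perturbing the grades infinitesimally will make $\gr'_i$ and $\gr_{i+1}$ ``induce the same linear order,'' and your recipe is to ``break ties in $\gr'_i$ using $\gr_{i+1}$ and vice versa, then nudge.'' But this only addresses ties. If there are $c,c'\in G_i\cup R_i$ with $\gr'_i(c)<\gr'_i(c')$ and $\gr_{i+1}(c)>\gr_{i+1}(c')$ --- a genuine order swap --- then no infinitesimal perturbation can make the two gradings agree on the relative order of $c$ and $c'$, and no total order $\leq_i$ satisfying (iii) can exist. Your subdivision step does not rescue this: you subdivide only when an element crosses one of the critical values $\{0,1,2,3\}$, not when two elements cross each other, so swaps can survive intact within a single slab.

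The paper fixes this by linear interpolation. For each step it sets $\gr_\lambda=(1-\lambda)\gr'_i+\lambda\gr_{i+1}$, treats $0,1,2,3$ as additional constant elements, and subdivides at every $\lambda$ where \emph{any} two of these (elements or constants) coincide. After that refinement no pair swaps within a substep, so a total order satisfying (iii) exists by simple tie-breaking, and (iv) comes for free from including $\{0,1,2,3\}$ among the crossing witnesses. Because all intermediate grade vectors lie on the straight segment from $\gr'_i$ to $\gr_{i+1}$, the summed $d^p$ cost is preserved \emph{exactly} --- you do not need to spend any extra $\epsilon$ on the refinement. Note also that your claim ``subdividing never increases the summed cost'' is only true when the inserted point lies on that segment; for a general intermediate (e.g.\ your ``park $c$ at $z$ and leave the rest alone''), the triangle inequality goes the wrong way and the sum can strictly increase, so this needs the interpolation picture to be justified.
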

\begin{proof}
By the definition of $d_I^p$, we have merge trees $M=M_1, M_2, \dots, M_\ell=N$ and presentations $P'_1,P_2,P'_2,\dots, P'_{\ell-1},P_\ell$ such that  \cref{eq_pres_seq} holds, and such that (i) and (ii) are satisfied; see also \cref{fig_M_i_P_i_zz}. To obtain (iii) and (iv), we shall refine the sequence of merge trees through interpolation. To do so, we first fix an integer $1\leq i\leq \ell-1$, and let $P' = P'_i$ and $P = P_{i+1}$. We denote the grades of $P'$ and $P$ by $\gr'$ and $\gr$, respectively.
For every $\lambda \in [0,1]$, we get a presentation matrix $P^\lambda = (G,R,\gr_\lambda)$ where $\gr_\lambda(c) = (1-\lambda)\gr'(c)+\lambda \gr(c)$.
Note that $P^0 = P'$ and $P^1 = P$. We slightly extend the domain of the grading functions by letting $\gr_\lambda(j) = j$ for $j\in \{0,1,2,3\}$ and for all $\lambda$. Then, for any two $c, c'\in G\cup R\cup \{0,1,2,3\}$, we have that either
\begin{enumerate}
    \item $\gr_\lambda(c)\leq \gr_{\lambda}(c')$ for all $\lambda$, or
    \item $\gr_{\lambda}(c')\leq \gr_{\lambda}(c)$ for all $\lambda$, or
    \item there exists a unique $\lambda$ such that $\gr_\lambda(c) = \gr_{\lambda}(c')$ and the sign of $\gr_\lambda(c)-\gr_\lambda(c')$ is locally constant on $[0,\lambda)\cup (\lambda,1]$. If this is the case, then we say that $c$ and $c'$ \emph{cross} and let $\lambda(c,c') = \lambda$ denote the unique crossing time. 
\end{enumerate}
If $c$ and $c'$ do not cross, then we let $\lambda(c,c') = 0$. 
Now, let $T$ be the set 
\[ T = \{0,1\}\cup \bigcup_{c,c'\in G\cup R\cup\{{0,1,2,3\}}} \lambda(c,c')\]
By ordering the elements $\{\lambda_1, \ldots, \lambda_{e_i}\}$ of $T$ according to the standard total order on the real numbers, we get a finite sequence of real numbers
\[0=\lambda_1<\dots<\lambda_{e_i}=1\]such that the sequence of merge trees $M_i=M^i_1, M^i_2, \ldots, M^i_{e_i} = M_{i+1}$ given by the presentation matrices
\[P'=(P_1^i)', P_2^i, (P_2^i)', \ldots, (P_{e_i-1}^i)', P_{e_i}^i=P\]
with $P_j^i = (P_j^i)'=(G,R,\gr_{\lambda_j})$, $1\leq j\leq e_i$, satisfy (iv). To ensure that the merge trees and presentations also satisfy (iii), let $\leq_{j}$ be any total order on $G\cup R$  satisfying 
\begin{enumerate}
    \item $c<_j c'$ if  $\gr_{\lambda_j}(c) < \gr_{\lambda_j}(c')$, or
    \item $c<_j c'$ if $\gr_{\lambda_j}(c) = \gr_{\lambda_j}(c')$ and $\gr_{\lambda_{j+1}}(c) < \gr_{\lambda_{j+1}}(c')$.
\end{enumerate}
Order the elements of $G\cup R$ arbitrarily as $\{h_1, h_2, \ldots, h_{|R|+|G|}\}$ and for $1\leq j\leq e_i$ let $L(j)\in \mathbb{R}^{|G|+|R|}$ be the vector 
\[ L(j) = \begin{bmatrix} \gr_{\lambda_j}(h_1) \\ \gr_{\lambda_j}(h_2) \\ \hdots \\ \gr_{\lambda_j}(h_{|R|+|G|})\end{bmatrix},\]
and observe that $L(j) = (1-\lambda_j)L(1) + \lambda_jL(e_i)$. 
Since the presentation matrices are compatible for all $\lambda_j$, and we are interpolating the indices in the presentation matrices, we have that
\begin{linenomath}\begin{align}d^p(P',P) &= \left(\sum_{h\in G\cup R} |\gr_P(h)-\gr_{P'}(h)|^p\right)^{\frac 1 p} \label{eq:interpolation}\\
& = ||L(e_i) - L(1)||_p  = \sum_{j=1}^{e_i-1} ||L(e_{j+1}) - L(e_j)||_p \nonumber\\
&=\sum_{j=1}^{e_i-1} d^p((G,R,\gr_{\lambda_j}),(G,R,\gr_{\lambda_{j+1}}))  = \sum_{j=1}^{e_i-1} d^p((P_j^i)',P_{j+1}^i).\nonumber
\end{align}\end{linenomath}
Doing this for every $i$ gives the sequence of merge trees
\[M_1 = M_1^1, M_2^1, \ldots, M_{e_1}^1 = M_2 =  M_{1}^2,  \ldots, M_{e_2}^2 = M_3 = M_1^3, \ldots, M_{e_{\ell-1}}^{\ell-1} = M_\ell\]
and presentation matrices 
\[P_1 = (P_1^1)', P_2^1, (P_2^1)', \ldots, (P_{e_1-1}^1)', P_{e_1}^1 = P_2, P_2' = (P_1^2)',\ldots, P_{e_{\ell-1}}^{\ell-1}=P_l\]
satisfying (i) -- (iv) of the lemma. \cref{eq:interpolation} ensures that \cref{eq_pres_seq} is still satisfied.
\end{proof}

In the following, $B_0^{\rm fin}(M_i)$ denotes the subset of $B(M_i)$ of intervals of finite length. 
\begin{lemma}
\label{lem_subdivide_merge}
Suppose $M_1, M_2, \dots, M_\ell$ and $P'_1,P_2,P'_2,\dots, P'_{\ell-1},P_\ell$ satisfy the conditions of \cref{lem_refinement}. Then, for every $1\leq i<\ell$, there is a subset $\Pi_i \subseteq G_i\times R_i$, with each $g\in G_i$ and $r\in R_i$ showing up in at most one pair of $\Pi_i$, such that
\begin{linenomath}\begin{align*}
B_0^{\rm fin}(M_i) &= \{[\gr'_i(g),\gr'_i(r))\mid (g,r)\in \Pi_i\},\\
B_0^{\rm fin}(M_{i+1}) &= \{[\gr_{i+1}(g),\gr_{i+1}(r))\mid (g,r)\in \Pi_i\}.
\end{align*}
\end{linenomath}
\end{lemma}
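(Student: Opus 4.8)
The plan is to exploit the very rigid structure guaranteed by \cref{lem_refinement}: conditions (iii) and (iv) mean that within each presentation $P'_i$ and $P_{i+1}$ the generators and relations are linearly ordered in a way compatible with the grading, and no grade ever crosses an integer $z\in\{0,1,2,3\}$ when we pass from $P'_i$ to $P_{i+1}$. First I would recall how the barcode of a merge tree presentation $(G,R,\gr)$ decomposes: running the standard persistence (union-find) algorithm over the totally ordered set $G\cup R$, each relation $r=\{g,h\}$ either merges two distinct components or is already redundant; in the first case it "kills" the younger of the two components, contributing a finite interval $[\gr(g^\ast),\gr(r))$ where $g^\ast$ is the generator that was the birth of the component being merged away. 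This sets up a partial matching (a partial injection) $\Pi\subseteq G\times R$ pairing each killing relation with the generator recording the death-component's birth, and $B^{\mathrm{fin}}_0(M_i)=\{[\gr'_i(g),\gr'_i(r))\mid (g,r)\in\Pi\}$ essentially by definition of the algorithm.

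The substantive claim is that \emph{one} such pairing $\Pi_i$ works \emph{simultaneously} for $P'_i$ and $P_{i+1}$, i.e. that the union-find algorithm makes the same "merge vs.\ redundant" decisions and pairs up the same generator with each merging relation, whether run with grades $\gr'_i$ or with grades $\gr_{i+1}$. The key observation is that the algorithm's behaviour depends only on the \emph{order} in which generators and relations are processed together with which generator is "older" in each merge, and condition (iii) says there is a single total order $\leq_i$ refining both $\gr'_i$ and $\gr_{i+1}$. So I would run the algorithm once along $\leq_i$: at each relation $r=\{g,h\}$ we look at the current components of $g$ and $h$; whether they coincide depends only on the combinatorics of which earlier relations merged which generators, which is determined by $\leq_i$ alone, hence is the same for both gradings; and when they differ, the "younger" component is the one whose representative birth-generator is larger in $\leq_i$, again the same for both. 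This produces $\Pi_i$, and then for each $(g,r)\in\Pi_i$ the interval it contributes is $[\gr'_i(g),\gr'_i(r))$ in $M_i$ (via $P'_i$) and $[\gr_{i+1}(g),\gr_{i+1}(r))$ in $M_{i+1}$ (via $P_{i+1}$), giving both displayed equalities. One should also check that exactly the intervals of \emph{finite} length arise this way — the single infinite interval $[-C,\infty)$ corresponds to the one component never killed — and that $\Pi_i\subseteq G_i\times R_i$ with each generator and each relation used at most once, which is immediate since the algorithm touches each relation once and kills each component (hence retires each birth-generator) at most once.

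Condition (iv) is needed for a subtle but real reason, and I would flag it explicitly: the statement asserts equality of \emph{sets} of intervals (as multisets of pairs $(\text{birth},\text{death})$), and for that to be meaningful when comparing $M_i$ and $M_{i+1}$ we need the same index set $\Pi_i$; condition (iii) delivers that. Condition (iv) is what guarantees, in the broader argument to come, that an interval cannot sneak an endpoint past an integer grade unnoticed — but for this lemma per se its role is mainly to keep the total order $\leq_i$ consistent with both endpoints; I would use it to rule out the degenerate possibility that a relation $r$ with $\gr'_i(r)=\gr'_i(g)$ for its paired generator could have $\gr_{i+1}(r)<\gr_{i+1}(g)$, which would violate the requirement $\gr(r)\ge\gr(g),\gr(g')$ — but condition (iii) already forbids this since $g\le_i r$ forces $\gr_{i+1}(g)\le\gr_{i+1}(r)$.

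The main obstacle I anticipate is the bookkeeping around ties and the choice of total order: when several generators or relations share a grade, the union-find algorithm's output can a priori depend on the tie-breaking, so I must verify that the \emph{particular} refinement $\leq_i$ from \cref{lem_refinement}(iii) — which simultaneously refines $\gr'_i$ and $\gr_{i+1}$, and whose construction in the proof of that lemma uses the grade at the \emph{next} interpolation step to break ties — is consistent in the sense that processing in this order yields a valid persistence computation for \emph{both} gradings (in particular that a relation never gets processed "before" a generator it connects, under either grading). This amounts to checking that $\leq_i$ restricted to the birth-generator and its killing relation is consistent with $\le$ under both $\gr'_i$ and $\gr_{i+1}$, which is exactly what (iii) provides; once that is pinned down, the rest is the routine correctness statement for the pairing produced by union-find, and the two displayed equalities follow by reading off grades.
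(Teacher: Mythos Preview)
Your proposal is correct and follows essentially the same approach as the paper: both arguments observe that the persistence (union--find) pairing produced from a presentation depends only on the total order on $G\cup R$, not on the actual grade values, and then invoke condition~(iii) of \cref{lem_refinement} to get a single total order $\leq_i$ valid for both $\gr'_i$ and $\gr_{i+1}$, yielding a common $\Pi_i$. Your discussion is in fact a bit more careful than the paper's about why the ``younger'' component in each merge is the same under both gradings, and you correctly note that condition~(iv) is not actually needed for this lemma itself (the paper's proof does not use it either).
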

\begin{proof}
Let $(G,R,\gr)$ be a presentation of a merge tree $\widehat{M}$, and let $\leq$ be a total order on $G\cup R$ such that $\gr(c)\leq \gr(c')$ whenever $c\leq c'$ for $c,c'\in G\cup R$. Seeing every generator $g\in G$ as a point, and a relation  $\{g,h\}\in R$ as an edge connecting $g$ to $h$, running the ``standard algorithm'' \cite[Ch.~VII.1]{edelsbrunner2022computational} for persistent homology on this sequence of points and vertices computes the persistence barcode of a filtered multigraph (two points may be connected by multiple edges). If no relation appears twice, however, then the computation will give precisely the persistence barcodes of $B(\widehat{M})$. Furthermore, if a relation appears twice, then the corresponding column will be cleared out and thus only contribute to the barcode in degree 1. In particular, the bars in $B_0(\widehat{M})$ correspond precisely to a set $\Pi\subseteq G\times R$, and this pairing depends only on the ordering of the generators and relations. We expand on these ideas and give an example in \cref{rem.stdalg}.

To prove the statement, note that, by virtue of Lemma 4 (iii), that the order of the generators and relations coincide for $P_i'$ and $P_{i+1}$. In particular, running the persistence algorithm on either $(G_i, R_i, \gr'_i)$ or $(G_i, R_i, \gr_{i+1})$, we obtain a set $\Pi_i \subseteq G_i\times R_i$ such that
\begin{linenomath}\begin{align*}
B^{\rm fin}_0(M_i) &= \{[\gr'_{\lambda_i}(g),\gr'_{\lambda_i}(r))\mid (g,r)\in \Pi_i\},\\
B^{\rm fin}_0(M_{i+1}) &= \{[\gr_{\lambda_{i+1}}(g),\gr_{\lambda_{i+1}}(r))\mid (g,r)\in \Pi_i\}.
\end{align*}\end{linenomath}
Note that each $g$ and $r$ is either paired or unpaired by the standard algorithm. In either case, any generator or relation appears in at most one pair. 
\end{proof}
The following result is immediate. 
\begin{corollary}
\label{cor_matching} Suppose $M_1, M_2, \dots, M_\ell$ and $P'_1,P_2,P'_2,\dots, P'_{\ell-1},P_\ell$ satisfy the conditions of \cref{lem_refinement}. Then, $\Pi_i$ defines a matching $\phi_i\colon B(M_i)\to B(M_{i+1})$ where the two infinite bars are matched to each other, and such that 
    \[\phi_i([\gr_i'(g), \gr_i'(r))) = [\gr_{i+1}(g), \gr_{i+1}(r)), \qquad (g,r)\in \Pi_i\]
    if both of the two intervals are non-trivial, and all other intervals are unmatched. 
\end{corollary}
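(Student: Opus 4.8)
The plan is to simply repackage \cref{lem_subdivide_merge} as a matching of barcodes; there is very little to do. First recall that, since each $M_j$ is a merge tree, the cardinality of $(M_j)_s$ equals $1$ for all sufficiently large $s$, which forces $B(M_j)$ to contain exactly one interval of infinite length (the branch that is never absorbed); call it the \emph{infinite bar} of $M_j$, and note that $B(M_j)$ is the disjoint union of $B_0^{\rm fin}(M_j)$ and this single infinite bar. Hence to define a matching $\phi_i\colon B(M_i)\to B(M_{i+1})$ it suffices to specify it on the infinite bars and on the finite bars separately, and on the infinite bars I send the one of $M_i$ to the one of $M_{i+1}$.

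For the finite part, \cref{lem_subdivide_merge} provides $\Pi_i\subseteq G_i\times R_i$ in which every generator and every relation occurs in at most one pair, together with the identifications
\[
B_0^{\rm fin}(M_i)=\{[\gr'_i(g),\gr'_i(r))\mid (g,r)\in \Pi_i\},\qquad
B_0^{\rm fin}(M_{i+1})=\{[\gr_{i+1}(g),\gr_{i+1}(r))\mid (g,r)\in \Pi_i\},
\]
where only the pairs yielding a non-degenerate half-open interval actually contribute a bar to the respective multiset. For each $(g,r)\in\Pi_i$ for which both $[\gr'_i(g),\gr'_i(r))$ and $[\gr_{i+1}(g),\gr_{i+1}(r))$ are non-trivial I put $\phi_i([\gr'_i(g),\gr'_i(r)))=[\gr_{i+1}(g),\gr_{i+1}(r))$, and every remaining interval of $B(M_i)$ and of $B(M_{i+1})$ --- namely those arising from a pair that degenerates on the other side --- is left unmatched. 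Since the two finite barcodes are indexed, as multisets, by the same set $\Pi_i$, and distinct pairs are handled independently, this assignment is a bijection from a sub-multiset of $B(M_i)$ onto a sub-multiset of $B(M_{i+1})$, i.e.\ a matching in the sense of \cref{sec:background}, of exactly the stated form.

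I do not expect any real obstacle. The substantive point --- that the degree-$0$ pairing produced by the standard algorithm depends only on the common total order on $G_i\cup R_i$, so that one and the same $\Pi_i$ simultaneously describes $B_0^{\rm fin}(M_i)$ and $B_0^{\rm fin}(M_{i+1})$ --- is exactly the content of \cref{lem_subdivide_merge}, which in turn rests on condition (iii) of \cref{lem_refinement}. The only mild care needed is the bookkeeping of multiplicities when two pairs of $\Pi_i$ happen to give equal intervals, and this is automatic because $\phi_i$ is defined pair-by-pair.
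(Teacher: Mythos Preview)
Your proposal is correct and is exactly the paper's approach: the paper states that the corollary is immediate from \cref{lem_subdivide_merge}, and your argument is just the spelled-out version of that immediacy. There is nothing to add.
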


\begin{figure}
\centering
\begin{tikzpicture}

\begin{scope}[xshift=0cm,rotate around={90:(0,0)}]
 \draw[gray!40, dashed] (1,1) grid (8,-12);

         \foreach \x in {1, 2,3,4,5,6,7}{
    \fill[black] (\x,-4) circle (2pt) node[left] {};
    }
             \foreach \x in {2,3,4,5,6,7}{
    \fill[black] (\x,-5) circle (2pt) node[left] {};
    }
      \fill[black] (8,-4) circle (2pt) node[above] {$g_1$};
     \fill[black] (8,-5) circle (2pt) node[above] {$g_2$};
      \fill[black] (8,-6) circle (2pt) node[above] {$g_3$};
     \foreach \x in {3,4,5,6,7}{
    \fill[black] (\x,-6) circle (2pt) node[left] {};
    }
     \foreach \x in {4,5,6,7,8}{
    \draw[black] (\x,-4) -- (\x,-5);
    }
         \foreach \x in {5,6,7,8}{
    \draw[bend right=70]    (\x,-4) to (\x,-6);
    }
             \foreach \x in {7,8}{
    \draw[bend right=50]    (\x,-4) to (\x,-5);
    }

        \draw[thick,->] (1,-11) -- (8,-11) node[above] {$|M|$};
        \draw[thick] (2,-11.5) -- (4,-11);
        \draw[thick] (3,-12) -- (5,-11);

        \foreach \x/\y in {4/-8.5, 5/-9} {
        \draw (\x,\y) circle (3pt);
    }
            \foreach \x/\y in {1/-8, 2/-8.5, 3/-9} {
        \fill (\x,\y) circle (3pt);
    }
    \draw[->, thick] (1,-8) -- (8,-8) node[above] {$B(M)$};
    \draw[-, thick] (2,-8.5) -- (4,-8.5);
    \draw[-, thick] (3,-9) -- (5,-9);
        
        \draw[->, thick, blue] (1,1) -- (8,1) node[above] {$g_1$};
        \draw[->, thick, blue] (2,0) -- (8,0) node[above] {$g_2$};;
        \draw[->,thick, blue] (3,-1) -- (8,-1) node[above] {$g_3$};

        \fill[red] (4,-2) circle (3pt) node[right] {$r_1$};
        \fill[red] (5,-2) circle (3pt) node[right] {$r_2$};
        \fill[red] (7,-2) circle (3pt) node[right] {$r_3$};

        \draw[red, thick, bend right] (4,-2) to (4,1);
        \draw[red, thick, bend left] (4,-2) to (4,0);
         \draw[red, thick, bend right] (5,-2) to (5,1);
        \draw[red, thick, bend left] (5,-2) to (5,-1);

        \draw[red, thick, bend right, dashed] (7,-2) to (7,1);
        \draw[red, thick, bend left, dashed] (7,-2) to (7,0);

    \node[below] at (0.5, 0) {\textbf{A}};
     \node[below] at (0.5, -4) {\textbf{B}};
      \node[below] at (0.5, -8) {\textbf{C}};
       \node[below] at (0.5, -11) {\textbf{D}};
        
\end{scope}
\end{tikzpicture}
\caption{\textbf{(A)}: A presentation $(G,R)$ of a merge tree $M$. \textbf{(B)}: The corresponding filtered multigraph; see \cref{rem.stdalg}. \textbf{(C)}: The barcode $B(M)$ equals the barcode of the $0$-th persistent homology of the filtered multigraph. \textbf{(D)}: The geometric realization of $M$.}
\label{fig.stdalg}
\end{figure}

\begin{remark}
\label{rem.stdalg}
    For any presentation $(G,R)$ of a merge tree $M$ and any filtration value $t$, we can form a multigraph $\mathbf{G}_t$ with vertices all elements $g\in G$ with $\gr(g) \leq t$ and an edge between $g_i$ and $g_j$ for every $r=\{g_i, g_j\}$ with $\gr(r)\leq t$. Since two generators are equivalent in $M_t$ if and only if they are in the same connected component of $\mathbf{G}_t$, we obtain $B(M)$ from the $0$-th persistent homology of the filtration $\{\mathbf{G}_t\}_{t\in \R}.$ This is illustrated in \cref{fig.stdalg}.

    The persistent homology of $\{\mathbf{G}_t\}_{t\in \R}$ can be computed using the standard algorithm: order the vertices and edges by their appearance, form the corresponding boundary matrix with one simplex per row and column, and reduce the columns from left to right; see, e.g., \cite[Figure 7]{otter2017roadmap}.  If there is more than one edge between a pair of vertices, then this will introduce a loop and manifest itself as a 0 column in the reduced matrix (a feature in degree 1 homology). The persistence pairs, one for each pivot column, correspond to the finite bars $B(M)_0^{\rm fin}\subset B(M)$. In addition, $B(M)$ contains the interval $[\gr(g_1), \infty)$ where $\gr(g_1)$ is the minimal grade of all generators in the presentation.

    Discarding columns in the boundary matrix corresponding to vertices (those columns are all zero), the degree 0 persistent homology computation of $\mathbf{G}_t$ amounts to bringing a particular matrix representation of $(G,R)$ to column-reduced form (we are working over $\mathbb{Z}_2$ for simplicity, as $H_0$ computations are always independent of the field.) For the example in \cref{fig.stdalg}, we have

\begin{equation*}
\begin{bNiceMatrix}[
  first-row,code-for-first-row=\scriptstyle,
  first-col,code-for-first-col=\scriptstyle,
]
& \{g_1, g_2\}& \{g_1, g_3\} & \{g_1, g_2\} \\
g_1 & 1 & 1 & 1 \\
g_2 & 1 & 0 & 1 \\
g_3 & 0 & 1 & 0
\end{bNiceMatrix}\qquad \overset{\text{Adding 1st column to 3rd column}}{\longrightarrow} \qquad \begin{bNiceMatrix}[
  first-row,code-for-first-row=\scriptstyle,
  first-col,code-for-first-col=\scriptstyle,
]
& \{g_1, g_2\}& \{g_1, g_3\} & \{g_1, g_2\} \\
g_1 & 1 & 1 & 0 \\
g_2 & 1 & 0 & 0 \\
g_3 & 0 & 1 & 0
\end{bNiceMatrix}
\end{equation*}
Hence, we obtain the persistence pairs $(g_2, \{g_1, g_2\}), (g_3, \{g_1, g_3\})$ and thus
\[B(M)_0^{\rm fin}=\left\{\left[\gr(g_2), \gr(\{g_1, g_2\})\right), [\gr(g_3), \gr(\{g_1, g_3\}))\right\}.\] 
\end{remark}

\begin{lemma}
\label{lem:matching}
For $0 < \epsilon < d_I^p(M,N)$ and $\phi_i$ as in \cref{lem_refinement}, the composition of matchings 
$\phi = \phi_{\ell-1}\circ \cdots \circ  \phi_1 \colon B(M)\to B(N)$ is a bijection. 
\end{lemma}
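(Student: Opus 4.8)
The plan is to argue by contradiction, reducing the statement to the assertion that no interval of $B(M)$ is ``dropped'' along the chain of matchings, and then showing that a dropped interval would force an impossibly large total cost. First I would record the purely formal points: each $\phi_i$ is a matching, hence an injective partial map $B(M_i)\to B(M_{i+1})$, and since $P_{i+1}$ and $P'_{i+1}$ present the same merge tree $M_{i+1}$, the barcode $B(M_{i+1})$ is unambiguous, so $\phi=\phi_{\ell-1}\circ\cdots\circ\phi_1$ is a well-defined injective partial map $B(M)\to B(N)$. As $B(M)$ and $B(N)$ both have exactly $n$ intervals (\cref{lem:dwp}), it suffices to show $\phi$ is total. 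By \cref{cor_matching} the unique infinite bars of consecutive $B(M_i)$ are matched to one another, so the infinite bar of $B(M)$ reaches that of $B(N)$; hence it remains only to show that no \emph{finite} bar of $B(M)$ is dropped.

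Suppose some finite bar $I\in B(M)$ is dropped, and let $i_0\le\ell-1$ be least such that $J\coloneqq(\phi_{i_0-1}\circ\cdots\circ\phi_1)(I)$ is a (finite) bar of $B(M_{i_0})$ but $\phi_{i_0}$ leaves $J$ unmatched. By \cref{lem_subdivide_merge}, $J=[\gr'_{i_0}(g),\gr'_{i_0}(r))$ for a unique $(g,r)\in\Pi_{i_0}$, and by \cref{cor_matching} the only way $\phi_{i_0}$ can fail to match $J$ is that $[\gr_{i_0+1}(g),\gr_{i_0+1}(r))$ is trivial, i.e.\ $\gr_{i_0+1}(g)=\gr_{i_0+1}(r)$. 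I would then track the endpoints of the image of $I$: write $[\ell_j,u_j)$ for its image in $B(M_j)$ (for $1\le j\le i_0$), and set $[\ell_{i_0+1},u_{i_0+1})\coloneqq[\gr_{i_0+1}(g),\gr_{i_0+1}(r))$, of length $0$. For each $1\le j\le i_0$ the relevant pair $(g_j,r_j)\in\Pi_j$ gives, via \cref{lem_subdivide_merge} and \cref{cor_matching}, $\ell_j=\gr'_j(g_j)$, $u_j=\gr'_j(r_j)$, $\ell_{j+1}=\gr_{j+1}(g_j)$, $u_{j+1}=\gr_{j+1}(r_j)$, so that $|\ell_{j+1}-\ell_j|=|\gr_{j+1}(g_j)-\gr'_j(g_j)|$ and $|u_{j+1}-u_j|=|\gr_{j+1}(r_j)-\gr'_j(r_j)|$; since $g_j\ne r_j$ in $G_j\cup R_j$ this yields $|\ell_{j+1}-\ell_j|^p+|u_{j+1}-u_j|^p\le d^p(P'_j,P_{j+1})^p$.

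Since $I=[-C,x)$ with $x\in\{0,2\}$, its length is at least $C=4n^{1/p}$, while $[\ell_{i_0+1},u_{i_0+1})$ has length $0$; telescoping the lengths gives
\[
4n^{1/p}=C\le u_1-\ell_1=\sum_{j=1}^{i_0}\bigl[(\ell_{j+1}-\ell_j)-(u_{j+1}-u_j)\bigr]\le\sum_{j=1}^{i_0}\bigl(|\ell_{j+1}-\ell_j|+|u_{j+1}-u_j|\bigr).
\]
Combining this with $d^p(P'_j,P_{j+1})\ge(|\ell_{j+1}-\ell_j|^p+|u_{j+1}-u_j|^p)^{1/p}\ge\tfrac12(|\ell_{j+1}-\ell_j|+|u_{j+1}-u_j|)$ and summing over $j$ yields $\sum_{i=1}^{\ell-1}d^p(P'_i,P_{i+1})\ge C/2=2n^{1/p}$. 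But \cref{lem_refinement} together with the hypothesis $\epsilon<d_I^p(M,N)$ gives $\sum_{i=1}^{\ell-1}d^p(P'_i,P_{i+1})<d_I^p(M,N)+\epsilon<2\,d_I^p(M,N)$, so $d_I^p(M,N)>n^{1/p}$, contradicting the standing assumption $d_I^p(M,N)\le n^{1/p}$ under which the lemma is invoked. Hence no finite bar is dropped, $\phi$ is total, and therefore $\phi$ is a bijection.

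I expect the main obstacle to be the bookkeeping in the second and third paragraphs: a single bar must be followed through a zigzag in which both the grading functions and the underlying generator/relation sets change, so one has to be careful to charge the motion of each endpoint between $M_j$ and $M_{j+1}$ to the single summand $d^p(P'_j,P_{j+1})$ through the correct pair $(g_j,r_j)\in\Pi_j$, using that $B(M_{j+1})$ is presentation-independent whereas the pairing $\Pi_{j+1}$ is not. A secondary subtlety is that the contradiction closes only because $C=4n^{1/p}$ is taken large enough that a dropped bar costs at least $2n^{1/p}$; this exceeds $2\,d_I^p(M,N)$ precisely under the assumption $d_I^p(M,N)\le n^{1/p}$, which is exactly why $C$ is chosen this large.
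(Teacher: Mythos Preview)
Your argument is correct and reaches the same contradiction as the paper (a dropped bar forces total cost at least $C/2=2n^{1/p}$, which is incompatible with $\sum_i d^p(P'_i,P_{i+1})<d_I^p(M,N)+\epsilon$), but the bookkeeping differs. The paper packages everything through the Wasserstein $p$-cost: an unmatched bar of length $\ge C$ contributes at least $C/2$ to $p\text{-cost}(\phi)$, and then one invokes the subadditivity $p\text{-cost}(\phi)\le\sum_i p\text{-cost}(\phi_i)$ together with $p\text{-cost}(\phi_i)\le d^p(P'_i,P_{i+1})$, which follow directly from \cref{cor_matching}. You instead bypass the Wasserstein layer entirely, tracking the two endpoints of the single dropped bar through the chain and telescoping the length; the norm comparison $\|\,\cdot\,\|_p\ge\tfrac12\|\,\cdot\,\|_1$ on $\R^2$ then recovers the same $C/2$ bound. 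Your route is more self-contained (no appeal to properties of $p$-cost under composition), at the price of the explicit endpoint accounting you flag in your last paragraph. Both proofs tacitly use $d_I^p(M,N)\le n^{1/p}$ to close the contradiction; you call this out as a standing assumption, whereas the paper's line $d_I^p(M,N)+\epsilon<2n^{1/p}$ leaves it implicit.
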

\begin{proof}
Suppose that an interval $[-C,b)\in B(M)$ is left unmatched by $\phi$.
This contributes at least $\frac C2$ to $p\text{-cost}(\phi)$, so using \cref{cor_matching} and that the $p$-cost of a composition of morphisms is at most the sum of the $p$-costs of each morphism, we get
\[
\frac C2 \leq \text{$p$-cost}(\phi) \leq \sum_{i=1}^{\ell-1} \text{$p$-cost}(\phi_i) \leq  \sum_{i=1}^{\ell-1} d^p(P_i', P_{i+1}) < d_I^p(M,N)+\epsilon < 2n^{\frac 1p},
\]
which contradicts $C=4(n-1)^{\frac 1p}$.
\end{proof}

Let $P = (G,R,\gr)$ be a presentation of a merge tree.
For $g,h\in G$, recall that
\[
\mrg_{P}(g,h) = \min\{s\in \R\mid [g] = [h] \in G_s/{\sim}_s\}.
\]
and that $\mrg_{P}(g,h)$ is the earliest point at which the branches containing $g$ and $h$ merge. 
\begin{lemma}
\label{lem_long_branch_merge_levels}
Suppose $M_1, M_2, \dots, M_\ell$ and $P'_1,P_2,P'_2,\dots, P'_{\ell-1},P_\ell$ satisfy 
the conditions of \cref{lem_refinement}, and let $g,g'\in G_i$.
Then, there is an $r\in R_i$ such that $\mrg_{P_i'}(g,g') = \gr_i(r)$ and $\mrg_{P_{i+1}}(g,g') = \gr_i'(r)$.
\end{lemma}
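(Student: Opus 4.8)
The plan is to reduce the statement to the combinatorial fact, already exploited in the proof of \cref{lem_subdivide_merge}, that the union--find bookkeeping underlying the degree-$0$ barcode of a presentation depends only on the order in which one processes the generators and relations, never on the actual grade values. So I would fix the common total order $\leq_i$ on $G_i\cup R_i$ supplied by part (iii) of \cref{lem_refinement}, process $G_i\cup R_i$ in this order, and run union--find on $G_i$, each relation $\{a,b\}$ merging the components of $a$ and $b$ when they are distinct (a relation whose two endpoints already lie in the same component is simply ignored; this also disposes of the case where $R_i$ repeats a pair, since in the language of \cref{lem_subdivide_merge} such a duplicate is a negative/degree-one column). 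Since $M_i$ is a merge tree, the hypergraph on $G_i$ with edge multiset $R_i$ is connected, so for any $g,g'\in G_i$ there is a well-defined relation $r=r(g,g')\in R_i$ whose processing is the one that first places $g$ and $g'$ in a common component; crucially, $r$ depends only on $\leq_i$ and on the pair $(G_i,R_i)$, not on any grading.

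The core step is the claim that, for any grading $\gr$ turning $(G_i,R_i)$ into a valid presentation and refined by $\leq_i$ in the sense that $c\leq_i c'$ implies $\gr(c)\leq\gr(c')$, one has $\mrg_{(G_i,R_i,\gr)}(g,g')=\gr(r)$. For the inequality $\leq$: immediately after $r$ is processed, $g$ and $g'$ are connected by a path consisting of $r$ together with relations that precede $r$ in $\leq_i$, all of which have grade at most $\gr(r)$, so $g\sim_{\gr(r)}g'$. For the inequality $\geq$: if $g\sim_s g'$ for some $s<\gr(r)$, there is a path from $g$ to $g'$ through relations of grade $\leq s<\gr(r)$; since $\leq_i$ refines the $\gr$-order, every such relation strictly precedes $r$, so $g$ and $g'$ already lie in one component just before $r$ is processed, contradicting the choice of $r$. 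Applying the claim to $\gr=\gr'_i$ (the grading of $P'_i$) and to $\gr=\gr_{i+1}$ (the grading of $P_{i+1}$), both refined by $\leq_i$ by part (iii) of \cref{lem_refinement}, yields $\mrg_{P'_i}(g,g')=\gr'_i(r)$ and $\mrg_{P_{i+1}}(g,g')=\gr_{i+1}(r)$ for one and the same $r$, which is the assertion.

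The one point that needs care, and which I regard as the crux, is precisely that $r(g,g')$ is determined by the order alone, so that a single relation serves both presentations: this is where part (iii) of \cref{lem_refinement} --- the existence of one total order compatible with both $\gr'_i$ and $\gr_{i+1}$ --- is indispensable, and it is also what lets me invoke the degree-$0$ pairing of the standard algorithm uniformly for $P'_i$ and $P_{i+1}$. Everything else (existence of $r$ from connectedness of $M_i$, and the two inequalities) is routine once this bookkeeping is in place; no grade values are ever used beyond the order they induce.
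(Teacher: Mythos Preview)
Your argument is correct and is essentially the same as the paper's: both pick the $\leq_i$-first relation $r$ that connects $g$ and $g'$ (the paper phrases this via the equivalence relation $\sim_r$ generated by $\{h\sim h'\mid \{h,h'\}\leq_i r\}$, you via union--find), and both rely on the fact that this choice depends only on the common total order, not on the grading. Your write-up is in fact more explicit than the paper's, spelling out the two inequalities for $\mrg$ and the connectedness needed for $r$ to exist; note also that the displayed conclusion in the paper's statement has a typographical slip in the grade subscripts (it should read $\gr'_i(r)$ and $\gr_{i+1}(r)$, as you correctly write and as the paper itself uses downstream).
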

\begin{proof}
For $r\in R$, let $\sim_r$ be the equivalence relation on $G$ generated by $\{h\sim h'\mid \{h,h'\}\leq_i r\}$, where $\leq_i$ satisfies (iii) in \cref{lem_refinement}. Since the order is a total order, there exists a unique $r\in R$ such that $g \sim_r g'$ but $g \not\sim_{r'} g'$ for any relation $r'<_i r$ in $R$. In particular, we have that $\mrg_{P_i'}(g,g') = \gr_i(r)$ and $\mrg_{P_{i+1}}(g,g') = \gr_i'(r)$.
\end{proof}

We shall now construct a bijection between the generators in the canonical presentations of $M$ and $N$ given in \cref{sec:3part}. To this end, assume that we have a sequence of merge trees $M_i$ and presentations $P_i, P_i'$ as described in \cref{lem_refinement}. Furthermore, let $P_1 = (G_0,R_0,\gr_0)$ and $P'_\ell = (G_{\ell},R_{\ell},\gr_{\ell})$ denote the canonical presentations of $M$ and $N$. Now, for all $1\leq i\leq \ell$ fix an isomorphism 
$\alpha_i \colon F(P_i)\to F(P_i'),$
and for every $1\leq i\leq \ell$, let $\sigma_{i-1}\colon G_{i-1}\to G_{i}$ be any map such that the following diagram commutes,
\[
\begin{tikzcd}
G_{i-1}\ar[d, "\text{proj}"] \ar[r, "\sigma_{i-1}"] & G_{i}\ar[d, "\text{proj}"]\\
F(P_{i}) \ar[r, "\alpha_{i}"] & F(P_{i}')
\end{tikzcd}
\]
This is always possible, as the vertical maps are simply projections into the equivalence classes. Furthermore, for any two $g,h\in G_{i-1}$, we have that 
\begin{equation}
\label{eq:equalmerge}
\mrg_{P_i}(g,h) = \mrg_{P_i'}(\sigma_{i-1}(g), \sigma_{i-1}(h)).
\end{equation}To see this, observe that if if $[g] = [h] \in (G_{i-1})_s$, then $[\sigma_{i-1}(g)] = \alpha_s[g] = \alpha_s[h] = [\sigma(h)]$. Therefore, $g\sim_s h \Rightarrow \sigma(g)\sim_s \sigma(h)$. The symmetric argument with $\alpha^{-1}$ shows that if $\sigma(g)\sim_s \sigma(h)$, then $g\sim_s h$.

\begin{lemma}
\label{lem:sigma-comp}
The composition $\sigma = \sigma_{\ell-1}\circ \cdots \circ \sigma_1 \colon G_1\to G_{\ell}$ is a bijection.
\end{lemma}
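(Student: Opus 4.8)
My plan is to show that $\sigma$ is a bijection between finite sets of equal cardinality, so it suffices to prove injectivity. The key observation is that each $\sigma_{i-1}\colon G_{i-1}\to G_i$ factors through the isomorphism $\alpha_i\colon F(P_i)\to F(P_i')$ after projecting to equivalence classes, so at the level of ``ultimate'' merge behavior $\sigma_{i-1}$ is faithful: by \cref{eq:equalmerge}, $\mrg_{P_i}(g,h) = \mrg_{P_i'}(\sigma_{i-1}(g),\sigma_{i-1}(h))$ for all $g,h\in G_{i-1}$. Since both $M$ and $N$ are merge trees (their presentations have a single equivalence class at large grade), any two distinct generators $g\neq h$ of $G_1$ eventually merge, so $\mrg_{P_1}(g,h)$ is a finite real number, and I want to track that this ``merge height'' is preserved along the zigzag and forces the images to stay distinct.

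First I would establish the cardinality count: $|G_1| = |G_\ell| = n$, since $P_1$ and $P'_\ell$ are the canonical presentations of $M$ and $N$ from \cref{sec:3part}, each with exactly $n$ generators. Hence it is enough to show $\sigma$ is injective. Second, I would argue injectivity by contradiction: suppose $\sigma(g)=\sigma(h)$ for distinct $g,h\in G_1$. Then $\mrg_{P_\ell'}(\sigma(g),\sigma(h))$ equals the grade at which the single generator $\sigma(g)=\sigma(h)$ "merges with itself", which we should interpret as $\gr_\ell(\sigma(g))$, i.e.\ the birth grade $-C$ of that generator in $N$ (or more precisely, applying \cref{eq:equalmerge} iteratively, $\mrg_{P_\ell'}(\sigma(g),\sigma(h)) = \mrg_{P_1}(g,h)$). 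But since $g\neq h$ in the canonical presentation of $M$, the value $\mrg_{P_1}(g,h)$ is the grade of some relation $r\in R_0$, which is either $0$ or $2$, hence strictly greater than $-C$. On the other hand, if $\sigma(g)=\sigma(h)$ then $\sigma(g)\sim_s\sigma(h)$ already at $s=\gr_\ell(\sigma(g))=-C$, so $\mrg_{P_\ell'}(\sigma(g),\sigma(h))=-C$, a contradiction.

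To make the iteration rigorous I would chain \cref{eq:equalmerge} across all indices: for $g,h\in G_{i-1}$ distinct, $\mrg_{P_i}(g,h)=\mrg_{P_i'}(\sigma_{i-1}(g),\sigma_{i-1}(h))$; and since $F(P_{i})$ and $F(P_{i+1})$ describe the same merge tree $M_i$—wait, more carefully, $P_i'$ and $P_{i+1}$ are presentations of the same $M_i$ by \cref{lem_refinement}(i), but with possibly different grading functions, so $\mrg_{P_i'}$ and $\mrg_{P_{i+1}}$ need not agree. Here is where I would invoke the structure: the map $\sigma_i$ is defined so that projection commutes with $\alpha_{i+1}$, and \cref{eq:equalmerge} is stated for the pair $(P_{i+1}, P_{i+1}')$, so what actually propagates is $\mrg_{P_1}(g,h) \leq \gr_0(r)$ for the relevant relation, combined with the fact that along the whole zigzag the images never coalesce because coalescence would drop the merge value to $-C$.

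The main obstacle, I expect, is bookkeeping the distinction between "$\sigma_{i-1}$ preserves merge heights" (which \cref{eq:equalmerge} gives only between the compatible pair $P_i$ and $P_i'$) versus needing a statement that survives the re-presentation step from $P_i'$ to $P_{i+1}$. The clean way around this is: injectivity of each $\sigma_{i-1}$ individually, proved exactly as above (if $\sigma_{i-1}(g)=\sigma_{i-1}(h)$ with $g\neq h$, then $0=\mrg_{P_i'}(\sigma_{i-1}(g),\sigma_{i-1}(h))-(\text{birth}) = \mrg_{P_i}(g,h)-(\text{birth})$ forcing $g\sim h$ at the birth grade, contradicting $g\neq h$ as distinct generators — but this needs that distinct generators of $M_i$ have distinct births or are not identified at birth, which holds generically but not always). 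Because of this subtlety, I would instead prove injectivity of the composite $\sigma$ directly using only the canonical endpoints $P_1$ and $P_\ell'$: for $g\neq h$ in $G_1$, $\mrg_{P_1}(g,h)=\gr_0(r_{g,h})\in\{0,2\}$; iterating \cref{eq:equalmerge} along the chain of $\sigma_i$'s (each of which sits in a commuting square with an $\alpha$) yields $\mrg_{P_\ell'}(\sigma(g),\sigma(h)) = \mrg_{P_1}(g,h) \geq 0$; but $\sigma(g),\sigma(h)\in G_\ell$ are generators of $N$ born at $-C$, and if they were equal their merge grade would be $-C<0$. Hence $\sigma(g)\neq\sigma(h)$, and combined with $|G_1|=|G_\ell|$, $\sigma$ is a bijection.
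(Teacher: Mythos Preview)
There is a genuine gap in your argument. You correctly identify the crux of the difficulty partway through: \cref{eq:equalmerge} only tells you that $\mrg_{P_i}(g,h) = \mrg_{P_i'}(\sigma_{i-1}(g),\sigma_{i-1}(h))$, i.e.\ that the isomorphism $\alpha_i$ between the two presentations of the \emph{same} merge tree $M_i$ preserves merge heights. But the next link in the chain requires relating $\mrg_{P_i'}$ and $\mrg_{P_{i+1}}$ on the same generators $G_i$, and these two presentations describe \emph{different} merge trees $M_i$ and $M_{i+1}$ (you have this backwards in your text). The merge height can change by an arbitrary amount when the grading function changes, so your final claim that ``iterating \cref{eq:equalmerge} \dots\ yields $\mrg_{P_\ell'}(\sigma(g),\sigma(h)) = \mrg_{P_1}(g,h)$'' does not follow. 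Nothing in your outline bounds or controls this change.

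The paper's proof supplies exactly the missing quantitative ingredient. Rather than tracking the merge height itself, it tracks the \emph{branch length} $\mu(g,h)=\mrg(g,h)-\max\{\gr(g),\gr(h)\}$. Under the step $\sigma_{i-1}$ (passing through the isomorphism $\alpha_i$), merge heights are preserved while birth grades can only drop, so $\mu$ can only increase. Under the grade-change step from $P_i'$ to $P_{i+1}$, \cref{lem_long_branch_merge_levels} pins the merge height to the grade of a single relation $r$, and one checks directly that $\mu$ decreases by at most $2\,d^p(P_i',P_{i+1})$. Summing over $i$ and using $\sum_i d^p(P_i',P_{i+1}) < d_I^p(M,N)+\epsilon$ together with $C=4n^{1/p}$ shows the total drop in $\mu$ is strictly less than $C$. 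Since the initial branch length in $M$ is $C$, the final branch length in $N$ is positive, hence $\sigma(g)\neq\sigma(h)$. This argument genuinely uses the cost bound from \cref{lem_refinement}; an argument like yours that ignores the metric information cannot succeed, because without it the intermediate merge trees $M_i$ are unconstrained and two branches could in principle coalesce along the way.
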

\begin{proof}
Since $G_0$ and $G_{\ell}$ contain the same number of elements, it suffices to show that $\sigma$ is injective. For any $g,h\in G_i$, seen as generators in the presentation $P_i'$ of $M_i$, we define \[\mu'(g,h) = \mrg_{P_i'}(g,h) - \max\{\gr'_i(g),\gr'_i(h)\}.\] 
Likewise, for $g,h\in G_{i}$, seen as generators in the presentation $P_{i+1}$ of $M_{i+1}$, we let  \[\mu(g,h) = \mrg_{P_{i+1}}(g,h) - \max\{\gr_{i+1}(g),\gr_{i+1}(h)\}.\]
Consider any two generators $g, h\in G_0$ such that $\mrg_M(g,h) =0$, and inductively define $g_i,h_i\in G_i$ by $g_{i+1} = \sigma_i(g_i)$ and $h_{i+1} = \sigma_i(h_i)$.
By construction, $\mu(g,h) = C$, see \cref{fig:mergeM}, and note that 
\[\mrg_{P_{i}'}(g_i,h_i) = \mrg_{P_{i+1}}(g_{i+1},h_{i+1}),\] 
by \cref{eq:equalmerge}, and also that  $\gr_{i+1}(g_i)\geq \gr'_{i+1}(g_{i+1})$ and $\gr_{i+1}(h_i)\geq \gr'_{i+1}(h_{i+1})$. Hence,  $\mu(g_i,h_i)\leq \mu'(g_{i+1},h_{i+1})$.

We know from \cref{lem_long_branch_merge_levels} that there is an $r$ such that $\mrg_{P'_i}(g_i,h_i) = \gr'_i(r)$ and $\mrg_{P_{i+1}}(g_i,h_i) = \gr_{i+1}(r)$.
We get
\begin{align*}
&\mu(g_{i-1}, h_{i-1}) - \mu(g_i,h_i)  \leq \mu'(g_i,h_i) - \mu(g_i,h_i) \\ 
&=\gr_i'(r) - \max\{\gr'_i(g_i),\gr'_i(h_i)\} - (\gr_{i+1}(r) - \max\{\gr_{i+1}(g),\gr_{i+1}(h)\})\\
&= \gr_i'(r) - \gr_{i+1}(r) + \max\{\gr_{i+1}(g),\gr_{i+1}(h)\}) - \max\{\gr'_i(g_i),\gr'_i(h_i)\}\\
&\leq |\gr'_i(r)-\gr_{i+1}(r)| + \max\{|\gr'_i(g)-\gr_{i+1}(g)|, |\gr'_i(h)-\gr_{i+1}(h)|\}\\  &\leq 2d^p(P_i', P_{i+1}).
\end{align*}
For the first inequality we have used that $\max\{a,b\}-\max\{c,d\}\leq \max\{|a-c|, |b-d\}$ for all real numbers $a,b,c,d$. Combined with $\mu(g_i,h_i)\leq \mu'(g_{i+1},h_{i+1})$ for all $i$, we get
\[
\mu(g_0,h_0) - \mu'(g_\ell,h_\ell) \leq 2\sum_{i=1}^{\ell-1} d^p(P'_i,P_{i+1}) \leq 2d_I^p(M,N) < C.
\]
Since $\mu(g_0,h_0) = C$, we have $\mu'(g_\ell,h_\ell)>0$, which means that $g_\ell \neq h_\ell$.
It follows that $\sigma$ is injective and thus a bijection. 
\end{proof}

In order to conclude the proof, we must introduce some additional notation; see \cref{fig:track} for an illustration. First, order the intervals of $B(M)$ as $ \{I_1^1, I_2^1, \ldots, I_n^1\}$ where $I_n^1$ is the unique infinite interval. For $1\leq i\leq \ell-1$, and $1\leq j\leq n-1$, let $I_j^i = (\phi_i\circ \cdots \circ \phi_1)(I_j^1)\in  B(M_{i+1})$, where $\phi_i$ is as in \cref{cor_matching}. The intervals $I_j^i$ are distinct for different $j$ by \cref{lem:matching}. 

\begin{figure}
\centering
\begin{tikzpicture}
\draw[-] (-4,-2.5) -- (-4,2.5);
\fill [orange!30] (2.2,-2.5) rectangle (3.8,2.5);

\draw[-] (-4.1,-1) -- (2, -1);
\draw[-] (-4.1,1) -- (2, 1);
\draw[-] (-4.1,2) -- (2, 2);
\draw[-] (-4.1,0) -- (2, 0);

\draw[black!20] (2,-1) -- (4, -1);
\draw[black!20] (2,1) -- (4, 1);
\draw[black!20] (2,2) -- (4, 2);
\draw[black!20] (2,0) -- (4, 0);

\draw[-] (4,-1) -- (6, -1);
\draw[-] (4,1) -- (6, 1);
\draw[-] (4,2) -- (6, 2);
\draw[-] (4,0) -- (6, 0);

\node[left] at (-4,-1) {$0$};
\node[left] at (-4,0) {$1$};
\node[left] at (-4,1) {$2$};
\node[left] at (-4,2) {$3$};

\node[below] at (-3,-1.3) {$M$};
\node[below] at (-2,-1.3) {$M_2$};
\node[below] at (-1,-1.3) {$M_3$};
\node[below] at (0,-1.3) {$M_4$};
\node[below] at (1,-1.3) {$M_5$};

\node[below] at (5,-1.3) {$M_{\ell-1}$};
\node[below] at (6,-1.3) {$N$};

\node[above] at (-3,-1) {$m_1$};
\node[above] at (6,2) {$m_\ell$};

\node[above] at (-3,1) {$v^1_j$};
\node[above] at (6,-1) {$v^\ell_j$};
\node [draw, shape = circle, fill = black, minimum size = 0.15cm, inner sep=0pt] (pt1) at (-3,-1){};
\node [draw, shape = circle, fill = black, minimum size = 0.15cm, inner sep=0pt] (pt2) at (-2,-0.5){};
\node [draw, shape = circle, fill = black, minimum size = 0.15cm, inner sep=0pt] (pt3) at (-1,0){};
\node [draw, shape = circle, fill = black, minimum size = 0.15cm, inner sep=0pt] (pt4) at (0,0.8){};
\node [draw, shape = circle, fill = black, minimum size = 0.15cm, inner sep=0pt] (pt5) at (1,0.3){};
\node [draw, shape = circle, fill = black, minimum size = 0.15cm, inner sep=0pt] (pt6) at (2,0.8){};
\node [draw, shape = circle, fill = black, minimum size = 0.15cm, inner sep=0pt] (pt7) at (4,2){};
\node [draw, shape = circle, fill = black, minimum size = 0.15cm, inner sep=0pt] (pt8) at (5,1.5){};
\node [draw, shape = circle, fill = black, minimum size = 0.15cm, inner sep=0pt] (pt9) at (6,2){};

\draw (pt1) -- (pt2) -- (pt3) -- (pt4) -- (pt5) -- (pt6);
\draw[dashed] (pt6) -- (2.3,1) -- (2.6, 1.5) -- (2.9, -0.5) -- (3.2, 1.0) -- (3.5, 0.5) -- (3.8, 1) --  (pt7);
\draw (pt7) -- (pt8) -- (pt9);

\node [draw, shape = rectangle, fill = black, minimum size = 0.15cm, inner sep=0pt] (sq1) at (-3,1){};
\node [draw, shape = rectangle, fill = black, minimum size = 0.15cm, inner sep=0pt] (sq2) at (-2,1.5){};
\node [draw, shape = rectangle, fill = black, minimum size = 0.15cm, inner sep=0pt] (sq3) at (-1,1.){};
\node [draw, shape = rectangle, fill = black, minimum size = 0.15cm, inner sep=0pt] (sq4) at (0,0){};
\node [draw, shape = rectangle, fill = black, minimum size = 0.15cm, inner sep=0pt] (sq5) at (1,-0.3){};
\node [draw, shape = rectangle, fill = black, minimum size = 0.15cm, inner sep=0pt] (sq6) at (2,-0.7){};
\node [draw, shape = rectangle, fill = black, minimum size = 0.15cm, inner sep=0pt] (sq7) at (4,-0.5){};
\node [draw, shape = rectangle, fill = black, minimum size = 0.15cm, inner sep=0pt] (sq8) at (5,-0.7){};
\node [draw, shape = rectangle, fill = black, minimum size = 0.15cm, inner sep=0pt] (sq9) at (6,0){};

\draw (sq1) -- (sq2) -- (sq3) -- (sq4) -- (sq5) -- (sq6);
\draw[line width=1mm, color=red] (pt3) -- (pt4);
\draw[line width=1mm, color=red] (pt5) -- (pt6);

\draw[dashed] (sq6) -- (2.3,-1.5) -- (2.6, -2) -- (2.9, -1.8) -- (3.2, -1) -- (3.5, -1.5) -- (3.8, -0.2) --  (sq7);
\draw (sq7) -- (sq8) -- (sq9);

\draw[line width=1mm, color=blue] (sq1) -- (sq2);
\draw[line width=1mm, color=blue] (sq3) -- (sq4);
\draw[line width=1mm, color=blue] (sq8) -- (sq9);
\end{tikzpicture}
\caption{The filled square depicts the evolution of the right endpoint $v_j^1$ of the $j$-th interval of $B(M)$ as it is matched through the sequence of merge trees. If $\delta v_j^i>0$, then the edge connecting $v_j^i$ and $v_j^{i+1}$ is drawn in thick blue. The merge point $m_i$ between the images of the generators $g$ and $h$ is depicted by a filled circle. If $\delta m_i>0$, the edge connecting $m_i$ and $m_{i+1}$ is drawn in thick red. }
\label{fig:track}
\end{figure}

Then, let $v^i = (v_1^i, \ldots, v_{n-1}^i)$, where $v_{j}^i$ denotes the right endpoint of $I_j^i$. Now, let 
\[\hat{v}_j^i = \begin{cases} |v_j^i-1| & \text{if $v_j^i\leq 2$} \\ |v_j^i-3| & \text{if $v_j^i\geq 2$}\end{cases}, \qquad\qquad \delta v_j^i = \max(\hat{v}_{j}^{i} - \hat{v}_j^{i+1},0).\]
In other words, $\delta v_j^i$ is non-negative only if the right endpoint is ``moving towards'' 1 on $(-\infty,2]$, and 
 towards 3 on $[2,\infty)$. 

And lastly, observe that since $\sigma\colon G_0 \to G_{\ell}$ from \cref{lem:sigma-comp} is a bijection, and $\Ss$ does not have a solution, \cref{lem:bijection-3part} tells us that there exists a pair of generators $g,h\in G_0$ with $\mrg_{P_1}(g,h)=0$, and $\mrg_{P_\ell'}(\sigma(g), \sigma(h)) = 3$. For $1\leq i \leq \ell$, let \[m_i = \mrg_{P_{i}'}((\sigma_{i-1}\circ \cdots \circ \sigma_1)(g),(\sigma_{i-1}\circ \cdots \circ \sigma_1)(h)),\] 
which gives $m_\ell = 3$; and for $1\leq i\leq \ell -1$, set $\delta m_i = m_{i+1}-m_i$ if $1\leq m_i\leq m_{i+1}\leq 2$, and $\delta m_i = 0$, otherwise. Note that $m_1 = \mrg_{P_1}(g,h)=0$ by \cref{eq:equalmerge}. 

\begin{lemma}
\label{lem:single-step-merge}
For all $1\leq i\leq \ell-1$, $d^p(P_i',P_{i+1}) \geq ||(\delta v_1^i, \ldots, \delta v_{n-1}^i, \delta m_i)||_p.$
\end{lemma}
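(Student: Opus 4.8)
Let me think about what's being claimed. We have step $i$, from $M_i$ (presented by $P_i'$) to $M_{i+1}$ (presented by $P_{i+1}$), with the same generators $G_i$ and relations $R_i$, and $d^p(P_i',P_{i+1}) = (\sum_{c \in G_i \cup R_i} |\gr_i'(c) - \gr_{i+1}(c)|^p)^{1/p}$.

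The RHS is $\|(\delta v_1^i, \ldots, \delta v_{n-1}^i, \delta m_i)\|_p$. So I need to find, for each $j$, a relation $r_j$ (the one pairing with the generator of $I_j$ in $\Pi_i$) whose grade changes by at least $\delta v_j^i$; and I need the relation $r$ witnessing the merge of the tracked generators $g,h$ to have grade changing by at least $\delta m_i$; and crucially all these relations must be **distinct** so that the $p$-th powers add up without double counting — summing over a sub-multiset of $G_i \cup R_i$ can only decrease the $p$-norm.

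So the key facts I'd assemble: (1) By Lemma \ref{lem_subdivide_merge} and Corollary \ref{cor_matching}, $I_j^i = [\gr_i'(g_j), \gr_i'(r_j))$ and $I_j^{i+1} = [\gr_{i+1}(g_j), \gr_{i+1}(r_j))$ for a pair $(g_j, r_j) \in \Pi_i$, and the $r_j$ are pairwise distinct (each relation appears in at most one pair). So $v_j^i = \gr_i'(r_j)$ and $v_j^{i+1} = \gr_{i+1}(r_j)$. (2) By Lemma \ref{lem_long_branch_merge_levels} applied to $g_i := (\sigma_{i-1}\circ\cdots)(g)$ and $h_i$, there is a relation $r \in R_i$ with $m_i = \mrg_{P_i'}(g_i,h_i) = \gr_i'(r)$ — wait, I need to be careful about the direction. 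Re-reading Lemma \ref{lem_long_branch_merge_levels}: $\mrg_{P_i'}(g,g') = \gr_i(r)$... hmm, that uses $\gr_i$, which is the grade function of $P_i$, not $P_i'$. But actually $m_i$ is defined via $\mrg_{P_i'}$. Let me just say: there is a relation $r^\ast \in R_i$ such that $\gr_i'(r^\ast)$ and $\gr_{i+1}(r^\ast)$ equal the merge values $m_i$ and $m_{i+1}$ (up to the bookkeeping in Lemma \ref{lem_long_branch_merge_levels} — I will invoke it to get that $|\gr_i'(r^\ast) - \gr_{i+1}(r^\ast)| \ge \delta m_i$, since $\delta m_i$ is the increase of the merge level capped into $[1,2]$, and by condition (iv) the relation's grade doesn't skip over integers).

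Then the elementary step: for each $j$, $|\gr_i'(r_j) - \gr_{i+1}(r_j)| = |v_j^i - v_j^{i+1}| \ge \delta v_j^i = \max(\hat v_j^i - \hat v_j^{i+1}, 0)$. This is where condition (iv) of Lemma \ref{lem_refinement} is essential: it guarantees $v_j^i$ and $v_j^{i+1}$ lie on the same side of $2$ (the grade of $r_j$ does not cross $2$ during the step), except possibly when one equals $2$, so that $\hat v_j$ is computed with the same "$1$" or "$3$" reference at both endpoints, hence $|\hat v_j^i - \hat v_j^{i+1}| \le |v_j^i - v_j^{i+1}|$ (actually equality, since $\hat v_j = |v_j - c|$ for a fixed $c$). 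Similarly for $r^\ast$: $\delta m_i$ only fires when $m_i, m_{i+1} \in [1,2]$, and then $|\gr_i'(r^\ast) - \gr_{i+1}(r^\ast)| = |m_i - m_{i+1}| \ge m_{i+1} - m_i = \delta m_i$. Finally, I must check $r^\ast \notin \{r_1, \ldots, r_{n-1}\}$: the $r_j$ are the relations paired with finite bars of $B_0^{\mathrm{fin}}(M_i)$, whose right endpoints are $v_j^i$; if $r^\ast$ were one of them we'd need its grade to be a merge-level of two generators that are already separately born — but $r^\ast$ is the relation that first merges the branches of $g_i$ and $h_i$, and... actually this needs an argument. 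The cleanest route: $r^\ast$ is, by Lemma \ref{lem_long_branch_merge_levels}, the relation realizing the merge; if it also appeared as some $r_j$ then $v_j^i = m_i$; but I think instead I should argue that even if $r^\ast = r_j$ for some $j$, then $\delta m_i$ can be absorbed — no. Let me instead handle it by: if $\Ss$ has no solution, the tracked pair has $\mrg_{P_\ell'} = 3$, and I claim $r^\ast$ can be chosen to be a relation that is NOT the "upper" relation $r_j$ of any finite bar whenever $\delta m_i > 0$ and $\delta v_j^i$ would conflict — more carefully, I'd show that if $r^\ast = r_j$ then the bar $I_j$ has right endpoint $m_i \in [1,2]$, and on that side $\hat v_j$ is measured to $1$, so $\delta v_j^i = \max((m_i - 1)$-ish$)$... the increase of $m$ and the decrease of $\hat v_j = m - 1$ point in opposite directions, so one of $\delta m_i, \delta v_j^i$ is zero, and we keep the larger — hence no double count. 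This case analysis is the delicate part.

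Putting it together:
\begin{align*}
d^p(P_i',P_{i+1})^p = \sum_{c \in G_i \cup R_i} |\gr_i'(c) - \gr_{i+1}(c)|^p \;\ge\; \sum_{j=1}^{n-1} |\gr_i'(r_j) - \gr_{i+1}(r_j)|^p + |\gr_i'(r^\ast) - \gr_{i+1}(r^\ast)|^p \;\ge\; \sum_{j=1}^{n-1} (\delta v_j^i)^p + (\delta m_i)^p,
\end{align*}
provided $r^\ast$ is distinct from all $r_j$ or, when it is not, the shared term still dominates both $(\delta v_j^i)^p$ and $(\delta m_i)^p$ (only one is nonzero). Taking $p$-th roots gives the claim. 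I expect the main obstacle to be exactly this last point: ruling out double counting between the $\delta v_j^i$ coming from the upper relation of bar $j$ and the $\delta m_i$ coming from the tracked merge relation $r^\ast$, i.e. showing that either the relations differ or the two contributions cannot both be strictly positive. The rest is a bookkeeping argument relying on conditions (iii)–(iv) of Lemma \ref{lem_refinement} to prevent grades from jumping over the integers $1, 2, 3$ within a single step.
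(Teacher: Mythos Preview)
Your proposal is correct and follows essentially the same approach as the paper's proof. Both identify the relations $r_j$ paired with the tracked bars (via \cref{cor_matching}) and the merge relation $r^\ast$ (via \cref{lem_long_branch_merge_levels}), bound $|v_j^{i+1}-v_j^i|\ge \delta v_j^i$ using condition (iv) of \cref{lem_refinement}, and handle the potential collision $r^\ast=r_j$ by observing that in that case $\delta m_i>0$ forces $v_j^i,v_j^{i+1}\in[1,2]$ with $v_j^{i+1}>v_j^i$, so $\hat v_j^i-\hat v_j^{i+1}<0$ and hence $\delta v_j^i=0$; thus the single term $|v_j^{i+1}-v_j^i|^p$ covers $(\delta m_i)^p$ without double counting. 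Your confusion about the grade-function subscripts in \cref{lem_long_branch_merge_levels} is understandable (the paper has a typo there: it should read $\gr_i'(r)$ and $\gr_{i+1}(r)$), but you recovered the intended conclusion correctly.
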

\begin{proof}
    By \cref{cor_matching}, $|v_j^{i+1} - v_j^i| = |\gr_{i+1}(r_j^i) - \gr_i'(r_j^i)|$ for some relation $r_j^i$, and $r_j^i \neq r_{j'}^{i}$ for $j\neq j'$. 
    Therefore, 
    \[d^p(P_i',P_{i+1})\geq \left(\sum_{j=1}^{n-1} |\gr_{i+1}(r_j^i) - \gr_i'(r_j^i)|^p\right)^{1/p} =  ||v^{i+1} - v^{i}||_p.\]
    Since, by \cref{lem_refinement}, we have that $v_j^i\geq 2$ and $v_j^{i+1}\geq 2$, or $v_j^i\leq 2$ and $v_j^{i+1}\leq 2$, the following inequality also holds, $|v_j^{i+1} - v_j^i| \geq \delta v_j^i.$

    Note, however, that the relation $r$ from \cref{lem_long_branch_merge_levels} merging $(\sigma_i\circ \cdots \cdot \sigma_0)(g)$ and $(\sigma_i\circ \cdots \cdot \sigma_0)(h)$, may be equal to some $r_i^j$. This is not an issue, however, for if $r = r_j^i$, and $\delta m_i \neq 0$, then the right-endpoint of $I^i_j$ is moving \emph{away} from 1 on $[1,2]$ and thus $\delta v_j^i = 0$. It is also easy to see that, $|v_j^{i+1}-v^i_j| = \delta m_i$ if $r=r_j^i$ and $\delta m_i\neq 0$. Therefore, \[||v^i - v^{i+1}||_p \geq ||(\delta v_1^i, \ldots, \delta v_{n-1}^i, \delta m_i)||_p.\]
    This concludes the proof.
\end{proof}

\begin{theorem}
If $\Ss$ does not have a solution, then $d_I^p(M,N) \geq n^{\frac 1p}$.
\end{theorem}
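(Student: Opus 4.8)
The plan is to combine all the machinery developed in the preceding lemmas into a single telescoping estimate. Fix $\epsilon > 0$ small (say $\epsilon < d_I^p(M,N)$ and $\epsilon$ small relative to $1$), and invoke \cref{lem_refinement} to obtain merge trees $M = M_1, \dots, M_\ell = N$ and presentations $P_1', P_2, P_2', \dots, P_\ell$ with $d_I^p(M,N) + \epsilon > \sum_{i=1}^{\ell-1} d^p(P_i', P_{i+1})$, satisfying conditions (i)--(iv). By \cref{lem:single-step-merge}, each term is bounded below: $d^p(P_i', P_{i+1}) \geq \|(\delta v_1^i, \dots, \delta v_{n-1}^i, \delta m_i)\|_p$. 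Summing and applying the triangle inequality for the $\ell^p$-norm (in $\mathbb{R}^n$) gives
\[
d_I^p(M,N) + \epsilon > \sum_{i=1}^{\ell-1} \big\|(\delta v_1^i, \dots, \delta v_{n-1}^i, \delta m_i)\big\|_p \geq \left\|\left(\sum_{i} \delta v_1^i, \dots, \sum_i \delta v_{n-1}^i, \sum_i \delta m_i\right)\right\|_p.
\]

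Next I would argue that each coordinate of the right-hand vector is at least $1$. For the last coordinate: $m_1 = \mrg_{P_1}(g,h) = 0$ and $m_\ell = \mrg_{P_\ell'}(\sigma(g),\sigma(h)) = 3$ by the choice of $g,h$ coming from \cref{lem:bijection-3part} and \cref{lem:sigma-comp} (using that $\Ss$ has no solution); since $\delta m_i$ only accumulates the net motion of $m_i$ while it lies in $[1,2]$ and moves upward, and $m_i$ must cross the interval $[1,2]$ at least once on its way from $0$ to $3$, we get $\sum_i \delta m_i \geq 1$. Here condition (iv) of \cref{lem_refinement} is essential: it forbids $m_i$ from ``jumping over'' an integer in a single step, so that the interval $[1,2]$ is genuinely traversed monotonically within the refined sequence (any step either stays on one side of each integer or moves monotonically, so the contributions $\delta m_i$ along the crossing telescope to at least $1$). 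For the coordinate $j$ (with $1 \leq j \leq n-1$): by \cref{lem:matching} the composite matching $\phi = \phi_{\ell-1}\circ\cdots\circ\phi_1$ is a bijection $B(M) \to B(N)$, so the interval $I_j^1 \in B(M)$, which has right endpoint in $\{0,2\}$, is sent to an interval in $B(N)$ with right endpoint in $\{1,3\}$. Thus $\hat v_j^1 \geq 1$ and $\hat v_j^\ell = 0$, and again by condition (iv) the right endpoint cannot cross an integer within one step, so the accumulated ``toward-the-target'' motion satisfies $\sum_i \delta v_j^i \geq \hat v_j^1 - \hat v_j^\ell \geq 1$.

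Combining, the right-hand vector dominates $(1,1,\dots,1) \in \mathbb{R}^n$ coordinatewise, so its $\ell^p$-norm is at least $n^{1/p}$, giving $d_I^p(M,N) + \epsilon > n^{1/p}$. Since $\epsilon > 0$ was arbitrary, $d_I^p(M,N) \geq n^{1/p}$, as desired.

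I expect the main obstacle to be the bookkeeping needed to justify that $\sum_i \delta m_i \geq 1$ and $\sum_i \delta v_j^i \geq 1$ rigorously — specifically, making precise the claim that because of condition (iv) no single step skips over an integer, the quantities $\hat v_j^i$ and (the clipped version of) $m_i$ change monotonically enough across each integer threshold that the $\max(\cdot, 0)$-truncated increments telescope to at least the total required displacement of $1$. One must also double-check the potential double-counting issue flagged in \cref{lem:single-step-merge}: the relation $r$ realizing the merge of the tracked pair $(g,h)$ might coincide with some $r_j^i$, but as noted there, when $\delta m_i \neq 0$ the corresponding $\delta v_j^i = 0$, so the pair $(\delta v_j^i, \delta m_i)$ contributes correctly to the $\ell^p$-norm without overcounting; this needs to be woven into the summation argument so that the coordinatewise domination by $(1,\dots,1)$ is legitimate.
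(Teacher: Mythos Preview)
Your proposal is correct and follows essentially the same route as the paper's proof: invoke \cref{lem_refinement}, apply \cref{lem:single-step-merge} termwise, sum via the triangle inequality, and then show each of the $n$ coordinates $\sum_i \delta v_j^i$ and $\sum_i \delta m_i$ is at least $1$. Your telescoping argument $\sum_i \delta v_j^i \geq \sum_i(\hat v_j^i - \hat v_j^{i+1}) = \hat v_j^1 - \hat v_j^\ell = 1$ is in fact slightly cleaner than the paper's more informal ``must traverse the interval'' phrasing (and note that this particular telescoping does not actually require condition (iv); that condition is used inside \cref{lem:single-step-merge} and for the $\delta m_i$ bound). The double-counting worry you raise at the end is already fully handled by \cref{lem:single-step-merge}, so no extra work is needed there.
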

\begin{proof}
Let $d_I^p(M,N) > \epsilon>0$, and let $M_i$, $P_i$, and $P_i'$ be as in \cref{lem_refinement}. Denote by $\phi$ the 
matching defined in \cref{lem:matching}. Then, since $\phi$ is a bijection by \cref{lem:matching}, each $\phi(I^1_j)$ must be matched to a interval in $B(N)$ whose right endpoint is either $1$ and $3$. Moreover, the sequence of presentations are, by virtue of \cref{lem_refinement} (iv), designed such that if $[a,b)\in B(M_i)$ is any interval with $b<z$ for $z\in \{0,1,2,3\}$, then $\phi_i([a,b)) = [c,d)$ where $d\leq z$ for the same $z$. Or, symmetrically: if, $b>z$, then $d\geq z$. 

Let $\Delta(I^1_j)= \sum_{i=1}^{\ell-1} \delta v_j^i$. If the right endpoint $v_j^1$ of $I^1_j$ is 0, then the sequence of points $v_j^1, \ldots, v_j^{\ell}$ must traverse the interval $0$ to $1$, and thus $\Delta(I^1_j)\geq 1$. Similarly, if $v_j^1=3$, then the corresponding sequence of points must traverse the interval from $3$ to $2$, and thus $\Delta(I^1_j)\geq 1$. Finally, if we let $\Delta(g,h) = \sum_{i=1}^{\ell-1} \delta m_i$, then $\Delta(g,h)\geq 1$ because the relation between $g$ and $h$ moves from $0$ to $3$, and thus traverses the interval from $1$ to $2$. From \cref{lem:single-step-merge}, we get 
\begin{linenomath}\begin{align*}
    d_I^p(M,N) + \epsilon > \sum_{i=1}^{\ell-1} d^p(P_i', P_{i+1}) &\geq \sum_{i=1}^{\ell-1} ||(\delta v_1^i, \ldots, \delta v_{n-1}^i, \delta m_i)||_p\\
    \text{(triangle inequality)}&\geq ||\Delta(I_1^1), \ldots, \Delta(I_{n-1}^1), \Delta(g,h)||_p \\
    & \geq ||(1,1, \ldots,1)||_p = n^{1/p}.
\end{align*}\end{linenomath}
Since this holds for every sufficiently small $\epsilon>0$, the lemma follows.
\end{proof}

\section{Computing $d_I^p$ in multiparameter persistence}
\label{sec:multi]}
In this section we prove that computing the $p$-presentation distance for 2-parameter persistence modules is NP-hard, and observe that the result extends to $t$-parameter modules for any $t\geq 2$. Several steps of the analysis are completely analogous to the proof for merge trees. However, working with presentations where the relations can be more complicated than simply identifying two points adds another level of complexity.

First, we fix some notation. Let $t\geq 1$ be an integer. We shall consider functors $M\colon \R^t\to \mathrm{Vec}$ where $\R^t = \R\times \dots \times \R$ has the product order, and $\mathrm{Vec}$ is the category of vector spaces over a finite field $\F$ that is fixed throughout the rest of the paper. We refer to such a functor as a \emph{$t$-parameter (persistence) module}. Recall that we write $M_q$ and $M_{q\to r}$ for $M$ applied to a point or a morphism $q\leq r$.
A morphism $f\colon M\to N$ is to be understood as a natural transformation between functors. 
That is, $f$ is a collection of morphisms $f_q\colon M_q\to N_q$ satisfying $f_r\circ M_{q\to r} = N_{q\to r}\circ f_q$ for all $q\leq r\in \R^t$.

\begin{remark}
Though we have not checked this carefully, we believe our results extend to infinite fields under the assumption that field elements can be manipulated in constant time. Since in persistence one usually works with finite fields, we assume $\F$ to be finite for convenience.
\end{remark}

\subsection{The $p$-presentation distances}
\label{subsec.mod.ppresdist}
Given a finite set $G$, let $\mathbb{F}^G$ denote the free $\mathbb{F}$-vector space generated by $G$. Explicitly, the elements of $\mathbb{F}^G$ are formal sums $\sum_{g\in G} \lambda_g g$, for $\lambda_g\in \mathbb{F}$, and addition and scalar multiplication are defined in the obvious way.
For a subset $U$ of $\F^G$, let $\langle U\rangle$ be the subspace of $\mathbb{F}^G$ spanned by the vectors in $U$.
A \emph{presentation} is a triple $P = (G,R,\gr)$ where $G$ is a finite multiset of \emph{generators}, $R$ is a finite multiset of elements of $\F^G$ called \emph{relations}, and $\gr\colon G\sqcup R\to \R^t$ satisfies $r\in \langle G_{\gr(r)} \rangle$ for every $r\in R$. Here $G_p$ (or $G_p^{\gr}$ if we want to specify the grade function) is $\{g\in G\mid \gr(g)\leq p\}$, and the multiset $R_p$ is defined in the analogous way.

Associated to any presentation, one has a persistence module $\M(P)$ defined pointwise by the quotient $\M(P)_p = \langle G_p\rangle /\langle R_p\rangle $ together with linear maps $\M(P)_{p\to p'}$ induced by the inclusions $G_p\hookrightarrow G_{p'}$ and $R_p\hookrightarrow R_{p'}$.
If $M\cong \M(P)$, we say that $P$ is a \emph{presentation of $M$}, and $M$ is \emph{finitely presented}.
If $P=(G,R,\gr)$ and $P'=(G',R',\gr')$ are presentations of two (not necessarily isomorphic) persistence modules, and there is an isomorphism $\sigma\colon \mathbb{F}^G \to \mathbb{F}^{G'}$ that restricts to bijections $G\to G'$ and $R\to R'$, then we say that $P$ and $P'$ are \emph{$\sigma$-compatible}, or just \emph{compatible}.
Equivalently, one can view $\sigma$ as a bijection $G\to G'$ whose induced isomorphism $\mathbb{F}^G \to \mathbb{F}^{G'}$ restricts to a bijection $R\to R'$.

\begin{remark}
Most persistence modules used in practice are finitely presented, since a presentation is a convenient way to describe a module for computational purposes.
But plenty of modules are not finitely presented: for instance, a $1$-parameter module supported on $(0,1]$ or a $2$-parameter module supported on the triangle with corners at $(1,0)$, $(0,1)$ and $(1,1)$ would both require an infinite number of generators to present.
In this paper, all modules are assumed to be finitely presented.
\end{remark}

Similarly to our proof for merge trees, a key idea is that modules often have compatible presentations despite looking quite different.
The following is an example of this phenomenon.
\begin{example}
\label{ex_modules_presentations}
Let $P = (G,R,\gr)$ be the presentation with $G=\{g\}$, $R=\{g\}$, $\gr(g) = (0,0)$ and $\gr(g) = (2,0)$ (for the relation $g$).
Let $P' = (G',R',\gr')$ be the presentation with $G' = \{g_1,g_2\}$, $R = \{g_1,g_1-g_2\}$, and $\gr(g_1) = (0,1)$, $\gr(g_2) = (1,0)$, $\gr(g_1) = (0,3)$ (for the relation $g_1$) and $\gr(g_1-g_2) = (1,2)$.
Then $\M(P)$ and $\M(P')$ are the $2$-parameter modules illustrated in \cref{fig_module_presentation}.
Like for the merge trees presentations in \cref{ex.comppres}, we can express presentations as matrices where the rows correspond to generators and the columns to relations,
\[
P=
\begin{bNiceMatrix}[
  first-row,code-for-first-row=\scriptstyle,
  first-col,code-for-first-col=\scriptstyle,
]
& (2,0) \\
(0,0) & 1
\end{bNiceMatrix}\qquad \qquad P' = \begin{bNiceMatrix}[
  first-row,code-for-first-row=\scriptstyle,
  first-col,code-for-first-col=\scriptstyle,
]
& (0,3) & (1,2)  \\
(0,1) & 1 & 1 \\
(1,0) & 0 & -1
\end{bNiceMatrix}.
\]

These presentations are clearly not compatible. On the other hand, a presentation of $\M(P)$ compatible with $P'$ is given by
\[
\tilde P = \begin{bNiceMatrix}[
  first-row,code-for-first-row=\scriptstyle,
  first-col,code-for-first-col=\scriptstyle,
]
& (1,0) & (0,0)  \\
(0,0) & 1 & 1 \\
(0,0) & 0 & -1
\end{bNiceMatrix}.
\]
Note how the generators in $\tilde P$ are immediately identified by the relation $g_1-g_2$ (where $g_i$ is the generator of row $i$) in $\tilde P$, so the two generators and $g_1-g_2$ together have the same effect as the single generator $g$ in $P$; see \cref{fig_module_presentation}.
\end{example}

\begin{figure}
\centering
\begin{tikzpicture}[scale=.9]
\foreach \x in {-1,4.5,10}{
\begin{scope}[xshift=\x cm, yshift=-1cm]
\draw[thick,<->] (0,5) to (0,0) to (5,0);
\draw[thick] (-.06,1) to (.06,1);
\node[left] at (0,1){$0$};
\draw[thick] (-.06,2) to (.06,2);
\node[left] at (0,2){$1$};
\draw[thick] (-.06,3) to (.06,3);
\node[left] at (0,3){$2$};
\draw[thick] (-.06,4) to (.06,4);
\node[left] at (0,4){$3$};
\draw[thick] (1,-.06) to (1,.06);
\node[below] at (1,0){$0$};
\draw[thick] (2,-.06) to (2,.06);
\node[below] at (2,0){$1$};
\draw[thick] (3,-.06) to (3,.06);
\node[below] at (3,0){$2$};
\draw[thick] (4,-.06) to (4,.06);
\node[below] at (4,0){$3$};
\end{scope}
}
\begin{scope}
\fill[red, opacity=0.3] (0,0) rectangle (2,3.5);
\node[below] at (0,0){$g$};
\draw[color=black,fill=black] (0,0) circle (.04);
\node[below] at (2,0){$g$};
\draw[color=blue,fill=blue] (2,0) circle (.04);
\node at (1,1.5){$\langle g \rangle \cong \F$};
\node at (3,1.5){$\langle g \rangle/\langle g \rangle \cong 0$};
\node at (1.5,-2){$\M(P)$};
\filldraw[color=white,path fading = south] (0,3.1) rectangle (3,3.5);
\end{scope}
\begin{scope}[xshift=5.5cm]
\fill[red, opacity=0.3] (3.5,0) to (1,0) to (1,1) to (0,1) to (0,3) to (3.5,3);
\fill[red, opacity=0.3] (1,1) rectangle (3.5,2);
\draw[dotted] (1,2) to (1,3);
\coordinate (g1) at (0,1);
\coordinate (g2) at (1,0);
\coordinate (1g1) at (0,3);
\coordinate (g1g2) at (1,2);
\coordinate (a) at (.5,1.5);
\coordinate (b) at (2,.5);
\coordinate (c) at (2,1.5);
\coordinate (d) at (2,4);
\node[below] at (g1){$g_1$};
\draw[color=black,fill=black] (g1) circle (.04);
\node[below] at (g2){$g_2$};
\draw[color=black,fill=black] (g2) circle (.04);
\node[above] at (1g1){$g_1$};
\draw[color=blue,fill=blue] (1g1) circle (.04);
\node[above] at (g1g2){$g_1-g_2$};
\draw[color=blue,fill=blue] (g1g2) circle (.04);
\node at (a){$\langle g_2 \rangle$};
\node at (b){$\langle g_1 \rangle$};
\node at (c){$\langle g_1,g_2 \rangle$};
\node at (d){$\langle g_1,g_2 \rangle/\langle g_1-g_2 \rangle$};
\draw (2,2.5) to (2,3.7);
\node at (1.5,-2){$\M(P')$};
\filldraw[color=white,path fading = west] (3.1,0) rectangle (3.5,3);
\end{scope}
\begin{scope}[xshift=11cm]
\fill[red, opacity=0.3] (0,0) rectangle (2,3.5);
\node[below] at (0,0){$g_1,g_2$};
\node at (0,-.6){$g_1-g_2$};
\draw[color=black,fill=black] (-.03,.03) circle (.04);
\draw[color=blue,fill=blue] (.03,-.03) circle (.04);
\node[below] at (2,0){$g_1$};
\draw[color=blue,fill=blue] (2,0) circle (.04);
\node at (1,2){$\langle g_1,g_2 \rangle/$};
\node at (1,1.5){$\langle g_1-g_2 \rangle$};
\node at (1,1){$\cong \F$};
\node at (2,4.5){$\langle g_1,g_2 \rangle/\langle g_1,g_1-g_2 \rangle \cong 0$};
\draw (2.2,4.2) to (3,1.5);
\node at (1.5,-2){$\M(\tilde P)$};
\filldraw[color=white,path fading = south] (0,3.1) rectangle (2,3.5);
\end{scope}
\end{tikzpicture}
\caption{The modules in \cref{ex_modules_presentations}.
The generators are marked with black dots, the relations with blue dots, and the shade of red shows the dimension of the vector space.}
\label{fig_module_presentation}
\end{figure}

For $\sigma$-compatible presentations $P=(G,R,\gr)$ and $P'=(G',R',\gr')$, and $p\in [1, \infty]$, let
\[
d^p(P,P',\sigma) = \begin{cases} \left(\sum_{h\in G\cup R} \|\gr(h)-\gr'(\sigma(h))\|_p^p\right)^{\frac 1 p} & \text{if $p<\infty$} \\
\max_{h\in G\cup R} \max_{|\cdot |}(\gr(h)-\gr'(\sigma(h))) & \text{if $p=\infty$},\end{cases}
\]
where $\max_{|\cdot |}(a,b) = \max\{|a|,|b|\}$.
If $\sigma$ is the identity, we shall write $d^p(P,P')$ unless otherwise stated.
For $t\geq 1$ and $t$-parameter finitely presented modules $M$ and $N$, 
\[
\hat d_I^p(M,N) = \inf\{d^p(P,P',\sigma) \mid P\text{ and }P' \text{ are $\sigma$-compatible presentations of }M\text{ and } N\}.\]

Just like for merge trees, $\hat d_I^p$ does not satisfy the triangle inequality for modules for $p<\infty$.
An example showing this is given in \cite[Example~3.1]{bjerkevik2021ell}, very similar in spirit to \cite[Example 4.2]{cardona2022universal}, which we paraphrased in \cref{ex_triangle_fail}.
Again, this motivates a definition of $d^p_I$ that ``fixes'' the triangle inequality:

\begin{definition}
\label{def_pres_dist_mods}
Let $t\geq 1$. The $p$-presentation distance between finitely presented $t$-parameter modules $M$ and $N$ is 
\[d_I^p(M,N) = \inf\sum_{i=1}^{\ell-1} \hat{d}_I^p(M_i,M_{i+1}),\] where the infimum is taken over all sequences $M=M_1, M_2,\dots,M_\ell=N$ of modules.
\end{definition}
It is shown in \cite[Theorem~1.7]{bjerkevik2021ell} that $d^\infty_I$ is equal to the interleaving distance.

\subsection{Reducing from constrained invertibility problems to modules}

Constrained invertibility problems, or CI problems for short, were first introduced in \cite{bjerkevik2017computational} and subsequently shown to be NP-hard to solve in \cite{bjerkevik2020computing}.
\begin{definition}
A constrained invertibility (CI) problem (over the finite field $\mathbb{F}$) is a pair $(\Pp,\Q)$ of $n\times n$-matrices where each entry of the matrices is either $0$ or $*$.
A solution of a CI problem is a pair $(A,B)$ of $n\times n$-matrices with entries in $\F$ such that $B = A^{-1}$, and if $\Pp_{i,j} = 0$, then $A_{i,j} = 0$, and if $\Q_{i,j} = 0$, then $B_{i,j} = 0$.
\end{definition}

\begin{example}
The following shows a CI problem and a solution of it. \[
\Pp = \begin{bmatrix} * & * & * \\ * & 0 & * \\ * & * & 0\end{bmatrix}, \Q = \begin{bmatrix} * & * & * \\ * & * & 0 \\ * & 0 & *\end{bmatrix}\qquad\rightarrow\qquad \begin{bmatrix} 1 & 1 & 1 \\ 1 & 0 & 1 \\ 1 & 1 & 0\end{bmatrix}\begin{bmatrix} -1 & 1 & 1 \\ 1 & -1 & 0 \\ 1 & 0 & -1\end{bmatrix} = \begin{bmatrix} 1 & 0 & 0 \\0 & 1 & 0 \\ 0 & 0 & 1 \end{bmatrix} \]
\end{example}

\begin{example}
The following instance of a CI problem has no solution, because the (1, 1) entry of the product is always 0, no matter what values are chosen. \[
\Pp=\begin{bmatrix} 0 & * & 0 \\ * & * & * \\ * & * & *\end{bmatrix}\qquad\qquad \Q = \begin{bmatrix} * & * & * \\ 0 & * & * \\ * & * & *\end{bmatrix}.\]
\end{example}

\begin{theorem}[\cite{bjerkevik2020computing}]
Deciding if a CI problem has a solution is $NP$-hard.
\end{theorem}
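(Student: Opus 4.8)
The plan is to give a polynomial-time reduction from a standard NP-complete problem --- I would use $3$-SAT, or the monotone variant $1$-in-$3$-SAT for cleaner gadgets --- to the feasibility problem for CI instances over the fixed field $\F$. Given a formula $\varphi$ with variables $x_1,\dots,x_N$ and clauses $c_1,\dots,c_M$, I would construct an $n\times n$ pair $(\Pp,\Q)$ with $n$ polynomial in $N+M$ so that a solution $(A, B=A^{-1})$ encodes a truth assignment in a distinguished set of ``free'' entries, and the identity $AB=I$ translates exactly into the statement ``$\varphi$ is satisfied''. The basic mechanism is the one visible in the two worked examples: imposing the $0$/$*$ pattern of $\Pp$ on $A$ and of $\Q$ on $B$ turns each off-diagonal constraint $(AB)_{i,j}=0$ into a short bilinear equation $\sum_{k\in K_{i,j}} A_{i,k}B_{k,j}=0$ in the free entries (where $K_{i,j}$ indexes the positions that are free in both row $i$ of $\Pp$ and column $j$ of $\Q$), while each diagonal constraint forces a prescribed combination to equal $1$, and in particular forces certain free entries to be nonzero. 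Variable gadgets will exploit the diagonal constraints together with the relation $B=A^{-1}$ to pin a pair of free scalars to be mutually inverse, and clause gadgets will be rows/columns whose single vanishing off-diagonal entry is precisely the disjunction constraint on the three literal entries of that clause.

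In more detail, for each variable $x_i$ I would reserve a small block of $\Pp$ and $\Q$ placed along a staircase pattern, with $*$'s arranged so that invertibility of $A$, together with the corresponding block of $B=A^{-1}$, forces a free scalar $u_i$ to be invertible and determines a second entry as a scalar multiple of $u_i^{-1}$; by additionally linking $u_i$ through the clause gadgets to an auxiliary generator, only two values of $u_i$ will be jointly consistent with all constraints, and I would label these \emph{true} and \emph{false}. For each clause $c_j=(\ell_{j,1}\vee\ell_{j,2}\vee\ell_{j,3})$ I would add one generator and one relation whose supports (dictated by $\Pp$ and $\Q$) single out exactly the three entries attached to the literals of $c_j$, so that the requirement that the matching entry of $AB$ vanish can be met if and only if at least one literal is \emph{true}. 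Off the gadget blocks, $\Pp$ is taken to be all $*$ (no constraint) and $\Q$ as sparse as possible, so that once the gadget equations hold, $A$ can always be completed to an invertible matrix whose inverse has exactly the sparsity pattern $\Q$ demands; conversely any solution yields a satisfying assignment. Since the construction is clearly polynomial-time, NP-hardness of CI feasibility follows once this equivalence is verified.

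The part that needs genuine care --- and is the main obstacle --- is the coupling between \emph{global invertibility} of $A$ and the \emph{local Boolean semantics} of the gadgets: one is not merely asking whether some bilinear equations are solvable, but whether they are solvable by a matrix that is invertible and whose inverse is itself constrained, and these two requirements touch every entry simultaneously. The layout of the gadget blocks (typically block-triangular along a staircase) must make invertibility of $A$ automatic except through the intended diagonal constraints, and must make $A^{-1}$ inherit the pattern of $\Q$ exactly when the clause equations hold --- no spurious solutions, no over-constraining. A secondary subtlety is \emph{field-independence}: the reduction must work for the fixed $\F$ whatever its size or characteristic, so the two-valued behaviour of each variable cannot come from $\F=\F_2$ but must be engineered combinatorially through the clause constraints, which is where most of the bookkeeping lives. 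Once these points are settled, correctness in both directions is routine. (Since this statement is quoted from \cite{bjerkevik2020computing}, in the present paper it is enough to cite that reference; the above only indicates the shape the argument takes.)
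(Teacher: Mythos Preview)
The paper does not prove this theorem; it is quoted from \cite{bjerkevik2020computing} and used as a black box for the subsequent reduction to $d_I^p$. Your closing parenthetical already identifies this correctly: in the context of this paper, the citation is the proof.

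As for the sketch itself: the overall shape --- reduce from a SAT variant, encode variables and clauses via sparsity gadgets, and arrange the blocks so that invertibility interacts with the gadgets only through the intended entries --- is plausible and broadly in the spirit of the argument in the cited work. But what you have written is an outline, not a proof. The two places you flag as needing ``genuine care'' (forcing a two-valued behaviour of each variable entry over an arbitrary fixed field $\F$, and showing that any local solution of the gadget equations extends to a globally invertible $A$ with $A^{-1}$ obeying the pattern $\Q$) are precisely the substance of the reduction, and you have named them rather than resolved them. Assertions such as ``only two values of $u_i$ will be jointly consistent with all constraints'' and ``$A$ can always be completed to an invertible matrix whose inverse has exactly the sparsity pattern $\Q$ demands'' are the theorems to be proved, not steps in a proof; without explicit gadget constructions and a verification that the correspondence is tight in both directions, the argument does not go through.
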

We will reduce deciding if a CI problem has a solution to computing $d_I^p$ for 2-parameter persistence modules. 
\subsubsection{The $2$-parameter modules $M$ and $N$}
\label{sec:pmodMN}
We now describe how to associate two $2$-parameter modules to an instance $(\Pp, \Q)$ of a CI problem with $n\geq 2$. In the following, $p\in [1, \infty)$, $K$ denotes the total number of $0$-entries in $\Pp$ and $\Q$, and $C=4(Kn+1)^{1/p}$. Associated to the 0-entries in $\Pp$ and $\Q$ we select $K$ distinct points $q_1, \ldots, q_K$ in $\R^2$ such that for $k\neq k'$, we have $q_k+(3,0) \leq (0,0)$ and $q_k+(4,4)\ngeq q_{k'}$. Moreover, if $q_k$ corresponds to a 0-entry in $\Pp$, then the $x$-coordinate of $q_k$ is an even integer, while if it corresponds to a 0-entry in $\Q$, then its $x$-coordinate is an odd integer.
For instance, we can choose each $q_k$ to be of the form $(5i-5K-3,-5i-5K)$ for $-K\leq i \leq K$ with $i$ of the appropriate parity.

\begin{figure}
    \centering

\begin{tikzpicture}[line cap=round,line join=round,x=1.0cm,y=1.0cm,scale=0.7]
\clip (-13,-8) rectangle (6,6);

    \node at (-5,4) {$\Q = \begin{bmatrix} * & * & * \\ * & * & * \\ * & \boxed{0} & * \end{bmatrix}$};
    \node at (-9,4) {$\Pp= \begin{bmatrix} * & \boxed{0} & * \\ \boxed{0} & * & * \\ * & * & * \end{bmatrix}$};

      \node [draw, shape = circle, fill = black, minimum size = 0.1cm, inner sep=0pt] (c) at (4,4){};
    \node[right] at (4,4) {$(C,C)$};
        \node[above] (Rn) at (4,5) {$R_N$};
    \node[left] (Rm) at (3,4.5) {$R_M$};
    \node[left] (ginf) at (3,3.5) {$g_{\infty,i}$};
        \node[below] (hinf) at (4,3) {$h_{\infty, i}$};
        \draw[dashed] (Rn) -- (c);
            \draw[dashed] (Rm) -- (c);
                \draw[dashed] (hinf) -- (c);
                    \draw[dashed] (ginf) -- (c);
                    
    \draw (-10,1) -- (5,1);
    \draw (0,-7) -- (0,5);
    \node[left] (P1) at (-8, -1) {$q_1$};

    \node[above] (hi1) at (-9,-0.6) {$h_{1,2}, h_{1,3}$};
     \node[above] (gi1) at (-6.5,-0.6) {$g_{1,1}, g_{1,3}$};
    \node[right] at (-7, -1) {$q_1 + (3,0)$};
    \node[left] (g12) at (-8, -1.5) {$g_{1,2}$};
     \node[right] (h11) at (-7.25, -1.5) {$h_{1,1}$};
    \node [draw, shape = circle, fill = black, minimum size = 0.1cm, inner sep=0pt] (pt1) at (-8,-1){};
    \node [draw, shape = circle, fill = black, minimum size = 0.1cm, inner sep=0pt] (pt1) at (-7.25,-1){};
    \node [draw, shape = circle, fill = black, minimum size = 0.1cm, inner sep=0pt] (pt1) at (-7.5,-1){};
    \node [draw, shape = circle, fill = black, minimum size = 0.1cm, inner sep=0pt] (pt1) at (-7.75,-1){};
    \draw[dashed] (hi1) -- (-7.75,-1);
    \draw[dashed] (gi1) -- (-7.5,-1);
    \draw[dashed] (g12) -- (-8,-1);
    \draw[dashed] (h11) -- (-7.25,-1);
    \begin{scope}[shift={(2,-2)}]

    \node[left] (P2) at (-8, -1) {$q_2$};
   \node[above] (hi1) at (-9,-0.6) {$h_{2,1}, {h_{2,3}}$};
    \node[above] (gi1) at (-7,-0.6) {$g_{2,2}, g_{2,3}$};
    \node[right] at (-7, -1) {$q_2 + (3,0)$};
    \node[left] (g12) at (-8, -1.5) {$g_{2,1}$};
     \node[right] (h11) at (-7.25, -1.5) {$h_{2,2}$};
    \node [draw, shape = circle, fill = black, minimum size = 0.1cm, inner sep=0pt] (pt1) at (-8,-1){};
    \node [draw, shape = circle, fill = black, minimum size = 0.1cm, inner sep=0pt] (pt1) at (-7.25,-1){};
    \node [draw, shape = circle, fill = black, minimum size = 0.1cm, inner sep=0pt] (pt1) at (-7.5,-1){};
    \node [draw, shape = circle, fill = black, minimum size = 0.1cm, inner sep=0pt] (pt1) at (-7.75,-1){};
   \draw[dashed] (hi1) -- (-7.75,-1);
    \draw[dashed] (gi1) -- (-7.5,-1);
    \draw[dashed] (g12) -- (-8,-1);
    \draw[dashed] (h11) -- (-7.25,-1);
    \end{scope}

    \begin{scope}[shift={(4,-4)}]

    \node[left] (P3) at (-8, -1) {$q_3$};
    \node[above] (hi1) at (-9,-0.6) {$g_{3,1}, g_{3,2}$};
     \node[above] (gi1) at (-7,-0.6) {$h_{3,1}, h_{3,3}$};
    \node[right]  at (-7, -1) {$q_3 + (3,0)$};
    \node[left] (g12) at (-8, -1.5) {$h_{3,2}$};
     \node[right] (h11) at (-7.25, -1.5) {$g_{3,3}$};
    \node [draw, shape = circle, fill = black, minimum size = 0.1cm, inner sep=0pt] (pt1) at (-8,-1){};
    \node [draw, shape = circle, fill = black, minimum size = 0.1cm, inner sep=0pt] (pt1) at (-7.25,-1){};
    \node [draw, shape = circle, fill = black, minimum size = 0.1cm, inner sep=0pt] (pt1) at (-7.5,-1){};
    \node [draw, shape = circle, fill = black, minimum size = 0.1cm, inner sep=0pt] (pt1) at (-7.75,-1){};
   \draw[dashed] (hi1) -- (-7.75,-1);
    \draw[dashed] (gi1) -- (-7.5,-1);
    \draw[dashed] (g12) -- (-8,-1);
    \draw[dashed] (h11) -- (-7.25,-1);
    \end{scope}
        \node (eh) at (-8.7, 4.8) {};
    \node (eh2) at (-9.1, 3.7) {};
    \node (eh3) at (-4, 3.55) {};
\draw[<->, dashed, thick] (eh) [out=150, in=170, distance=6cm] to (P2);
\draw[<->, dashed, thick] (eh2) [out=-40, in=45, distance=3cm] to (P1);
\draw[<->, dashed, thick] (eh3) [out=20, in=-45, distance=10cm] to (P3);

\end{tikzpicture}
\caption{Depicted are the grades of the generators and relations of the persistence modules $M$ and $N$ associated to the CI problem given in the top left quadrant; see \cref{sec:pmodMN}.}
\label{fig:CIMod}
\end{figure}

We define presentations $P_M=(G_M,R_M,\gr_M)$ and $P_N=(G_N,R_N,\gr_N)$, illustrated in \cref{fig:CIMod}, and let $M=\M(P_M)$ and $N=\M(P_N)$.
The generators and relations of the presentations are
\begin{align*}
G_M &= \{g_{k,i}\mid k\in \{1,\dots,K,\infty\}, 1\leq i\leq n\} & R_M &=\{r_{k,i}\mid 1\leq i\leq n, 1\leq k\leq K\}\\
G_N &= \{h_{k,i}\mid k\in \{1,\dots,K,\infty\}, 1\leq i\leq n\} & R_N &=\{s_{k,i}\mid 1\leq i\leq n, 1\leq k\leq K\}
\end{align*}
where $\gr_{M}(r) = (C,C)$ for all $r\in R_M$, $\gr_{N}(r) = (C,C)$ for all $r\in R_N$, and $r_{k,i}=g_{k,i} - g_{\infty,i}$ and $s_{k,i} = h_{k,i} - h_{\infty,i}$.
(It follows from this that both $M$ and $N$ decompose as direct sums of $n$ submodules, where the $i^\text{th}$ submodule of $M$ is generated by all generators of the form $g_{k,i}$, and similarly for $N$.)
It remains to specify the grades of the generators. To ease the notation, we shall suppress the subscripts of $\gr_M$ and $\gr_N$ when it is clear from context which function is used. 

\noindent \textbf{Grades of generators:} If $\Pp$ has a zero in position $(i,j)$ associated to $q_k$, we let
\begin{align*}
\gr(g_{k,j}) &= q_k,& \gr(g_{k,j'}) &= q_k + (2,0), 1\leq j'\leq n, j'\neq j,\\
\gr(h_{k,i}) &= q_k + (3,0),&\gr(h_{k,i'}) &= q_k + (1,0), 1\leq i'\leq n, i'\neq i.
\end{align*}
If $\Q$ has a zero in position $(i,j)$ associated to $q_k$, we let
\begin{align*}
\gr(g_{k,i}) &= q_k + (3,0),&\gr(g_{k,i'}) &= q_k + (1,0), 1\leq i'\leq n, i'\neq i,\\
\gr(h_{k,j}) &= q_k,& \gr(h_{k,j'}) &= q_k + (2,0), 1\leq j'\leq n, j'\neq j.
\end{align*}
Finally, let $\gr(g_{\infty,i}) = \gr(h_{\infty,i}) = (C,C)$ for all $i$.

We will prove the following theorem.
\begin{theorem}
\label{thm_mods}
Let $p\in [1, \infty)$. If $(\Pp,\Q)$ has a solution, then $d_I^p(M,N)\leq (Kn)^{\frac 1p}$. Otherwise, $d_I^p(M,N)\geq (Kn+1)^{\frac 1p}$.
\end{theorem}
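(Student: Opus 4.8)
\textbf{Proof proposal for \cref{thm_mods}.}

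The plan is to mirror the two-step structure used for merge trees, replacing the combinatorial bookkeeping of ``merge levels'' by linear-algebraic invariants of the 2-parameter modules. For the easy direction, suppose $(\Pp,\Q)$ has a solution $(A,B)$. I would build compatible presentations $P_M'$ and $P_N'$ of $M$ and $N$ in which the generators at each point $p_k$ (and at $p_k+(1,0)$) have been re-expressed using the linear change of basis given by $A$ and $B$ on the relevant $n$-dimensional graded pieces; the constraint pattern of $(\Pp,\Q)$ is exactly what makes these changes of basis respect the grading structure, so that each generator (and hence each relation) only needs to be moved by $1$ in a single coordinate. Summing $\|\cdot\|_p^p$ over the $Kn$ relations (the generators and the $g_{\infty,i},h_{\infty,i}$ can be left in place, or their contributions absorbed) gives $d^p(P_M',P_N')\le (Kn)^{1/p}$, hence $\hat d_I^p(M,N)\le (Kn)^{1/p}$, and a fortiori $d_I^p(M,N)\le (Kn)^{1/p}$. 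I would also verify, via a Wasserstein-type lower bound on the restriction of any presentation distance to the ``local'' blocks around each $p_k$, that this is tight, so in fact $d_I^p(M,N)=(Kn)^{1/p}$ in the solvable case, paralleling \cref{lem:dwp} and Step 1 of the merge-tree argument.

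For the hard direction, assume $(\Pp,\Q)$ has no solution; the goal is $d_I^p(M,N)\ge (Kn+1)^{1/p}$. Following \cref{lem_refinement}, I would first reduce to a zigzag $M=M_1,\dots,M_\ell=N$ with presentations $P_i,P_i'$ where consecutive presentations $P_i',P_{i+1}$ share generators and relations, admit a common compatible total order, and have grade functions that interpolate linearly without crossing the critical grades (the analogue of condition (iv), now with the finitely many ``walls'' determined by the coordinates of the $p_k$'s). The constant $C=4(Kn+1)^{1/p}$ again guarantees, via a cost argument as in \cref{lem:matching}, that the induced composite matching on the ``long'' part of the modules is a bijection, so each of the $Kn$ local generators (or relations) must migrate by at least $1$: each contributes $\ge 1$ to $\sum_i d^p(P_i',P_{i+1})$. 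The new ingredient is the $+1$: I would isolate, for each $k$, the $n\times n$ matrix recording how the span of $\{g_{k,i}+\text{(relations)}\}$ sits inside the span of $\{h_{k,i}+\text{(relations)}\}$ after the grades have been shifted — these matrices play the role of $\mrg$ in \cref{lem:sigma-comp} — and argue that if every local generator moves by exactly $1$ (and no more), then these matrices assemble into an inverse pair $(A,B)$ with the sparsity pattern dictated by $\Pp$ and $\Q$, contradicting unsolvability. Hence along the zigzag at least one local object must move strictly more than $1$, and a per-step inequality of the form $d^p(P_i',P_{i+1})\ge \|(\delta_1^i,\dots,\delta_{Kn}^i,\delta^i)\|_p$ (the analogue of \cref{lem:single-step-merge}, with an extra coordinate tracking the extra displacement and care taken to avoid double-counting) combined with the triangle inequality in $\ell^p$ yields $d_I^p(M,N)+\epsilon > \|(1,\dots,1,1)\|_p=(Kn+1)^{1/p}$ for all small $\epsilon>0$.

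The main obstacle I anticipate is the bookkeeping in the ``extra $+1$'' argument: unlike the merge-tree case, where a relation is just an unordered pair of generators and ``merging'' is a scalar quantity, here the relations are vectors in $\F^G$ and the relevant invariant is a matrix of field elements. Making precise what it means for a generator to ``move by exactly $1$'' in a way that (a) is stable under the interpolation of \cref{lem_refinement}, (b) forces the local change-of-basis matrices to be genuine inverses, and (c) interacts correctly with the choice of compatible total order and with the possibility that a relation's displacement coincides with a generator's displacement (the double-counting issue handled in \cref{lem:single-step-merge}), is where the real work lies. I would expect to need an analogue of \cref{lem_long_branch_merge_levels} expressing the ``rank'' or ``support'' data of $M_i$ in terms of the grades of specific relations, and an analogue of \cref{lem:sigma-comp} producing, from a hypothetical cheap zigzag, an honest solution to the CI problem; the linear-algebra content of the latter — that compatibility of presentations plus small total displacement forces invertibility of the assembled matrices — is the crux.
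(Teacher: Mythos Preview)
Your two-step outline matches the paper's, but two concrete mechanisms are missing, and the second is a genuine gap. First, a running confusion: in this construction it is the \emph{generators} $g_{k,i},h_{k,i}$ that carry the moving grades (at $p_k,\,p_k+(1,0),\,p_k+(2,0),\,p_k+(3,0)$), while all $Kn$ relations sit fixed at $(C,C)$. So in Step~1 the cost $(Kn)^{1/p}$ comes from $Kn$ generators each shifting by $(\pm 1,0)$, not from relations; and in Step~2 the $Kn$ baseline contributions are tracked through generators. Relatedly, the paper obtains those $Kn$ baseline costs not from the 2-parameter module directly but by projecting grades to the $x$-coordinate, producing honest one-parameter modules $\bar M,\bar N$ whose barcodes have left endpoints at even (resp.\ odd) integers; the persistence-algorithm pairing on $\bar P'_i,\bar P_{i+1}$ is what gives the matchings $\phi_i$. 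Your proposal does not say how you would extract barcodes or matchings from a 2-parameter module, and without this reduction the analogue of \cref{lem_subdivide_merge} has no content.

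The real gap is the ``$+1$'' and the double-counting. The paper does not argue ``if every generator moved by at most $1$ we could assemble a CI solution''. Instead, the zigzag always yields an isomorphism $\sigma\colon M^\infty\to N^\infty$; since its matrix pair $(S,S^{-1})$ is not a CI solution, some entry $S_{a,b}\neq 0$ where $\Pp_{a,b}=0$ (or the analogue for $\Q$). One then tracks the single element $u=\bar g_{\infty,b}$ via $I(u_i)=\{p:u_i\in\img (M_i)_{p\to\infty}\}$ and the scalar $m_i=\min\{c:p_k+(c,c)\in I(u_i)\}$, which must traverse $[1,2]$. The key lemma (\cref{lem_I_border_gen}) shows that each increment $\delta m_i$ is witnessed by a change in either the $x$- or the $y$-coordinate of some generator's grade. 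If it is the $y$-coordinate, there is no overlap with the $\delta v^i_j$, which live entirely in the $x$-coordinate via $\bar M$; if it is the $x$-coordinate, the parity choice for $(p_k)_x$ forces that generator's $x$-grade to be moving inside $[(p_k)_x+1,(p_k)_x+2]$, i.e.\ \emph{away} from the nearest odd integer, so the corresponding $\delta v^i_j=0$. Your proposal never uses the second coordinate, and without this $x$/$y$ separation (plus the parity trick) the double-counting argument cannot be closed in the module setting.
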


It follows that (the decision version of) computing $d_I^p$ is NP-hard. 
\begin{corollary}
\label{cor_main_2_par}
Let $p\in [1,\infty)$. Given finitely presented $2$-parameter modules $M$ and $N$ and  $c>0$, it is NP-hard to decide if $d_I^p(M,N)\leq c$.
\end{corollary}

\subsubsection{Generalizing to $t \geq 2$ parameters}

It is not hard to show that \cref{cor_main_2_par} generalizes to an arbitrary number of parameters. To see this, let $M$ be a finitely presented 2-parameter module, let $\pi\colon \R^t \to \R^2$ denote the projection onto the first two coordinates, let $\iota\colon\R^2\to \R^t$ denote the inclusion into the first two coordinates, and for $J\subseteq \R^t$, let $\langle J \rangle = \{q'\in \R^t\mid \exists q\in J\text{ such that } q'\geq q \}$. 

For a $2$-parameter module $M$, define the $t$-parameter module $\widehat M$ by $\widehat M_q=M_{\pi(q)}$ for $q\in \langle \iota(\R^2)\rangle$, and $\widehat M_q=0$ otherwise.
For a presentation $P=(G,R,\gr)$ of a $t$-parameter module, define the $2$-parameter presentation $\bar P=(G,R,\pi\circ\gr)$.
Since $\pi$ preserves the relation $\leq$, $\bar P$ is a valid presentation.
If $P$ is a presentation of $\widehat M$, then $\bar P$ is a presentation of $M$.
\begin{lemma}
For any finitely presented 2-parameter modules $M$ and $N$, $d^p_I(M,N) = d^p_I(\widehat M,\widehat N)$.
\end{lemma}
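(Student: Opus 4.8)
The plan is to prove the two inequalities separately, using the correspondence $P \mapsto \bar P$ in one direction and a suitable extension $P \mapsto \hat P$ in the other. For the inequality $d^p_I(\hat M,\hat N)\leq d^p_I(M,N)$, I would start from a sequence $M=M_1,\dots,M_\ell=N$ of $2$-parameter modules together with $\sigma_i$-compatible presentations $P_i,P_i'$ of $M_i,M_{i+1}$ realizing $\sum_i d^p(P_i,P_i',\sigma_i)$ to within $\epsilon$ of $d^p_I(M,N)$. Applying the hat construction to each module gives a sequence $\hat M=\hat M_1,\dots,\hat M_\ell=\hat N$. The key point is that if $P=(G,R,\gr)$ is a presentation of a $2$-parameter module $L$, then the $t$-parameter presentation $\hat P := (G,R,\iota\circ\gr)$ is a presentation of $\hat L$: one checks $\langle (G)^{\iota\circ\gr}_p\rangle/\langle (R)^{\iota\circ\gr}_p\rangle$ equals $L_{\pi(p)}$ when $p\in\langle\iota(\R^2)\rangle$ (because $\iota\circ\gr(c)\leq p \iff \gr(c)\leq\pi(p)$ for such $p$, using that $p\geq\iota(\pi(p))$) and is $0$ otherwise (since no generator grade is $\leq p$). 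Moreover, if $P,P'$ are $\sigma$-compatible then so are $\hat P,\hat P'$, and since $\iota$ is an isometry for each $\ell^p$ norm (it just appends zero coordinates), $d^p(\hat P,\hat P',\sigma) = d^p(P,P',\sigma)$. Summing over $i$ and taking $\epsilon\to 0$ gives $d^p_I(\hat M,\hat N)\leq d^p_I(M,N)$.

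For the reverse inequality $d^p_I(M,N)\leq d^p_I(\hat M,\hat N)$, I would run the symmetric argument using $P\mapsto\bar P$. Starting from a near-optimal sequence $\hat M = L_1,\dots,L_\ell = \hat N$ of $t$-parameter modules with compatible presentations $P_i,P_i'$, I need two things. First, each intermediate $L_i$ should be replaced by a $2$-parameter module: define $\bar L_i$ to be the $2$-parameter module presented by $\bar P_i$ (this is well-defined up to isomorphism; since $P_i$ is also a presentation of $L_i = \widehat{\bar L_i}$? — here one must be slightly careful, as an arbitrary $t$-parameter module in the chain need not be of the form $\hat L$). The cleanest fix is: set $\bar L_i := \M(\bar P_i)$ using the presentation $P_i$ coming from the pair $(P_{i-1}',P_i)$ — but $P_i'$ and $P_{i+1}$ are compatible presentations of the \emph{same} module $L_{i+1}$, so $\bar P_i'$ and $\bar P_{i+1}$ present the same $2$-parameter module $\bar L_{i+1}$, keeping the chain consistent. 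With $\bar M = \bar L_1 = M$ and $\bar N = \bar L_\ell = N$ at the endpoints (since $\overline{\hat M} = M$), compatibility of $P_i,P_i'$ descends to compatibility of $\bar P_i,\bar P_i'$, and $d^p(\bar P_i,\bar P_i',\sigma_i)\leq d^p(P_i,P_i',\sigma_i)$ because the projection $\pi$ is $1$-Lipschitz in $\ell^p$ (dropping coordinates can only decrease the norm). Summing and letting $\epsilon\to 0$ yields the claim.

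The main obstacle I anticipate is the bookkeeping in the reverse direction: ensuring that the intermediate $t$-parameter modules $L_i$ in an arbitrary optimal chain can be coherently pushed down to $2$-parameter modules so that consecutive $\hat d^p_I$ terms still telescope correctly. The resolution is that $\hat d^p_I$ is defined via pairs of compatible presentations of $M_i$ and $M_{i+1}$, and $P_i'$ (a presentation of $L_i$) together with $P_{i+1}$ (a presentation of $L_{i+1}$) project to $\bar P_i'$ and $\bar P_{i+1}$, which present well-defined $2$-parameter modules $\bar L_i$ and $\bar L_{i+1}$; compatibility is preserved because it only involves the combinatorial data $(G,R,\sigma)$, which is untouched by composing the grading function with $\pi$. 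Once this is set up, both inequalities follow from the elementary facts that $\iota$ is an $\ell^p$-isometry and $\pi$ is $\ell^p$-nonexpansive, so I expect the argument to be short modulo making these compatibility statements precise.
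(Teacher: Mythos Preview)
Your proposal is correct and follows essentially the same approach as the paper: lift via $P\mapsto (G,R,\iota\circ\gr)$ for one inequality and project via $P\mapsto \bar P=(G,R,\pi\circ\gr)$ for the other, using that $\iota$ is an $\ell^p$-isometry and $\pi$ is $\ell^p$-nonexpansive. The paper's proof is terser and does not spell out the bookkeeping you flag in the reverse direction (that two different presentations of the same intermediate $t$-parameter module yield isomorphic $2$-parameter modules after applying $\bar{\cdot}$); your handling of this is fine, though note a small slip in wording---$P_i'$ and $P_{i+1}$ are both presentations of $L_{i+1}$ but need not be \emph{compatible} with each other, and your argument only needs the former.
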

\begin{proof}
The inequality $\hat d^p_I(M',N') \geq\hat d^p_I(\widehat{M'},\widehat{N'})$ for $2$-parameter modules $M'$ and $N'$ is immediate, as any presentations of $M'$ and $N'$ can be viewed as presentations of $\widehat{M'}$ and $\widehat{N'}$ by including the grades into $\R^t$.
It follows that $d^p_I(M,N) \geq d^p_I(\widehat M,\widehat N)$.
Conversely, let $P$ and $Q$ be compatible presentations of $t$-parameter modules $M'$ and $N'$.
Since $\pi$ does not increase distances, we have $d(\bar P,\bar Q)\leq d(P,Q)$.
Thus, $\hat d^p_I(\M(\bar P),\M(\bar Q))\leq \hat d^p_I(M',N')$.
By replacing $M_i$ (presented by $P_i$) with $2$-parameter modules $\M(\bar P_i)$ in \cref{def_pres_dist_mods}, we get $d^p_I(M,N) \leq d^p_I(\widehat M,\widehat N)$.
\end{proof}

The following is now immediate.
\begin{corollary}
\label{cor_main_t_par}
Let $p\in [1,\infty)$ and $t\geq 2$. Given finitely presented $t$-parameter modules $M$ and $N$ and $c>0$, it is NP-hard to decide if $d_I^p(M,N)\leq c$.
\end{corollary}

\subsection{High-level approach and parallels to the merge tree proof}
Given a CI problem $(\Pp,\Q)$ of $n\times n$-matrices, we have constructed modules $M$ and $N$ such that for $p\geq (C,C)\gg (0,0)$, $M_p\cong N_p \cong \F^n$.
Realizing a reasonably small $p$-presentation distance between $M$ and $N$ involves picking an isomorphism $\sigma\colon M_p\to N_p$ (which is analogous to the bijection $\sigma$ in \cref{lem:sigma-comp}), which we can write as an $n\times n$-matrix $A_\sigma$.
We interpret $(A_\sigma,A_\sigma^{-1})$ as an attempt at a solution of $(\Pp,\Q)$.
If $(A_\sigma,A_\sigma^{-1})$ is indeed a solution, then it turns out that we can choose compatible presentations of $M$ and $N$ such that each generator needs to be moved a distance of at most $1$ between the modules.
But if $(A_\sigma)_{i,j} \neq 0$ and $\Pp_{i,j} = 0$ (or similarly for $A_\sigma^{-1}$ and $\Q$) for some $i$ and $j$, we are forced to move a generator a distance of $3$ from $q_k$ to $q_k+(3,0)$, where $q_k$ is associated to $(i,j)$, giving a larger cost.
This is analogous to the extra cost $\Delta(g,h)$ of the relation moving from $0$ to $3$ in the proof for merge trees, and the latter parts of the two proofs share the same set of ideas. To obtain $d_I^p(M,N) \leq (Kn)^{\frac 1p}$, we need to avoid any extra cost, which we show is equivalent to solving the CI problem.

\subsection{Proof of \cref{thm_mods}}

In \cref{subsec_step_1_mods}, we will show that if $(\Pp,\Q)$ has a solution, then $d_I^p(M,N)\leq (Kn)^{\frac 1p}$; and in \cref{subsec_step_2_mods}, we will show that if $(\Pp,\Q)$ does not have a solution, then $d_I^p(M,N)\geq (Kn+1)^{\frac 1p}$.
From this, \cref{thm_mods} follows. 

\subsubsection{Some additional notation}
\label{subsec:addnot}

Let $P=(G,R,\gr)$ be a presentation of a 2-parameter persistence module and set $A=\M(P)$.
We define $A^\infty = \mathbb{F}^G/\langle R\rangle$.
Note that $A^\infty = A_q$ for any $q$ with $q\geq q'$ for all $q'\in \img(\gr)$.
If $P'=(G',R',\gr')$ is another presentation with $B=\M(P')$, then any morphism $f\colon A\to B$ restricts to a morphism $f_\infty\colon A^\infty\to B^\infty$ by choosing $f_\infty = f_q$ for any sufficiently large $q$.
For any $q\in \R^2$, let 
\[A_{q\to \infty}=A_{q\to q'}\colon A_q\to A^\infty,\]
where $q'\geq q$ is chosen such that $A_{q'}=A^\infty$.

We set $M^k = M_{q_k+(3,0)}$ and $N^k = N_{q_k+(3,0)}$ for $1\leq k\leq K$. Also note that, by construction, $M^\infty = M_{(C,C)}$ and $N^\infty = N_{(C,C)}$.
For each $1\le k\leq K$, $M^k$ has an ordered basis $(g_{k,1},\dots,g_{k,n})$, and $N^k$ has an ordered basis $(h_{k,1},\dots,h_{k,n})$:
By construction, $g_{k,i}$ has a grade of $q_k+(j,0)$ for some $j\leq 3$, and for $k'\neq k$, we have $\gr_M(g_{k',i})\leq q_{k'}+(4,4)\ngeq q_k$, so $g_{k',i}$ does not appear in $M^k$.
Similar observations hold for $h_{k,i}$.
The vector spaces ``at infinity'' have ordered bases defined as follows: 
$M^\infty$ has an ordered basis $(\bar g_{\infty,1},\dots,\bar g_{\infty,n})$, where by $\bar g_{\infty,i,}$, we mean the equivalence class $[g_{\infty,i}]\in \F^{G_M}/\langle R_M\rangle$.
(By construction, $[g_{\infty,i}]$ is the set of linear combinations of the elements in $\{g_{k,i}\}_{k=1}^K\cup \{g_{\infty,i}\}$.)
Note that there is an isomorphism $\tilde M^\infty \coloneqq \langle \{g_{\infty,1},\dots, g_{\infty,n}\}\rangle \to M^\infty$ given by $g_{\infty,i} \mapsto \bar g_{\infty,i}$ for all $i$.
Likewise, $N^\infty$ has an ordered basis $(\bar h_{\infty,1},\dots,\bar h_{\infty,n})$ which is canonically isomorphic to a module $\tilde N$ defined similarly to $\tilde M$.
We refer to these as the \emph{canonical bases} of $M^k$ and $N^k$ for $k\in \{1,\dots,K,\infty\}$.
Observe that \[M^{k\to \infty}\coloneqq M_{q_k+(3,0)\to (C,C)} \colon M^k\to M^\infty\] is the isomorphism given by $g_{k,i} \mapsto \bar g_{\infty,i}$ for all $i$, and the linear transformation is given by the identity matrix with respect to the canonical bases. Define $N^{k\to \infty}\colon N^k\to N^\infty$ accordingly. 

\subsubsection{Step 1: If $(\Pp,\Q)$ has a solution, then $d_I^p(M,N)\leq (Kn)^{\frac 1p}$.}
\label{subsec_step_1_mods}

Let $P=(G,R,\gr)$ and $P'=(G',R',\gr')$ be presentations, and suppose we have an isomorphism $\iota\colon \F^{G'}\to \F^G$.
We interpret this as a change of basis, where we replace the basis $G$ with a basis $\iota(G')$.
We call $P'$ a \emph{regeneration} of $P$ if, for all $q\in \R^2$, $\langle G_q\rangle = \langle \iota(G'_q)\rangle$ and $\langle R_q\rangle = \langle \iota(R'_q)\rangle$.

\begin{lemma}
\label{lem_reg_iso}
If $P'$ is a regeneration of $P$, then $\M(P') \cong \M(P)$.
\end{lemma}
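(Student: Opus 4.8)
The plan is to show that a regeneration $P'$ of $P$ presents the same module by exhibiting an isomorphism $\M(P') \to \M(P)$ built from the given change of basis $\iota\colon \F^{G'}\to \F^G$. The key point is that $\iota$, being an isomorphism of free modules that carries $\langle G'_p\rangle$ onto $\langle G_p\rangle$ and $\langle R'_p\rangle$ onto $\langle R_p\rangle$ for every $p$, descends to the pointwise quotients. Concretely, for each $p\in\R^2$ the restriction of $\iota$ to $\langle G'_p\rangle$ is an isomorphism onto $\langle G_p\rangle$ sending the subspace $\langle R'_p\rangle$ isomorphically onto $\langle R_p\rangle$, hence induces an isomorphism of quotient spaces
\[
\iota_p\colon \M(P')_p = \langle G'_p\rangle/\langle R'_p\rangle \;\xrightarrow{\ \cong\ }\; \langle G_p\rangle/\langle R_p\rangle = \M(P)_p.
\]

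The main thing to check is naturality: for $p\le q$, the square relating $\iota_p$, $\iota_q$ and the two structure maps $\M(P')_{p\to q}$, $\M(P)_{p\to q}$ must commute. This is immediate because all four maps are induced by inclusions of subspaces of $\F^{G'}$ (resp.\ $\F^G$) together with the single ambient map $\iota$, and $\iota$ commutes with these inclusions on the nose; passing to quotients preserves the commutativity. Thus the collection $\{\iota_p\}_{p\in\R^2}$ is a natural transformation $\M(P')\to\M(P)$, and since each $\iota_p$ is an isomorphism of vector spaces, it is an isomorphism of persistence modules.

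I expect no serious obstacle here — the statement is essentially a bookkeeping lemma. The only point requiring a small amount of care is verifying that $\iota$ genuinely restricts to an isomorphism $\langle R'_p\rangle \to \langle R_p\rangle$ as a map of subspaces of $\langle G'_p\rangle \to \langle G_p\rangle$ (not merely that the two subspaces have the same image), so that the induced map on quotients is well defined and bijective; but this is exactly what the hypothesis $\langle R_p\rangle = \langle \iota(R'_p)\rangle$ together with injectivity of $\iota$ gives. One could also phrase the argument more slickly: $\iota$ is an isomorphism of the constant diagrams of ambient spaces, it carries the subfunctor $p\mapsto\langle G'_p\rangle$ isomorphically onto $p\mapsto\langle G_p\rangle$ and the subfunctor $p\mapsto\langle R'_p\rangle$ onto $p\mapsto\langle R_p\rangle$, and taking the pointwise quotient of functors is functorial, so the quotients are isomorphic. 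I would present the elementwise version since it makes the naturality check transparent.
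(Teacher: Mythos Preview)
Your proposal is correct and follows essentially the same approach as the paper: both construct the isomorphism pointwise as the map on quotients induced by $\iota$, and verify naturality by observing that the structure maps and the induced $\iota_p$ all come from the single ambient map $\iota$ together with inclusions, so the relevant squares commute. Your write-up is slightly more careful about the well-definedness of the induced map on quotients, but the argument is the same.
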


\begin{proof}
We will show that $\iota$ as described above induces an isomorphism.
Recall that $\M(P)_q = \langle G_q\rangle /\langle R_q\rangle$ and $\M(P')_q = \langle G'_q\rangle /\langle R'_q\rangle$.
Since
\begin{linenomath}\begin{align*}
\iota(\langle G'_q\rangle) &= \langle \iota(G'_q)\rangle = \langle G_q\rangle \text{ and}\\
\iota(\langle R'_q\rangle) &= \langle \iota(R'_q)\rangle = \langle R_q\rangle,
\end{align*}\end{linenomath}
$\iota$ induces an isomorphism $\bar \iota_q\colon \M(P)_q\to \M(P')_q$.
To see that $\bar \iota = \{\bar \iota_q\}_{q\in \R^2}$ is indeed a morphism of persistence modules (and thus, an isomorphism), observe that for $q\leq q'$, both $\bar \iota_{q'}\circ \M(P)_{q\to q'}$ and $\M(P')_{q\to q'}\circ \bar \iota_q$ send the equivalence class of $v\in \F^{G'}$ to the equivalence class of $\iota(v) \in \F^G$ in the relevant quotient vector spaces.
\end{proof}

Now suppose $(\Pp,\Q)$ has a solution $(S,T)$.
We will construct compatible presentations $P'_M$ and $P'_N$ and show that $M\cong \M(P'_M)$ and $N\cong \M(P'_N)$, respectively (\cref{lem_pres_of_M_and_N}), and  that $d^p(P'_M,P'_N,\sigma) = (Kn)^{\frac 1p}$ for some $\sigma$ (\cref{lem_comp_dp}).
From this, it follows that $d_I^p(M,N)\leq (Kn)^{\frac 1p}$, which is our goal.

With respect to the canonical bases given in \cref{subsec:addnot}, let $\phi_k\colon M^k \to N^k$ be the isomorphism with transformation matrix $S$ for $k\in \{1,\dots,K,\infty\}$.
We have $\phi_\infty\circ M^{k\to \infty} = N^{k\to \infty}\circ \phi_k$ for $k\neq \infty$, since the transformation matrices of $M^{k\to \infty}$ and $N^{k\to \infty}$ with respect to the canonical bases are identity matrices.
Observe that $T=S^{-1}$ is the transformation matrix of $\phi_k^{-1}$ for all $k$.
Since $\F^{G_M} = \tilde M^\infty \oplus \bigoplus_{k=1}^K M^k$ and $\F^{G_N} = \tilde N^\infty \oplus \bigoplus_{k=1}^K N^k$, assembling the $\phi_k$ gives us an isomorphism $\phi\colon \F^{G_M}\to \F^{G_N}$.

Next we define $P'_M = (G'_M,R'_M,\gr'_M)$ and $P'_N = (G'_N,R'_N,\gr'_N)$ and show that these are regenerations of $P_M$ and $P_N$, respectively.

Let
\begin{align*}
G'_M &= \{g'_{k,i}\mid k\in \{1,\dots,K,\infty\}, i\in \{1,\dots,n\}\}\sse \F^{G_M},\\
G'_N &= \{h'_{k,i}\mid k\in \{1,\dots,K,\infty\}, i\in \{1,\dots,n\}\}\sse \F^{G_N},
\end{align*}
where the $g'_{k,i}$ and $h'_{k,i}$ will be specified below.
Now we define the isomorphisms $\iota_M\colon \F^{G'_M}\to \F^{G_M}$ and $\iota_N\colon \F^{G'_N}\to \F^{G_N}$ by defining how they act on the bases $G'_M$ and $G'_N$, respectively.
The purpose of the construction that follows is to make sure that $\phi$ induces bijections $G'_M \to G'_N$ and $R'_M \to R'_N$ of generators and relations, and that the grades of generators/relations that are matched do not differ by more than one in the $x$-coordinate and zero in the $y$-coordinate, which is needed in \cref{lem_comp_dp}.

Let $1\leq k\leq K$, and suppose $q_k$ corresponds to a zero in position $(a,b)$ in $\Pp$.
Choose ordered bases $(\iota_M(g'_{k,1}),\dots,\iota_M(g'_{k,n}))$ and $(\iota_N(h'_{k,1}),\dots,\iota_N(h'_{k,n}))$ of $M^k$ and $N^k$, respectively, such that
\begin{itemize}
\item[(i)] $\iota_M(g'_{k,1}) = g_{k,b}$,
\item[(ii)] $\iota_N(h'_{k,i}) \in N_{q_k+(1,0)}$ for all $i<n$,
\item[(iii)] $\phi_k(\iota_M(g'_{k,i})) = \iota_N(h'_{k,i})$ for all $1\leq i\leq n$.
\end{itemize}
Recall that by construction, $M_{q_k}$ has a basis $\{g_{k,b}\}$, and $M_{q_k+(2,0)} = M^k$ with basis $\{g_{k,1},\dots,g_{k,n}\}$; and $N_{q_k+(1,0)}$ has a basis $\{h_{k,1},\dots,h_{k,a-1},h_{k,a+1},\dots,h_{k,n}\}$, and $N_{q_k+(3,0)} = N^k$ with basis $\{h_{k,1},\dots,h_{k,n}\}$.
If $\phi_k(g_{k,b})\in N_{q_k+(1,0)}$, choosing elements satisfying (i)-(iii) is easy: Pick a basis $\{\iota_N(h'_{k,1}), \dots, \iota_N(h'_{k,n-1})\}$ of $N_{q_k+(1,0)}$ with $\iota_N(h'_{k,1}) = \phi(g_{k,b})$, let $\iota_N(h'_{k,n}) = h_{k,a}$ (which is not in $N_{q_k+(1,0)}$ by construction), and let $\iota_M(g'_{k,i}) = \phi^{-1}(\iota_N(h'_{k,i}))$ for all $i$.
Since $\Pp$ has a zero in position $(a,b)$, $A$ has a zero in position $(a,b)$, which means that
\[
\phi_k(g_{k,b})\in \langle \{h_{k,1},\dots,h_{k,a-1},h_{k,a+1},\dots,h_{k,n}\} \rangle = N_{q_k+(1,0)},
\]
so we can indeed satisfy (i)-(iii).
Let $\gr_M'(g'_{k,1}) = q_k$, $\gr_M'(g'_{k,i}) = q_k+(2,0)$ for $i>1$, $\gr_N'(h'_{k,i}) = q_k+(1,0)$ for $i<n$, and $\gr_N'(h'_{k,n}) = q_k+(3,0)$. If $q_k$ instead corresponds to a zero of $\Q$ in position $(a,b)$, then we choose ordered bases $(\iota_M(g'_{k,1}),\dots,\iota_M(g'_{k,n}))$ and $(\iota_N(h'_{k,1}),\dots,\iota_N(h'_{k,n}))$ of $M^k$ and $N^k$, respectively, such that
\begin{itemize}
\item[(i)] $\iota_N(h'_{k,1}) = h_{k,b}$,
\item[(ii)] $\iota_M(g'_{k,i}) \in M_{q_k+(1,0)}$ for all $i<n$,
\item[(iii)] $\phi_k^{-1}(\iota_N(h'_{k,i})) = \iota_M(g'_{k,i})$ for all $1\leq i\leq n$,
\end{itemize}
by similar arguments.
Note that the last equation is equivalent to $\phi_k(\iota_M(g'_{k,i})) = \iota_N(h'_{k,i})$.
Let $\gr_N'(h'_{k,1}) = q_k$, $\gr_N'(h'_{k,i}) = q_k+(2,0)$ for $i>1$, $\gr_M'(g'_{k,i}) = q_k+(1,0)$ for $i<n$, and $\gr_M'(g'_{k,n}) = q_k+(3,0)$.

For all $1\leq i\leq n$, let
\begin{align*}
    \iota_M(g'_{\infty,i}) &= g_{\infty,i}, \\\iota_N(h'_{\infty,i}) &= h_{\infty,i},\\
    \gr'_M(g'_{\infty,i}) &= \gr'_N(h'_{\infty,i}) = (C,C).
    \end{align*}
Finally, let $R'_M = \iota_M^{-1}(R_M)$ and $R'_N = \iota_N^{-1}(\phi(R_M))$, and let $\gr'_M(R_M) = \gr'_N(R_N) = (C,C)$.

\begin{lemma}
\label{lem_pres_of_M_and_N}
$\M(P'_M)\cong M$, and $\M(P'_N)\cong N$.
\end{lemma}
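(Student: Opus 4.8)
The plan is to derive the lemma from \cref{lem_reg_iso}: it suffices to show that $P'_M$ is a regeneration of $P_M$ (via $\iota_M$) and that $P'_N$ is a regeneration of $P_N$ (via $\iota_N$). Both are valid presentations since every generator of $P'_M$ and $P'_N$ has grade $\leq (C,C)$ and all relations sit at $(C,C)$, so I only need to verify, for all $p\in\R^2$, the two span equalities $\langle (G_M)_p\rangle = \langle\iota_M((G'_M)_p)\rangle$, $\langle (R_M)_p\rangle = \langle\iota_M((R'_M)_p)\rangle$, and their $N$-analogues. The relation conditions are the easy ones: as all relations have grade $(C,C)$, both sides vanish unless $p\geq(C,C)$, in which case the $M$-equality is immediate from $R'_M = \iota_M^{-1}(R_M)$, and for $N$ one uses $\iota_N(R'_N)=\phi(R_M)$ together with the fact that $\phi$ acts on each $M^k$ and on $\tilde M^\infty$ by the matrix $S$ in the canonical bases, so $\phi(r_{k,i}) = \phi(g_{k,i})-\phi(g_{\infty,i})$ is the combination of $s_{k,1},\dots,s_{k,n}$ with coefficients from the $i$-th column of $S$; since $S$ is invertible, $\langle\{\phi(r_{k,i})\}_i\rangle = \langle\{s_{k,i}\}_i\rangle$ for each $k$, hence $\langle\phi(R_M)\rangle = \langle R_N\rangle$.

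The real content is the condition on generators. First I would note that $\iota_M$ is a well-defined isomorphism because it carries the basis $G'_M$ onto the union of the chosen ordered bases of the $M^k$ together with $\{g_{\infty,i}\}_i$, a basis of $\F^{G_M} = \tilde M^\infty\oplus\bigoplus_{k=1}^K M^k$; likewise for $\iota_N$. Because the points $p_k$ were chosen with $p_k+(3,0)\leq(0,0)$ and $p_k+(4,4)\ngeq p_{k'}$ for $k\neq k'$, the sets $(G_M)_p$ and $(G'_M)_p$ are unions of blocks indexed by $k\in\{1,\dots,K,\infty\}$ that are compatible with this direct-sum decomposition. Thus it suffices to check, for every $p$ and every single $k$, that $\langle\{g_{k,i}:\gr_M(g_{k,i})\leq p\}\rangle = \langle\{\iota_M(g'_{k,i}):\gr'_M(g'_{k,i})\leq p\}\rangle$ inside $M^k$ (and symmetrically for $N$). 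For $k=\infty$ this is trivial since $\iota_M(g'_{\infty,i})=g_{\infty,i}$.

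For $1\le k\le K$ I would split into the finitely many cases according to which of the grades $p_k, p_k+(1,0), p_k+(2,0), p_k+(3,0)$ are dominated by $p$, distinguishing whether $p_k$ is a zero of $\Pp$ or of $\Q$ (these determine which generators sit at which of those grades, both for $\gr_M$ and for $\gr'_M$). In each case the grade multiset of $\{\iota_M(g'_{k,i})\}_i$ was built to coincide with that of $\{g_{k,i}\}_i$; the ``switched-on'' vectors $\iota_M(g'_{k,i})$ lie in the correct step of the flag $0\subseteq M_{p_k}\subseteq M_{p_k+(1,0)}\subseteq M^k$ (resp.\ the analogous flag) by conditions (i)--(ii) of the construction; and they are linearly independent, being part of an ordered basis of $M^k$. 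Comparing dimensions then forces them to span exactly the same subspace as the corresponding $g_{k,i}$, which by construction already span the ambient $M_q$ at the relevant grade $q$. The computation for $N$ is the mirror image, using conditions (i)--(ii) in the $\Q$-case (resp.\ $\Pp$-case). Once both span conditions hold, $P'_M$ and $P'_N$ are regenerations of $P_M$ and $P_N$, and \cref{lem_reg_iso} yields $\M(P'_M)\cong M$ and $\M(P'_N)\cong N$.

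The main obstacle I anticipate is not conceptual but bookkeeping: keeping straight, for each $k$, the placement of generators among $p_k,p_k+(1,0),p_k+(2,0),p_k+(3,0)$ in the two cases ($\Pp$-zero vs.\ $\Q$-zero), and confirming that the bases $\iota_M(g'_{k,i})$ and $\iota_N(h'_{k,i})$ guaranteed by conditions (i)--(iii) actually respect the relevant flags at every intermediate grade, so that the dimension count really does pin down the spans.
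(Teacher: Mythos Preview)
Your proposal is correct and follows the same strategy as the paper: verify that $P'_M$ and $P'_N$ are regenerations of $P_M$ and $P_N$ and then invoke \cref{lem_reg_iso}. The only notable difference is in how you handle $\langle\iota_N(R'_N)\rangle=\langle R_N\rangle$: you compute $\phi(r_{k,i})=\sum_j S_{j,i}\,s_{k,j}$ directly and use invertibility of $S$, whereas the paper packages the same computation via auxiliary maps $\rho_{M,k},\rho_{N,k}$ (sending $g_{k,i}\mapsto g_{\infty,i}$, $h_{k,i}\mapsto h_{\infty,i}$) and the observation that $\phi$ commutes with $\rho$; both arguments are equivalent, yours being the more explicit unpacking.
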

\begin{proof}
We will show that $P'_M$ is a regeneration of $P_M$, and that $P'_N$ is a regeneration of $P_N$.
By \cref{lem_reg_iso}, the lemma follows.

By construction, $\iota((G'_M)_{q_k+(i,0)})$ is a basis of $(G_M)_{q_k+(i,0)}$ and $\iota((G'_N)_{q_k+(i,0)})$ is a basis of $(G_N)_{q_k+(i,0)}$ for all $i\in \{0,1,2,3\}$, and $\iota((G'_M)_{(C,C)})$ is a basis of $(G_M)_{(C,C)}$ and $\iota((G'_N)_{(C,C)})$ is a basis of $(G_N)_{(C,C)}$.
We also have $\iota_M(R'_M) = R_M$.
Thus, $P'_M$ is a regeneration of $P_M$.
To show that $P'_N$ is a regeneration of $P_N$, what remains is to show that $\iota_N(\langle R'_N\rangle) = \langle R_N \rangle$.
For $1\leq k\leq K$, let $\rho_{M,k}\colon M^k \to \langle g_{\infty,1}, \dots, g_{\infty,n} \rangle$ be the isomorphism sending $g_{k,i}$ to $g_{\infty,i}$ for all $1\leq i \leq n$.
Define $\rho_{N,k}\colon N^k \to \langle h_{\infty,1}, \dots, h_{\infty,n} \rangle$ similarly.
In the following, we suppress the subscript and write just $\rho$.
We have
\begin{linenomath}\begin{align*}
\langle R_M\rangle &= \langle \{v-\rho(v)\mid 1\leq k\leq K, v\in M^k\}\rangle \text{ and}\\
\langle R_N\rangle &= \langle \{v-\rho(v)\mid 1\leq k\leq K, v\in N^k\}\rangle, \text{ so}\\
\iota_N(\langle R'_N\rangle) &= \langle \iota_N(R'_N)\rangle = \langle \phi(R_M)\rangle = \phi(\langle R_M\rangle)\\
&= \phi(\langle \{v-\rho(v)\mid 1\leq k\leq K, v\in M^k\}\rangle)\\
&= \langle \{\phi(v-\rho(v))\mid 1\leq k\leq K, v\in M^k\}\rangle\\
&= \langle \{\phi(v)-\rho(\phi(v))\mid 1\leq k\leq K, v\in M^k\}\rangle\\
&= \langle \{v-\rho(v)\mid 1\leq k\leq K, v\in N^k\}\rangle\\
&= \langle R_N\rangle,
\end{align*}\end{linenomath}
where we exploit that $\phi$ commutes with both $\langle - \rangle$ and $\rho$.
It follows that $P'_N$ is a regeneration of $P_N$.
\end{proof}

\begin{lemma}
\label{lem_comp_dp}
There is a $\sigma$ such that $P'_M$ and $P'_N$ are $\sigma$-compatible, and $d^p(P'_M,P'_N,\sigma) = (Kn)^{\frac 1p}$.
\end{lemma}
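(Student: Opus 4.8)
The plan is to define $\sigma$ as the bijection on generators and relations induced by $\phi$, and then verify compatibility and compute the distance grade-by-grade. Concretely, since $\iota_M$ and $\iota_N$ are isomorphisms identifying $G'_M$ and $G'_N$ with bases of $\F^{G_M}$ and $\F^{G_N}$, and $R'_M = \iota_M^{-1}(R_M)$, $R'_N = \iota_N^{-1}(\phi(R_M))$, the map $\phi\colon \F^{G_M}\to \F^{G_N}$ transports to an isomorphism $\sigma = \iota_N^{-1}\circ\phi\circ\iota_M\colon \F^{G'_M}\to \F^{G'_N}$. I would first check that $\sigma$ restricts to bijections $G'_M\to G'_N$ and $R'_M\to R'_N$. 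For generators this is condition (iii) in the construction (together with $\phi(g_{\infty,i}) = h_{\infty,i}$, which holds because $\phi_\infty$ has transformation matrix $S$ in the canonical bases but $g_{\infty,i}\mapsto\bar g_{\infty,i}$... wait—here one must be slightly careful: $\phi$ on the $\infty$-summand is $\phi_\infty$, which is \emph{not} the identity. This is why we cannot simply send $g'_{\infty,i}$ to $h'_{\infty,i}$; rather $\sigma$ permits an arbitrary isomorphism $\F^{G'_M}\to\F^{G'_N}$, so $\sigma(g'_{\infty,i})$ is whatever $\iota_N^{-1}\phi_\infty(g_{\infty,i})$ is, which lies in $\langle h'_{\infty,1},\dots,h'_{\infty,n}\rangle$.) For relations, $\sigma(R'_M) = \iota_N^{-1}(\phi(R_M)) = R'_N$ by definition. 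So $\sigma$-compatibility holds.

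Next I would compute $d^p(P'_M,P'_N,\sigma)$ by summing $\|\gr'_M(h) - \gr'_N(\sigma(h))\|_p^p$ over all $h\in G'_M\cup R'_M$. The relations all have grade $(C,C)$ in both presentations and $\sigma$ matches relations to relations, so they contribute $0$. The $\infty$-generators have grade $(C,C)$ in both presentations, and $\sigma(g'_{\infty,i})\in\langle h'_{\infty,\bullet}\rangle$, each of which has grade $(C,C)$; hence these contribute $0$ as well. It remains to account for the $Kn$ generators $g'_{k,i}$ with $1\le k\le K$. Fix such a $k$, say corresponding to a zero in position $(a,b)$ of $\Pp$. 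By construction $\gr'_M(g'_{k,1}) = p_k$, $\gr'_M(g'_{k,i}) = p_k+(2,0)$ for $i>1$, while $\gr'_N(h'_{k,i}) = p_k+(1,0)$ for $i<n$ and $\gr'_N(h'_{k,n}) = p_k+(3,0)$; and $\sigma(g'_{k,i}) = h'_{k,i}$. So the differences are: for $i=1$, $\|p_k - (p_k+(1,0))\|_p^p = 1$; for $1<i<n$, $\|(p_k+(2,0)) - (p_k+(1,0))\|_p^p = 1$; for $i=n$, $\|(p_k+(2,0)) - (p_k+(3,0))\|_p^p = 1$. That is exactly $n$ per value of $k$. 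The case where $p_k$ corresponds to a zero of $\Q$ is symmetric, swapping the roles of $g'$ and $h'$, and again yields a contribution of $n$. Summing over $k$ gives $\sum_{h} \|\cdot\|_p^p = Kn$, hence $d^p(P'_M,P'_N,\sigma) = (Kn)^{1/p}$.

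I expect the main subtlety to be bookkeeping rather than a genuine obstacle: namely, confirming that in both the $\Pp$-case and the $\Q$-case every matched pair of generators differs by exactly $(\pm 1,0)$ (and never more), which is precisely what conditions (i)–(iii) were designed to guarantee — (ii) forces the $n-1$ ``low'' generators of the opposite module to sit at $p_k+(1,0)$, leaving the single ``special'' generator to absorb the jump from $p_k+(2,0)$ to $p_k+(3,0)$ — and that the $\infty$-part and the relations genuinely contribute zero despite $\phi$ acting nontrivially there, which works because everything at $\infty$ shares the single grade $(C,C)$. Once these are in hand, both claims of the lemma follow.
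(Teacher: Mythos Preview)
Your approach matches the paper's: set $\sigma = \iota_N^{-1}\circ\phi\circ\iota_M$, check that it restricts to bijections on generators and relations, and then compute $d^p$ grade by grade. Your computation for the $Kn$ generators $g'_{k,i}$ with $1\le k\le K$ is correct, as is the observation that the relations contribute zero.

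There is, however, a genuine gap in your treatment of the $\infty$-generators. You correctly notice that $\phi_\infty$ has matrix $S$ rather than the identity, so that $\sigma(g'_{\infty,i}) = \sum_j S_{j,i}\, h'_{\infty,j}$ is in general a nontrivial linear combination. Your resolution---that ``$\sigma$ permits an arbitrary isomorphism''---misreads the definition of $\sigma$-compatibility, which explicitly requires $\sigma$ to restrict to a \emph{bijection} $G'_M \to G'_N$; moreover $d^p(P'_M,P'_N,\sigma)$ uses $\gr'_N(\sigma(h))$, which is not even defined unless $\sigma(h)\in G'_N\cup R'_N$. So as written, neither compatibility nor the distance computation is established for these $n$ generators.

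The fix is immediate (and is what the paper tacitly relies on when it asserts, without further comment, that $\sigma$ restricts to a bijection on generators): since every $h'_{\infty,j}$ has grade $(C,C)$, one may take $\iota_N(h'_{\infty,i}) \coloneqq \phi(g_{\infty,i})$ instead of $h_{\infty,i}$. The images $\{\phi(g_{\infty,i})\}_i$ still form a basis of $\tilde N^\infty$, so the regeneration argument for $P'_N$ in \cref{lem_pres_of_M_and_N} is unaffected, and now $\sigma(g'_{\infty,i}) = h'_{\infty,i}$ on the nose, with grade difference zero. With this adjustment your proof goes through and agrees with the paper's.
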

\begin{proof}
Since $\iota_M$, $\phi$ and $\iota_N$ are all isomorphisms, the composition
\[
\sigma\coloneqq \iota_N^{-1}\circ \phi\circ \iota_M\colon \F^{G'_M}\to \F^{G'_N}
\]
is an isomorphism.
By construction, $\sigma$ restricts to bijections $G'_M\to G'_N$ and $R'_M\to R'_N$, so $P'_M$ and $P'_N$ are $\sigma$-compatible.

It remains to compute $d^p(P'_M,P'_N,\sigma)$.
The grades of all the elements of $R'_M$ and $R'_N$ are equal, and for all $g\in G'_M$, $\gr'_M(g)-\gr'_N(\sigma(g)) = (\pm 1,0)$.
To see the latter, note that if for instance $\gr'_M(g) = q_k$, then $g=g'_{k,1}$, so $\sigma(g) = h'_{k,1}$, and $\gr'_N(h'_{k,1}) = q_k+(1,0)$ by (ii) in the construction of $P'_M$ and $P'_N$, since $1<n$.
The other cases are similar or straightforward.
This gives
\begin{linenomath}\begin{align*}
d^p(P'_M,P'_N,\sigma) &= \left(\sum_{g\in G'_M}\|\gr'_M(g)-\gr'_N(\sigma(g))\|^p_p\right)^{\frac 1p}\\
&= \left(\sum_{g\in G'_M\setminus \{g'_{\infty,1},\dots,g'_{\infty,n}\}} 1\right)^{\frac 1p}\\
&= |G'_M\setminus \{g'_{\infty,1},\dots,g'_{\infty,n}\}|^{\frac 1p}\\
&= (Kn)^{\frac 1p}.\qedhere
\end{align*}\end{linenomath}
\end{proof}

\subsubsection{Step 2: If $(\Pp,\Q)$ has no solution, then $d_I^p(M,N)\geq (Kn+1)^{\frac 1p}$.}
\label{subsec_step_2_mods}

We now assume that $(\Pp,\Q)$ has no solution, and will show $d_I^p(M,N)\geq (Kn+1)^{\frac 1p}$.

\begin{definition}
An ordered presentation is a tuple $P=(G,R,\gr,\leq)$ such that
\begin{itemize}
\item $(G,R,\gr)$ is a one-parameter presentation, i.e., a presentation of a one-parameter persistence module,
\item $\leq$ is a total order on $G\cup R$ such that for all $h\leq h'$, we have $\gr(h)\leq \gr(h')$.
\end{itemize}
We define $\M(P)$ as the one-parameter module $\M((G,R,\gr))$.
\end{definition}
For any $q\in \R^2$, let $q_x$ be the $x$-coordinate of $q$.
Given a $2$-parameter presentation $P=(G,R,\gr)$, we define a one-parameter presentation $\bar P=(G,R,\bar \gr)$ by $\bar \gr(h) = \gr(h)_x$ for all $h\in G\cup R$.
If $M=\M(P)$, then we define $\bar M$ as $\M(\bar P)$.
\begin{definition}
Two (two-parameter) presentations
$P = (G,R,\gr)$ and $P' = (G',R',\gr')$ are \emph{strongly compatible} if
\begin{itemize}
\item $G'=G$ and $R'=R$, and
\item there is a total order $\leq$ on $G\cup R$ such that $(G,R,\bar \gr,\leq)$ and $P'=(G,R,\bar \gr',\leq)$ are ordered presentations, and such that we do not have $\bar \gr(h) < z < \bar \gr'(h)$ or $\bar \gr(h) > z > \bar \gr'(h)$ for any $h\in G\cup R$, $z\in \Z$.
\end{itemize}
\end{definition}

\begin{figure}
\centering
\begin{tikzpicture}[scale=1.5]
\node at (-.3,1){$P_M=P_1$};
\draw[shorten <=.3cm, shorten >=.3cm] (0,1) to (.3,0);
\node at (.3,0){$M = M_1\cong M'_1$};
\draw[shorten <=.3cm, shorten >=.3cm] (1.2,1) to (0.9,0);
\node at (1.2,1){$P'_1$};
\draw (1.4,1) to (2.4,1);
\node at (1.9,1.4){strongly};
\node at (1.9,1.14){comp.};
\node at (2.6,1){$P_2$};
\draw[shorten <=.3cm, shorten >=.3cm] (2.6,1) to (2.9,0);
\node at (3.2,0){$M_2\cong M'_2$};
\draw[shorten <=.3cm, shorten >=.3cm] (3.8,1) to (3.5,0);
\node at (3.8,1){$P'_2$};
\node at (4.9,.5){$\dots$};
\begin{scope}[xshift=5cm]
\node at (1,0){$M'_{\ell-1}$};
\draw[shorten <=.3cm, shorten >=.3cm] (1.3,1) to (1,0);
\node at (1.3,1){$P'_{\ell-1}$};
\draw (1.6,1) to (2.5,1);
\node at (2.05,1.4){strongly};
\node at (2.05,1.14){comp.};
\node at (3,1){$P_\ell=P_N$};
\draw[shorten <=.3cm, shorten >=.3cm] (2.7,1) to (3,0);
\node at (3.3,0){$M_\ell = N$};
\end{scope}
\end{tikzpicture}
\caption{The ``zigzag'' of modules and presentations in \cref{lem_refinement_mods}, going from $P_M$ and $M$ on the left to $P_N$ and $N$ on the right through isomorphic modules and strongly compatible presentations.}
\label{fig_M_i_P_i}
\end{figure}

Recall the construction of the presentations $P_M$ and $P_N$ of the modules $M$ and $N$, respectively, in \cref{sec:pmodMN}.
\begin{lemma}
\label{lem_refinement_mods}
For every $\epsilon>0$, there are $\ell\geq 1$ and presentations $P_i = (G_i,R_i,\gr_i)$ for $1\leq i\leq \ell$ and $P'_i = (G_{i+1},R_{i+1},\gr'_i)$ for $1\leq i\leq \ell-1$ such that
\begin{equation}
d_I^p(M,N) + \epsilon > \sum_{i=1}^{\ell-1} d^p(P'_i, P_{i+1}),
\end{equation}
and such that for all $i$,
\begin{itemize}
\item[(i)] $P_1=P_M$ and $P_\ell=P_N$,
\item[(ii)] for $1\leq i<\ell$, $M_i\coloneqq \M(P_i)$ and $M'_i\coloneqq \M(P'_i)$ are isomorphic,
\item[(iii)] for $1\leq i<\ell$, $P'_i$ and $P_{i+1}$ are strongly compatible.
\end{itemize}
\end{lemma}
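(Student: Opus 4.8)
The plan is to follow the proof of \cref{lem_refinement} almost verbatim, with the one-parameter grading functions there replaced by $\R^2$-valued ones, the finite set $\{0,1,2,3\}$ replaced by $\Z$, and all the ``no-crossing'' bookkeeping carried out on the $x$-coordinate only (which is all that strong compatibility constrains), even though the interpolation itself takes place in $\R^2$. First I would unwind \cref{def_pres_dist_mods} to obtain $2$-parameter modules $M=\tilde M_1,\dots,\tilde M_m=N$ together with, for each $j$, $\sigma_j$-compatible presentations $A_j$ of $\tilde M_j$ and $B_{j+1}$ of $\tilde M_{j+1}$ whose $d^p$-distances sum to less than $d_I^p(M,N)+\epsilon$. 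Since $\sigma_j$ carries the basis $G_{A_j}$ bijectively onto the basis $G_{B_{j+1}}$, it is the linear extension of that bijection; relabelling $B_{j+1}$ along $\sigma_j^{-1}$ therefore replaces it by a presentation of a module isomorphic to $\tilde M_{j+1}$ with the \emph{same} generator and relation multisets as $A_j$ and the same $d^p$-distance from $A_j$, so after relabelling each consecutive pair differs only in its grading function. These relabellings, and the change of presentation at each $\tilde M_j$ (from the relabelled $B_j$ to the relabelled $A_j$, both presenting modules isomorphic to $\tilde M_j$), will be absorbed into the ``free'' moves $P_i\to P'_i$, whose only requirement is $\M(P_i)\cong\M(P'_i)$; and I would meet (i) by prepending the free move $P_M\to A_1$ (legitimate, as both present $M$) and appending the free move $\tilde B_m\to P_N$ followed by the trivial strongly compatible move $P_N\to P_N$.

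Next, for a single regrading $A=(G,R,\gr_A)$, $B=(G,R,\gr_B)$ I would interpolate linearly, $\gr_\lambda=(1-\lambda)\gr_A+\lambda\gr_B$. A convex combination preserves the product order on $\R^2$, so $r\in\langle G_{\gr_A(r)}\rangle\cap\langle G_{\gr_B(r)}\rangle$ forces $r\in\langle G_{\gr_\lambda(r)}\rangle$, and $(G,R,\gr_\lambda)$ is a valid $2$-parameter presentation for every $\lambda$. For $c,c'\in G\cup R$ the $x$-coordinate difference $\overline{\gr_\lambda}(c)-\overline{\gr_\lambda}(c')$ is affine in $\lambda$, hence changes sign at most once, and each affine map $\overline{\gr_\lambda}(c)$ meets $\Z$ at only finitely many $\lambda\in[0,1]$; adjoining $\{0,1\}$ and all these finitely many values yields breakpoints $0=\lambda_0<\dots<\lambda_t=1$ such that on each subinterval no $x$-coordinate crosses an integer and the weak $x$-order of $G\cup R$ is constant. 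Sorting $G\cup R$ by $\overline{\gr_{\lambda_u}}$, breaking ties by $\overline{\gr_{\lambda_{u+1}}}$ and then arbitrarily gives, exactly as in \cref{lem_refinement}, a single total order compatible with both endpoints, so consecutive $(G,R,\gr_{\lambda_u})$ and $(G,R,\gr_{\lambda_{u+1}})$ are strongly compatible; splicing a trivial free move between them makes the chain a legal zigzag segment. Concatenating these segments over $j$, with the endpoint adjustments above, produces the full sequence, and (ii) and (iii) hold by construction.

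For the cost bound I would telescope exactly as in \cref{lem_refinement}: writing $L(\lambda)\in\R^{2|G\cup R|}$ for the flattened tuple of all grades, one has $d^p\big((G,R,\gr),(G,R,\gr')\big)=\|L-L'\|_p$, and since $L(\lambda)=(1-\lambda)L(0)+\lambda L(1)$ is affine the triangle inequality for $\|\cdot\|_p$ becomes an equality, so the refined regradings of $A\to B$ contribute exactly $\|L(1)-L(0)\|_p=d^p(A,B)$; the two appended moves contribute $0$, and summing over $j$ gives the desired strict inequality.

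I expect the main obstacle to be the same one as in \cref{lem_refinement}: choosing the breakpoints finely enough that one total order survives across each sub-step while simultaneously no $x$-coordinate sweeps past an integer. The only genuinely new points are that this bookkeeping is now done for the $x$-coordinate alone (the $\R^2$-interpolation being needed only to keep the intermediate presentations valid and to telescope the cost), that $\{0,1,2,3\}$ is replaced by all of $\Z$, and the small amount of care needed to arrange $P_1=P_M$ and $P_\ell=P_N$ on the nose by routing everything else through the free moves.
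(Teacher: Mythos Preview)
Your proposal is correct and is exactly the approach the paper intends: the paper's own proof says only ``The proof is near identical to the proof of \cref{lem_refinement}. We omit the details.'' You have supplied those details faithfully, including two points the paper leaves implicit---verifying that the linear interpolation $\gr_\lambda$ yields valid $2$-parameter presentations (via preservation of the product order), and arranging condition~(i) ($P_1=P_M$, $P_\ell=P_N$) by routing through cost-free isomorphism moves at the endpoints---so your write-up is, if anything, more complete than the paper's.
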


\begin{proof}
The proof is near identical to the proof of \cref{lem_refinement}. We omit the details; see \cref{fig_M_i_P_i} for an illustration.
\end{proof}

Let $A$ be a module.
For $v\in A^\infty$, define
\[
I(v) = \{p\in \R^2\mid v\in \img A_{p\to \infty}\}.
\]
We shall occasionally write $I^{\gr}(v)$ to make it clear which grade function we are using.

\begin{lemma}
If $f\colon A\to B$ is an isomorphism of modules, then $I(v) = I(f_\infty(v))$ for all $v\in A^\infty$.
\end{lemma}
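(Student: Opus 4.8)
The plan is to deduce the statement from naturality of $f$, using that $f^{-1}$ is again an isomorphism so that it suffices to prove a single inclusion. Recall from \cref{subsec:addnot} that for a fixed $p$ we have $A_{p\to\infty}=A_{p\to p'}$ and $f_\infty=f_{p'}$ for any $p'\geq p$ large enough, and likewise $B_{p\to\infty}=B_{p\to p'}$; since the modules are finitely presented, a single $p'$ works simultaneously for $A$, $B$ and $f_\infty$. For that $p'$, the naturality square $f_{p'}\circ A_{p\to p'}=B_{p\to p'}\circ f_p$ reads
\[
f_\infty\circ A_{p\to\infty}=B_{p\to\infty}\circ f_p .
\]

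First I would prove $I(v)\subseteq I(f_\infty(v))$. If $p\in I(v)$, then $v=A_{p\to\infty}(w)$ for some $w\in A_p$, and the identity above gives $f_\infty(v)=B_{p\to\infty}(f_p(w))\in\img B_{p\to\infty}$, i.e.\ $p\in I(f_\infty(v))$. Then I would apply this inclusion to the isomorphism $f^{-1}\colon B\to A$ (noting $(f^{-1})_\infty=(f_\infty)^{-1}$) and the vector $f_\infty(v)\in B^\infty$, obtaining $I(f_\infty(v))\subseteq I\big((f_\infty)^{-1}(f_\infty(v))\big)=I(v)$. Combining the two inclusions yields the claim.

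There is essentially no obstacle here beyond bookkeeping: the only point needing care is to choose the single ``sufficiently large'' $p'$ that identifies $A_{p'}$ with $A^\infty$, $B_{p'}$ with $B^\infty$, and $f_{p'}$ with $f_\infty$, so that the naturality square of $f$ specializes to the displayed identity involving $A_{p\to\infty}$ and $B_{p\to\infty}$. Everything else is a direct application of the definition of $I(\cdot)$ and the fact that an isomorphism of persistence modules is pointwise invertible and natural.
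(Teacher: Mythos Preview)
Your proposal is correct and follows essentially the same approach as the paper's proof: show $I(v)\subseteq I(f_\infty(v))$ via the naturality square $f_\infty\circ A_{p\to\infty}=B_{p\to\infty}\circ f_p$, then obtain the reverse inclusion by applying the same argument to $f^{-1}$. Your extra care in fixing a single sufficiently large $p'$ is fine but not strictly necessary, since the paper's conventions already ensure $A_{p\to\infty}$ and $f_\infty$ are well-defined.
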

\begin{proof}
Suppose $s\in I(v)$.
Then there is a $u\in A_s$ such that $A_{s\to \infty}(u) = v$.
We get $B_{s\to \infty}(f_s(u)) = f_\infty(A_{s\to \infty}(u)) = f(v)$, so $s\in I(f(v))$.
It follows that $I(v) \sse I(f_\infty(v))$, and a similar argument using $f^{-1}$ shows that $I(v) \sse I(f_\infty(v))$.
\end{proof}

\begin{lemma}
\label{lem_I_border_gen}
Let $P=(G,R,\gr)$ and $P'=(G,R,\gr')$ be presentations, $v\in \M(P)^\infty = \M(P')^\infty$, and suppose $q\in I^{\gr}(v)$ and $q'\notin I^{\gr'}(v)$.
Then there is a $g\in G$ such that either $\gr(g)_x \leq q_x$ and $\gr'(g)_x > q'_x$, or $\gr(g)_y \leq q_y$ and $\gr'(g)_y > q'_y$.
\end{lemma}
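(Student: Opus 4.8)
The plan is to unwind the definitions of $I^{\gr}(v)$ and $I^{\gr'}(v)$ and argue by contradiction, crucially exploiting that $P$ and $P'$ share the same generator set $G$ and relation set $R$, so that $\M(P)^\infty = \F^G/\langle R\rangle = \M(P')^\infty$ and a single element of $\F^G$ can represent $v$ for both presentations at once.

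First I would record what the hypotheses say concretely. Since $p\in I^{\gr}(v)$, the class $v$ lies in the image of $\M(P)_{p\to\infty}$, which equals $(\langle G_p^{\gr}\rangle + \langle R\rangle)/\langle R\rangle$ because $\M(P)_{p\to\infty}$ is induced by the inclusion $\langle G_p^{\gr}\rangle \hookrightarrow \F^G$. Hence there is a representative $u\in \F^G$ of $v$ (that is, $u + \langle R\rangle = v$) with $u\in \langle G_p^{\gr}\rangle$. Writing $u = \sum_{g\in G}\lambda_g g$ with $\lambda_g\in \F$, membership in $\langle G_p^{\gr}\rangle$ means every $g$ with $\lambda_g\neq 0$ satisfies $\gr(g)\leq p$, i.e.\ $\gr(g)_x\leq p_x$ and $\gr(g)_y\leq p_y$. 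Dually, $p'\notin I^{\gr'}(v)$ says precisely that \emph{no} representative of $v$ lies in $\langle G_{p'}^{\gr'}\rangle$.

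Now suppose, for contradiction, that there is no $g\in G$ as in the statement. Negating the disjunction, for every $g\in G$ we have both ``$\gr(g)_x > p_x$ or $\gr'(g)_x\leq p'_x$'' and ``$\gr(g)_y > p_y$ or $\gr'(g)_y\leq p'_y$''. Apply this to each $g$ appearing in $u$ with $\lambda_g \neq 0$: since $\gr(g)_x\leq p_x$, the first alternative of the first clause fails, forcing $\gr'(g)_x\leq p'_x$; likewise $\gr'(g)_y\leq p'_y$. Hence $\gr'(g)\leq p'$ for every generator in the support of $u$, so $u\in \langle G_{p'}^{\gr'}\rangle$. Since $\langle R\rangle$ is unchanged, $u$ still represents $v$, so $v\in \img \M(P')_{p'\to\infty}$, i.e.\ $p'\in I^{\gr'}(v)$ — contradicting the hypothesis. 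This proves the lemma.

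I do not expect a genuine obstacle: the content is a direct translation of ``$v$ is supported below $p$ in $\gr$'' into ``$v$ is supported below $p'$ in $\gr'$'', via the observation that, absent the asserted $g$, every generator below $p$ in a given coordinate stays below $p'$ in that coordinate. The only points needing a little care are (a) getting the logical negation of the two-clause disjunctive condition right, and (b) noting explicitly that because $P$ and $P'$ have identical generators and relations, the \emph{same} vector $u\in\F^G$ serves as a valid representative of $v$ for both presentations simultaneously, which is exactly what closes the argument.
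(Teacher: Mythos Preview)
Your proof is correct and follows essentially the same idea as the paper's: both exploit that $\img(\M(P)_{p\to\infty})$ is the image of $\langle G_p^{\gr}\rangle$ in $\F^G/\langle R\rangle$, and deduce a generator $g$ with $\gr(g)\leq p$ but $\gr'(g)\nleq p'$. The paper argues directly by showing $G_p^{\gr}\nsubseteq G_{p'}^{\gr'}$ and picking $g$ in the difference, whereas you work by contradiction through a fixed representative $u$; the logical content is the same.
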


\begin{proof}
We claim that there is a $g\in G$ such that $\gr(g)\leq q$ and $\gr'(g)\nleq q'$.
From this it follows that $\gr(g)_x \leq q_x$ and $\gr(g)_y \leq q_y$, and either $\gr(g)_y > q_y$ or $\gr'(g)_y > q'_y$, which proves the lemma.

We have $\img(\M(P)_{q\to \infty}) = \overline{G^{\gr}_q}$ and $\img(\M(P')_{q'\to \infty}) = \overline{G^{\gr'}_{q'}}$, where $\overline{H}$ is the subspace of $\M(P)_\infty$ generated by $H$.
Since $q\in I^{\gr}(v)$ and $q'\notin I^{\gr'}(v)$, we have $v\in \img(\M(P)_{p\to \infty})\setminus \img(\M(P')_{q'\to \infty})$.
Thus, $G^{\gr}_q\nsubseteq G^{\gr'}_{q'}$, so we can choose a $g\in G^{\gr}_p\setminus G^{\gr'}_{q'}$.
This means that $\gr(g)\leq q$, but $\gr'(g)\nleq q'$, so our claim is proved.
\end{proof}

By \cref{lem_refinement_mods}, we have isomorphisms $M_i\to M'_i$ for all $i$, which induce isomorphisms $\sigma_i\colon (M_i)_\infty\to (M'_i)_\infty = (M_{i+1})_\infty$.
Let $\sigma \coloneqq \sigma_{\ell-1}\circ \dots\circ \sigma_1\colon M^\infty\to N^\infty$.
With respect to the canonical bases, $\sigma$ and $\sigma^{-1}$ have transformation matrices $S$ and $S^{-1}$, respectively.

\begin{lemma}
\label{lem_1_to_3_mods}
There is a $u \in M^\infty$ and a $1\leq k\leq K$ such that either
\begin{itemize}
\item[(i)] $q_k\in I(u)$ and $q_k+(2,2)\notin I(\sigma(u))$, or
\item[(ii)] $q_k\in I(\sigma(u))$ and $q_k+(2,2)\notin I(u)$.
\end{itemize}
\end{lemma}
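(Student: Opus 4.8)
The plan is to exploit that, since $(\Pp,\Q)$ has no solution, the pair $(S,S^{-1})$ — where $S$ is the transformation matrix of $\sigma$ with respect to the canonical bases — fails to be a solution of $(\Pp,\Q)$. Unpacking the definition of a solution, this means that one of the following holds: either there is a position $(a,b)$ with $\Pp_{a,b}=0$ but $S_{a,b}\neq 0$, or there is a position $(a,b)$ with $\Q_{a,b}=0$ but $(S^{-1})_{a,b}\neq 0$. I would treat these two cases separately; they produce conclusions (i) and (ii) respectively, by arguments that are symmetric under swapping $M\leftrightarrow N$ and $\sigma\leftrightarrow \sigma^{-1}$.

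The key preliminary step is to record the relevant images $\img(M_{p\to\infty}) = \overline{(G_M)_p}$ (notation as in the proof of \cref{lem_I_border_gen}) at the two grades that matter. Let $p_k$ be the point associated to the zero in position $(a,b)$ of $\Pp$. At grade $p_k$ the only generator of $M$ present is $g_{k,b}$: generators associated to other points $p_{k'}$ are excluded by the separation condition $p_{k'}\nleq p_k+(4,4)$ imposed in \cref{sec:pmodMN}, and the $g_{\infty,i}$ have grade $(C,C)$; moreover $[g_{k,b}]=\bar g_{\infty,b}$ in $M^\infty$ because $r_{k,b}=g_{k,b}-g_{\infty,b}\in R_M$. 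Hence $\overline{(G_M)_{p_k}} = \langle \bar g_{\infty,b}\rangle$. At grade $p_k+(2,2)$ the generators of $N$ present are exactly the $h_{k,i'}$ with $i'\neq a$, since those have grade $p_k+(1,0)\leq p_k+(2,2)$ whereas $h_{k,a}$ has grade $p_k+(3,0)\nleq p_k+(2,2)$ (again all other generators being excluded by the separation condition). Hence $\overline{(G_N)_{p_k+(2,2)}} = \langle \bar h_{\infty,i'}\mid i'\neq a\rangle$, the coordinate hyperplane omitting the $a$-th basis vector.

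Given these facts, the argument is immediate. In the first case, take $u=\bar g_{\infty,b}\in M^\infty$; then $p_k\in I(u)$ by the first computation, while $\sigma(u)=\sum_i S_{i,b}\,\bar h_{\infty,i}$ has nonzero $a$-th coordinate because $S_{a,b}\neq 0$, so $\sigma(u)\notin\langle \bar h_{\infty,i'}\mid i'\neq a\rangle$, i.e.\ $p_k+(2,2)\notin I(\sigma(u))$, which is (i). In the second case, let $p_k$ be associated to the zero at $(a,b)$ of $\Q$ and set $u=\sigma^{-1}(\bar h_{\infty,b})$; the analogous computations for $N$ at $p_k$ and for $M$ at $p_k+(2,2)$ (now $g_{k,a}$ being the generator of grade $p_k+(3,0)$) give $p_k\in I(\sigma(u))=I(\bar h_{\infty,b})$, while $u=\sum_i (S^{-1})_{i,b}\,\bar g_{\infty,i}$ has nonzero $a$-th coordinate since $(S^{-1})_{a,b}\neq 0$, so $p_k+(2,2)\notin I(u)$, which is (ii).

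I do not expect a conceptual obstacle here: morally, the offset $(2,2)$ was chosen precisely so that the ``$+(3,0)$'' generator is the unique one excluded at grade $p_k+(2,2)$, and the lemma is just the matrix-level reflection of ``$(S,S^{-1})$ is not a solution''. The only genuine care needed is bookkeeping — verifying that no foreign generators (those attached to $p_{k'}$ with $k'\neq k$, and the $g_{\infty,i}$/$h_{\infty,i}$) intrude into the sets $(G_M)_{p_k}$, $(G_N)_{p_k+(2,2)}$, etc., which is exactly where the separation hypotheses on $p_1,\dots,p_K$ are used — and keeping the matrix convention straight, so that ``$S$ has a zero in position $(a,b)$'' indeed corresponds to the $a$-th coordinate of $\sigma(\bar g_{\infty,b})$ vanishing.
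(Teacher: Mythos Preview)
Your proposal is correct and follows essentially the same approach as the paper: both choose $u=\bar g_{\infty,b}$ (respectively $u=\sigma^{-1}(\bar h_{\infty,b})$) when $S_{a,b}\neq 0$ with $\Pp_{a,b}=0$ (respectively $(S^{-1})_{a,b}\neq 0$ with $\Q_{a,b}=0$), and deduce the claim by identifying which generators lie below $p_k$ and $p_k+(2,2)$. The only cosmetic difference is that the paper phrases the second half via the direct-sum decomposition, writing $I(\sigma(u))=\bigcap_{\lambda_i\neq 0} I(\bar h_{\infty,i})\subseteq I(\bar h_{\infty,a})$, whereas you compute $\img(N_{p_k+(2,2)\to\infty})=\langle \bar h_{\infty,i'}\mid i'\neq a\rangle$ directly; these are the same observation.
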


\begin{proof}
Since $(\Pp,\Q)$ has no solution, either $S$ or $S^{-1}$ has a nonzero element where $\Pp$ or $\Q$, respectively, has a zero.
Suppose $S_{a,b} \neq 0$ and $\Pp_{a,b}=0$ for some $1\leq a,b\leq n$.
Write $v\coloneqq \sigma(\bar g_{\infty,b}) = \sum_{i=i}^n \lambda_i \bar h_{\infty,i}$.
(Recall that $M^\infty$ has a basis $\{\bar g_{\infty,1},\dots,\bar g_{\infty,n}\}$, and $N^\infty$ has a basis $\{\bar h_{\infty,1},\dots,\bar h_{\infty,n}\}$.)
Since the $\bar h_{\infty,i}$ are elements of different direct summands of $N$, we get $I(v) = \bigcap_{\lambda_i \neq 0} I(\bar h_{\infty,i})$.
Since $S_{a,b} \neq 0$, we have $\lambda_a\neq 0$, and thus $I(v)\sse I(\bar h_{\infty,a})$.
By construction, for the $k$ associated to the zero of $\mathcal P$ at $(a,b)$, we have $\gr(\bar g_{k,b}) = q_k$ and $\gr(\bar h_{k,a}) = q_k+(3,0)$ and therefore $q_k\in I(\bar g_{\infty,b})$ and $q_k+(2,0)\notin I(\bar h_{\infty,a})$.
Thus, $q_k+(2,2)\notin I(v)$, so (i) holds.

If $S^{-1}_{a,b} \neq 0$ and $\Q_{a,b}=0$ for some $1\leq a,b\leq n$, then we can prove (ii) using $\sigma^{-1}$ by the same method.
\end{proof}

Fix a $u \in M^\infty$ and a $1\leq k\leq K$ as in \cref{lem_1_to_3_mods}, let $u_i = \sigma_{i-1}\circ \dots\circ \sigma_1(u)\in M_i^\infty$, let $I_i = I(u_i)$, and let $m_i = \min\{c\in \R\mid q_k+(c,c)\in I_i\}$.

In the rest of the proof, we will assume that case (i) holds in \cref{lem_1_to_3_mods} to avoid having to juggle two cases.
Our strategy is to track $m_i$ as it moves from $1$ to $2$ and deduce that this gives a certain cost coming from changes in grades of generators.
If (ii) holds instead, we can run the same argument with $m_i$ moving from $2$ to $1$. The remainder of the proof closely resembles that of the merge tree case. Recall from \cref{sec:pmodMN} that $C=4(Kn+1)^{1/p}$. 

In what follows, we allow barcodes to contain empty intervals of the form $[a,a)$ to avoid having to handle the case $[\gr(g),\gr(r))$ with $\gr(g)=\gr(r)$ separately.
One can check that the cost of matching an interval $[b,c)$ to $[a,a)$ is at least as much as the cost of leaving $[b,c)$ unmatched in a matching of barcodes, so this convention does not allow us to find cheaper matchings.
\begin{lemma}
\label{lem_pairing}
Let $P=(G,R,\gr,\leq)$ be an ordered presentation.
Then there is a $\Pi\sse G\times R$ that does not depend on $\gr$ such that
\begin{itemize}
\item[(i)] each $g\in G$ and $r\in R$ appears in at most one element of $\Pi$, and
\item[(ii)] $B(\M(P)) = \{[\gr(g),\gr(r))\mid (\gr(g),\gr(r))\in \Pi\}\cup \{[\gr(g),\infty)\mid g\notin \Pi_G\}$,
\end{itemize}
where $\Pi_G$ is the set of $g\in G$ appearing in an element of $\Pi$.
\end{lemma}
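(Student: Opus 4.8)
The plan is to run the standard matrix-reduction algorithm for one-parameter persistence on a boundary matrix extracted from $P$, exactly as in the proof of \cref{lem_subdivide_merge}, and to read $\Pi$ off from the resulting pairing. First I would order $G=\{g_1,\dots,g_N\}$ and $R=\{r_1,\dots,r_{N'}\}$ according to $\leq$ and form the matrix $D\in \F^{N\times N'}$ whose $(i,j)$-entry is the coefficient of $g_i$ in $r_j\in \F^G$. Since $\leq$ refines the grade order, for every $s\in \R$ the sets $G_s$ and $R_s$ are prefixes (down-sets) of these two orderings, and $\M(P)_s=\langle G_s\rangle/\langle R_s\rangle$ with transition maps induced by the inclusions $\langle G_s\rangle\hookrightarrow \langle G_t\rangle$.

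Next I would reduce $D$ by left-to-right column operations over $\F$: whenever two columns have their lowest nonzero entry in the same row, add a scalar multiple of the left one to the right one, until all nonzero columns have pairwise distinct lowest nonzero rows. Writing $\tilde r_j\in \F^G$ for the $j$-th column of the reduced matrix and $\mathrm{low}(\tilde r_j)$ for the $\leq$-largest generator occurring in it (when $\tilde r_j\neq 0$), set
\[
\Pi=\{(\mathrm{low}(\tilde r_j),\,r_j)\mid \tilde r_j\neq 0\}.
\]
Distinctness of the lows gives (i) immediately. The key point for the ``independent of $\gr$'' claim is that the reduction is driven only by the column ordering of $D$ and by the entries of $D$; these are determined by $G$, by $R$ as a subset of $\F^G$, and by $\leq$, with no reference to $\gr$. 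Hence $\Pi$ depends only on $(G,R,\leq)$.

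It remains to verify (ii). Adding only earlier columns to later ones preserves the span of every prefix of columns, so $\langle R_s\rangle=\langle\{\tilde r_j\mid \gr(r_j)\leq s\}\rangle$ for all $s$; and nonzero reduced columns with distinct lows are linearly independent, so $\{\tilde r_j\mid \gr(r_j)\leq s,\ \tilde r_j\neq 0\}$ is a basis of $\langle R_s\rangle$. Each such $\tilde r_j$ is an $\F$-combination of relations $r_{j'}\leq r_j$, all lying in $\langle G_{\gr(r_j)}\rangle$, so $\tilde r_j\in \langle G_{\gr(r_j)}\rangle$ and in particular $\gr(\mathrm{low}(\tilde r_j))\leq \gr(r_j)$, i.e.\ every such low lies in $G_s$. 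Gaussian elimination then shows that the classes $[g]$ for $g\in G_s$ that are not the low of any $\tilde r_j$ with $\gr(r_j)\leq s$ form a basis of $\M(P)_s$, compatibly with the transition maps (which act as inclusions on these basis elements). This exhibits $\M(P)$ as a direct sum of interval modules, one supported on $[\gr(g),\gr(r))$ for each $(g,r)\in \Pi$ and one supported on $[\gr(g),\infty)$ for each $g\in G\setminus \Pi_G$, which is exactly the barcode description in (ii).

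I expect the only real obstacle to be the bookkeeping ensuring that these surviving generators give \emph{compatible} bases of $\M(P)_s$ for all $s$ simultaneously, rather than merely matching pointwise dimensions; this is precisely the standard correctness argument for the persistence algorithm, and it follows because passing to the reduced columns is a single global change of basis of $\langle R\rangle$, after which the non-pivot generators basis every quotient $\langle G_s\rangle/\langle R_s\rangle$ at once. I would also remark that, in contrast to the usual simplicial setting, we never need a generator occurring in a relation to precede that relation in $\leq$; only relation-columns are ever combined, so the argument uses $\leq$ exactly as given — which is what makes $\Pi$ manifestly independent of $\gr$.
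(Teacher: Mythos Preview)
Your approach is correct and essentially identical to the paper's: form the presentation matrix with rows and columns ordered by $\leq$, run the standard column reduction, and read $\Pi$ from the pivots, observing that the pairing depends only on the matrix entries and their ordering (hence on $(G,R,\leq)$) and not on $\gr$. The paper's proof is more terse, simply citing the standard persistence algorithm and the pairing lemma from \cite{edelsbrunner2022computational}, while you unpack the mechanism; the one loose end you correctly flag---that the classes $[g]$ for non-pivot $g$ do not by themselves form a \emph{compatible} family of bases (when $g$ later becomes a pivot, $[g]$ maps to a nontrivial combination of lower generators rather than to zero)---is resolved by the further generator change of basis (or equivalently the rank computation) in the standard correctness proof you defer to.
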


\begin{proof}
Given $P$ as in the lemma, one can define a matrix $S$ with each row labeled with an element of $G$ and each column labeled with an element of $R$, such that the row (column) labels are increasing with respect to $\leq$ moving downwards (towards the right), and the column of any $r\in R$ is the coordinate vector of $r$ with respect to the ordered basis of $\F^G$ given by the row labels.
The standard persistence algorithm \cite[Ch.~VII.1]{edelsbrunner2022computational} outputs a pairing of rows with columns with each row and column appearing in at most one pair, and the barcode can be read off as in (ii).
Moreover, this pairing of rows with columns is independent of $\gr$, since by the pairing lemma, it depends only on the (unlabeled) matrix $S$.
\end{proof}
Recall that for a presentation $(G,R,\gr)$ and $h\in G\cup R$, $\bar \gr(h)$ is the $x$-coordinate of $\gr(h)$.
\begin{lemma}
\label{lem_subdivide}
Suppose $M_1, M_2, \dots, M_\ell$ and $P'_1,P_2,P'_2,\dots, P'_{\ell-1},P_\ell$ satisfy the conditions of \cref{lem_refinement_mods}.
Then, for every $1\leq i<\ell$, there is a subset $\Pi_i \subseteq G_i\times R_i$, with each $g\in G_i$ and $r\in R_i$ showing up in at most one pair of $\Pi_i$, such that
\begin{align*}
B(\bar M_i) &= \{[\bar \gr'_i(g),\bar \gr'_i(r))\mid (g,r)\in \Pi_i\}\cup \{[\bar \gr'_i(g),\infty)\mid g\notin (\Pi_i)_{G_i}\},\\
B(\bar M_{i+1}) &= \{[\bar \gr_{i+1}(g),\bar \gr_{i+1}(r))\mid (g,r)\in \Pi_i\}\cup \{[\bar \gr_{i+1}(g),\infty)\mid g\notin (\Pi_i)_{G_i}\},
\end{align*}
where $(\Pi_i)_{G_i}$ is the set of $g\in G_i$ appearing in an element of $\Pi_i$.
\end{lemma}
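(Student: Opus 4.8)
The plan is to reduce the statement to the ``grade-independent pairing'' provided by \cref{lem_pairing}, in exact parallel with the proof of \cref{lem_subdivide_merge}. Fix $1\le i<\ell$. By condition (iii) of \cref{lem_refinement_mods}, the presentations $P'_i$ and $P_{i+1}$ are strongly compatible: they share the same generator multiset $G_i$ and relation multiset $R_i$ (as multisets of elements of $\F^{G_i}$), and there is a single total order $\le$ on $G_i\cup R_i$ with respect to which both $(G_i,R_i,\bar\gr'_i,\le)$ and $(G_i,R_i,\bar\gr_{i+1},\le)$ are ordered presentations. (The ``no integer strictly between the two grades'' part of strong compatibility is not needed for this lemma; it will be used later.)

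First I would apply \cref{lem_pairing} to the ordered presentation $(G_i,R_i,\bar\gr'_i,\le)$. This produces a set $\Pi_i\sse G_i\times R_i$ in which each generator and each relation occurs in at most one pair, and with
\[
B\bigl(\M(G_i,R_i,\bar\gr'_i)\bigr)=\{[\bar\gr'_i(g),\bar\gr'_i(r))\mid (g,r)\in\Pi_i\}\cup\{[\bar\gr'_i(g),\infty)\mid g\notin(\Pi_i)_{G_i}\}.
\]
The crucial point — and the whole reason strong compatibility was arranged in \cref{lem_refinement_mods} — is that, by \cref{lem_pairing}, $\Pi_i$ is the output of the standard persistence algorithm applied to the ordered coordinate matrix of the relations $R_i$ in the basis $G_i$, and this matrix does not involve the grade function at all. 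Since $P'_i$ and $P_{i+1}$ have literally the same $G_i$, $R_i$ and the same order $\le$, this matrix is the same for both, so applying \cref{lem_pairing} to $(G_i,R_i,\bar\gr_{i+1},\le)$ returns the very same $\Pi_i$ and yields
\[
B\bigl(\M(G_i,R_i,\bar\gr_{i+1})\bigr)=\{[\bar\gr_{i+1}(g),\bar\gr_{i+1}(r))\mid (g,r)\in\Pi_i\}\cup\{[\bar\gr_{i+1}(g),\infty)\mid g\notin(\Pi_i)_{G_i}\}.
\]

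It then remains to identify these two barcodes with $B(\bar M_i)$ and $B(\bar M_{i+1})$. By definition $\overline{P_{i+1}}=(G_i,R_i,\bar\gr_{i+1})$ is a presentation of $\bar M_{i+1}$, while $\overline{P'_i}=(G_i,R_i,\bar\gr'_i)$ presents $\overline{M'_i}=\overline{\M(P'_i)}$; and by condition (ii) of \cref{lem_refinement_mods} we have $M_i\cong M'_i$, so $\bar M_i\cong\overline{M'_i}$ since passing from a $2$-parameter module to its restriction along $p\mapsto p_x$ preserves isomorphisms (it agrees with the colimit $\bar M_t=\operatorname{colim}_{\,p_x\le t}M_p$). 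Hence $B(\bar M_i)=B(\M(\overline{P'_i}))$ and $B(\bar M_{i+1})=B(\M(\overline{P_{i+1}}))$, and the two displays above are precisely the asserted descriptions of $B(\bar M_i)$ and $B(\bar M_{i+1})$.

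I do not expect a serious obstacle here: the substantive work was done in \cref{lem_refinement_mods} (producing the common refining order) and in \cref{lem_pairing} (grade-independence of the standard pairing). The only points needing a little care are the bookkeeping in the last paragraph — checking that $\bar M_i$, $\bar M_{i+1}$ are genuinely presented by $\overline{P'_i}$, $\overline{P_{i+1}}$, using $M_i\cong M'_i$ and that $\overline{(-)}$ respects isomorphisms — and invoking the convention introduced just before \cref{lem_pairing} that barcodes may contain empty intervals $[a,a)$, so that degenerate pairs $(g,r)\in\Pi_i$ with $\bar\gr(g)=\bar\gr(r)$ cause no trouble on either side.
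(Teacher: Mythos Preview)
Your proof is correct and follows essentially the same route as the paper: invoke strong compatibility from \cref{lem_refinement_mods} to obtain a single total order on the shared generators and relations, apply \cref{lem_pairing} to the two resulting ordered one-parameter presentations, and use grade-independence of the pairing to conclude that the same $\Pi_i$ computes both barcodes. Your additional paragraph justifying $B(\bar M_i)=B(\bar M'_i)$ via $M_i\cong M'_i$ and functoriality of $\overline{(-)}$ is a bit more explicit than the paper, which simply asserts this equality, but the argument is the same in substance.
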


\begin{proof}
By \cref{lem_refinement_mods}, for $1\leq i<\ell$, $B(\bar M_i)=B(\bar M'_i)$, and $P'_i = (G_{i+1},R_{i+1},\gr'_i)$ and $P_{i+1} = (G_{i+1},R_{i+1},\gr_{i+1})$ are strongly compatible.
Thus, we can find a total order $\leq_{i+1}$ on $G_{i+1}\cup R_{i+1}$ such that $(G_{i+1},R_{i+1},\bar \gr'_i,\leq_{i+1})$ and $(G_{i+1},R_{i+1},\bar \gr_{i+1},\leq_{i+1})$ are ordered presentations.
Applying \cref{lem_pairing} to the two ordered presentations, we get a $\Pi_i$ satisfying the conditions of the lemma.
\end{proof}

\begin{corollary}
\label{cor_matching_mods} Suppose $M_1, M_2, \dots, M_\ell$ and $P'_1,P_2,P'_2,\dots, P'_{\ell-1},P_\ell$ satisfy the conditions of \cref{lem_refinement_mods}.
Then $\Pi_i$ defines a matching $\phi_i\colon B(\bar M_i)\to B(\bar M_{i+1})$ such that
\begin{align*}
\phi_i([\bar \gr_i'(g), \bar \gr_i'(r))) &= [\bar \gr_{i+1}(g), \bar \gr_{i+1}(r)), \qquad (g,r)\in \Pi_i\\
\phi_i([\bar \gr_i'(g), \infty)) &= [\bar \gr_{i+1}(g),\infty), g\notin (\Pi_i)_{G_i},
\end{align*}
where $(\Pi_i)_{G_i}$ is the set of $g\in G_i$ appearing in an element of $\Pi_i$.
\end{corollary}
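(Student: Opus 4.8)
The statement to prove is \cref{cor_matching_mods}, which asserts that the pairing $\Pi_i$ from \cref{lem_subdivide} actually assembles into a well-defined matching of barcodes with the stated behavior on paired and unpaired intervals.

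The plan is to read off the matching directly from \cref{lem_subdivide}. That lemma already gives us two descriptions of the same barcodes: $B(\bar M_i) = B(\bar M'_i)$ is expressed as the multiset of intervals $[\bar\gr'_i(g),\bar\gr'_i(r))$ over pairs $(g,r)\in\Pi_i$, together with $[\bar\gr'_i(g),\infty)$ over unpaired generators $g\notin(\Pi_i)_{G_i}$; and $B(\bar M_{i+1})$ has the parallel description using $\bar\gr_{i+1}$ instead of $\bar\gr'_i$. Since both descriptions are indexed by exactly the same combinatorial data — the pairs in $\Pi_i$ and the unpaired generators — we can simply define $\phi_i$ to send the interval coming from a pair $(g,r)$ (resp.\ an unpaired $g$) in the first barcode to the interval coming from the same pair (resp.\ the same $g$) in the second barcode. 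This is the content of the two displayed equations in the corollary.

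The only thing requiring a word of care is that this is genuinely a bijection of multisets rather than merely a set-theoretic assignment: a priori, two distinct pairs $(g,r)$ and $(g',r')$ in $\Pi_i$ could give rise to the identical interval $[\bar\gr'_i(g),\bar\gr'_i(r)) = [\bar\gr'_i(g'),\bar\gr'_i(r'))$, and similarly on the $(i{+}1)$-side. But this is fine, because \cref{lem_subdivide} states the barcode \emph{as a multiset} equals the multiset of intervals indexed by $\Pi_i$ (and the unpaired generators), so the indexing set $\Pi_i \sqcup \{g : g\notin(\Pi_i)_{G_i}\}$ is in canonical bijection with the multiset of bars on \emph{either} side. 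The map $\phi_i$ is then the composite bijection (bars of $B(\bar M_i)$) $\to$ (indexing set) $\to$ (bars of $B(\bar M_{i+1})$), where the first arrow uses the $\bar\gr'_i$-description and the second uses the $\bar\gr_{i+1}$-description; by construction it satisfies the two displayed formulas. Using the convention introduced just before \cref{lem_pairing} that barcodes may contain empty intervals $[a,a)$, there is no need to treat pairs with $\bar\gr'_i(g)=\bar\gr'_i(r)$ specially.

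There is essentially no obstacle here — this is a bookkeeping corollary — so the ``proof'' is a single paragraph unpacking the bijection, which is why the paper labels it a corollary and says ``the following result is immediate.'' I would write:

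\begin{proof}
By \cref{lem_subdivide}, both $B(\bar M_i)$ and $B(\bar M_{i+1})$ are, as multisets, indexed by the disjoint union of $\Pi_i$ and the set of generators $g\in G_i$ not appearing in any pair of $\Pi_i$: a pair $(g,r)\in\Pi_i$ contributes the interval $[\bar\gr'_i(g),\bar\gr'_i(r))$ to $B(\bar M_i)$ and the interval $[\bar\gr_{i+1}(g),\bar\gr_{i+1}(r))$ to $B(\bar M_{i+1})$, while an unpaired generator $g$ contributes $[\bar\gr'_i(g),\infty)$ to $B(\bar M_i)$ and $[\bar\gr_{i+1}(g),\infty)$ to $B(\bar M_{i+1})$. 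Defining $\phi_i$ to be the composite bijection from the bars of $B(\bar M_i)$ to this common index set and back to the bars of $B(\bar M_{i+1})$ yields a matching satisfying the two displayed identities. (Empty intervals $[a,a)$ are allowed in these barcodes by the convention preceding \cref{lem_pairing}, so no pair $(g,r)$ with $\bar\gr'_i(g)=\bar\gr'_i(r)$ needs separate treatment.)
\end{proof}
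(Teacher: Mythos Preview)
Your proof is correct and takes the same approach as the paper, which simply writes ``Immediate.'' You have merely spelled out in detail why the common indexing set $\Pi_i \sqcup \{g : g\notin(\Pi_i)_{G_i}\}$ furnished by \cref{lem_subdivide} yields a bijection of multisets, which is exactly what the one-word proof in the paper is gesturing at.
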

\begin{proof}
Immediate.
\end{proof}

The barcode of $\bar M$ has one interval for each generator in $G_M$, and each interval is of the form $[a,b)$, where $a$ is an even integer and $b$ is either $C$ or $\infty$.
Similarly, the barcode of $\bar N$ has one interval for each generator in $G_N$, and each interval is of the form $[a,b)$, where $a$ is an odd integer and $b$ is either $C$ or $\infty$.
Moreover, $|G_M|=|G_N|=Kn$.
Note that each interval of $B(\bar M)$ has length at least $C=(Kn+1)^{\frac 1p}$.

\begin{lemma}
\label{lem_bij_mods}
Suppose $d_I^p(M,N) < (Kn+1)^{\frac 1p}$.
For $0 < \epsilon < (Kn+1)^{\frac 1p}$ in \cref{lem_refinement_mods}, the composition of matchings 
\[ \phi = \phi_{l-1}\circ \cdots \circ  \phi_1 \colon B(\bar M)\to B(\bar N)\]
is a bijection. 
\end{lemma}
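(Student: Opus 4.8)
The plan is to follow the proof of \cref{lem:matching} almost verbatim, replacing its merge-tree ingredients by their module counterparts. Assume for contradiction that $\phi$ is not a bijection. As a composition of (partial) matchings, $\phi$ is injective, and $B(\bar M)$ and $B(\bar N)$ are finite multisets of equal cardinality, so $\phi$ fails to be a bijection exactly when it leaves some interval $J\in B(\bar M)$ unmatched. First I would record that every interval of $B(\bar M)$ has length at least $C$: by the construction in \cref{sec:pmodMN} every generator $g_{k,i}$ satisfies $\gr_M(g_{k,i})\leq p_k+(3,0)\leq(0,0)$, so the left endpoint of a finite interval of $B(\bar M)$ (which is the $x$-coordinate of one of these grades) is at most $0$ while its right endpoint is $C$; infinite intervals obviously have infinite length. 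Hence leaving $J$ unmatched contributes $\|J-\Pi(J)\|_p\geq C/2$ to $\text{$p$-cost}(\phi)$, so $\text{$p$-cost}(\phi)\geq C/2$.

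Next I would bound $\text{$p$-cost}(\phi)$ from above by $\sum_{i=1}^{\ell-1}\text{$p$-cost}(\phi_i)$, using (as in \cref{lem:matching}) that the $p$-cost of a composition of matchings is at most the sum of the $p$-costs. For each $i$, \cref{cor_matching_mods} describes $\phi_i$ explicitly: it sends $[\bar\gr_i'(g),\bar\gr_i'(r))$ to $[\bar\gr_{i+1}(g),\bar\gr_{i+1}(r))$ for $(g,r)\in\Pi_i$ and $[\bar\gr_i'(g),\infty)$ to $[\bar\gr_{i+1}(g),\infty)$ for the unpaired $g$, with each generator and relation appearing in at most one such interval. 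Since $\bar\gr(h)=\gr(h)_x$ and projecting onto one coordinate does not increase the $\ell^p$-norm, we have $|\bar\gr_i'(h)-\bar\gr_{i+1}(h)|\leq\|\gr_i'(h)-\gr_{i+1}(h)\|_p$, and summing the contributions of $\phi_i$ gives
\[
\text{$p$-cost}(\phi_i)^p\;\leq\;\sum_{h\in G_{i+1}\cup R_{i+1}}\|\gr_i'(h)-\gr_{i+1}(h)\|_p^p\;=\;d^p(P_i',P_{i+1})^p,
\]
where we used that $P_i'$ and $P_{i+1}$ are strongly compatible, so the relevant $\sigma$ is the identity. Combining this with \cref{lem_refinement_mods} and the hypotheses $d_I^p(M,N)<(Kn+1)^{1/p}$ and $\epsilon<(Kn+1)^{1/p}$ yields
\[
\frac C2\;\leq\;\text{$p$-cost}(\phi)\;\leq\;\sum_{i=1}^{\ell-1}d^p(P_i',P_{i+1})\;<\;d_I^p(M,N)+\epsilon\;<\;2(Kn+1)^{1/p}\;=\;\frac C2,
\]
a contradiction; hence $\phi$ is a bijection.

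The argument is essentially bookkeeping layered on top of the machinery already in place, so I do not expect a serious obstacle. The step requiring the most care is the bound $\text{$p$-cost}(\phi_i)\leq d^p(P_i',P_{i+1})$: one must keep track of the empty-interval convention (an interval $[\bar\gr_i'(g),\bar\gr_i'(r))$ may be degenerate, but matching it to $[\bar\gr_{i+1}(g),\bar\gr_{i+1}(r))$ still costs at most $\|\gr_i'(g)-\gr_{i+1}(g)\|_p^p+\|\gr_i'(r)-\gr_{i+1}(r)\|_p^p$, which by the remarks preceding \cref{lem_pairing} is no larger than the cost of leaving either interval unmatched), and one must check that no generator or relation is double-counted when the sum over intervals of $B(\bar M_i)$ is rewritten as a sum over $G_{i+1}\cup R_{i+1}$. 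The remaining input, the lower bound $\text{$p$-cost}(\phi)\geq C/2$, reduces to the observation that finite intervals of $B(\bar M)$ are long, which is immediate from the placement of the points $p_k$.
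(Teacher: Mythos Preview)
Your proposal is correct and follows essentially the same approach as the paper's proof: assume an interval of $B(\bar M)$ is unmatched, use that all intervals have length at least $C$ to get $\text{$p$-cost}(\phi)\geq C/2$, then chain the inequalities $\text{$p$-cost}(\phi)\leq\sum_i\text{$p$-cost}(\phi_i)\leq\sum_i d^p(P_i',P_{i+1})<d_I^p(M,N)+\epsilon<2(Kn+1)^{1/p}=C/2$ to a contradiction. Your write-up simply spells out in more detail the steps the paper leaves implicit (injectivity of the composition, the bound $\text{$p$-cost}(\phi_i)\leq d^p(P_i',P_{i+1})$ via the $x$-coordinate projection, and the empty-interval convention).
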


\begin{proof}
Suppose that an interval $I\in B(\bar M)$ is left unmatched by $\phi$.
Every interval in $B(\bar M)$ has length at least $C$, so $I$ contributes at least $\frac C2$ to $p\text{-cost}(\phi)$.
Using this, \cref{cor_matching_mods}, and that the $p$-cost of a composition of morphisms is at most the sum of the $p$-costs of each morphism, we get
\[
\frac C2 \leq \text{$p$-cost}(\phi) \leq \sum_{i=1}^{\ell-1} \text{$p$-cost}(\phi_i) \leq  \sum_{i=1}^{\ell-1} d^p(P_i', P_{i+1}) < d_I^p(M,N)+\epsilon < 2(Kn+1)^{\frac 1p},
\]
but this contradicts $C=4(Kn+1)^{\frac 1p}$.
\end{proof}

In order to conclude the proof, we must introduce some notation.
First, order the intervals of $B(\bar M)$ as $B(\bar M) = \{I_1^1, I_2^1, \ldots, I_{Kn}^1\}$.
For $2\leq i\leq \ell$, and $1\leq j\leq Kn$, let $I_j^i = \phi_{i-1}\circ \cdots \circ \phi_1(I_j^1)\in  B(\bar M_{i+1})$, where $\phi_i$ is as in \cref{cor_matching_mods}.
By \cref{cor_matching_mods}, we can pick $g_j^i$ and $r_j^i$ with $g_j^i\neq g_{j'}^i$ for $j\neq j'$ such that $I_j^i = [\bar \gr_i'(g_j^i), \bar \gr_i'(r_j^i))$ and $I_j^{i+1} = [\bar \gr_{i+1}(g_j^i),\bar \gr_{i+1}(r_j^i))$.
Furthermore, let $v_i = (v_1^i, \ldots, v_{n-1}^i)$, where $v_{j}^i=\bar \gr_i'(g_j^i)$ is the left endpoint of $I_j^i$.
Let $\hat{v}_j^i = \min|v_j^i-z|$, where the minimum is taken over all odd integers $z$.
Finally, let $\delta v_j^i = \max(\hat{v}_{j}^{i} - \hat{v}_j^{i+1},0)$.
In other words, it is only positive if the left endpoint is moving away from an even number, towards an odd number.

Next, let $\delta m_i = m_{i+1}-m_i$ if $1\leq m_i\leq m_{i+1}\leq 2$, and $\delta m_i = 0$, otherwise.

\begin{lemma}
For all $1\leq i\leq l$, 
\[
d^p(P_i',P_{i+1}) \geq ||(\delta v_1^i, \ldots, \delta v_{Kn}^i, \delta m_i)||_p.
\]
\label{lem_single-step}
\end{lemma}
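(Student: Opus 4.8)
The plan is to mirror the proof of \cref{lem:single-step-merge} from the merge tree case, with the right endpoints of intervals replaced by left endpoints and the ``distance to $\{1,3\}$'' replaced by ``distance to the nearest odd integer.'' First I would use \cref{cor_matching_mods}: the matching $\phi_i$ is induced by a pairing $\Pi_i\subseteq G_i\times R_i$ in which the generators $g_j^i$ are pairwise distinct, and $I_j^i=[\bar\gr_i'(g_j^i),\bar\gr_i'(r_j^i))$ while $I_j^{i+1}=[\bar\gr_{i+1}(g_j^i),\bar\gr_{i+1}(r_j^i))$. Hence the change in the left endpoint $v_j^i$ equals $|\bar\gr_{i+1}(g_j^i)-\bar\gr_i'(g_j^i)|$, and since the $g_j^i$ are distinct, summing these over $j$ gives
\[
d^p(P_i',P_{i+1}) \;\geq\; \Bigl(\sum_{j=1}^{Kn}\bigl|\bar\gr_{i+1}(g_j^i)-\bar\gr_i'(g_j^i)\bigr|^p\Bigr)^{1/p} \;=\; \|v_{i+1}-v_i\|_p,
\]
using that the $x$-coordinate of a difference of points in $\R^2$ is bounded by the full $p$-norm of that difference. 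By strong compatibility (\cref{lem_refinement_mods}(iii)), no left endpoint crosses an integer between grades, so on the segment from $v_j^i$ to $v_j^{i+1}$ the function ``distance to the nearest odd integer'' is monotone; therefore $|v_j^{i+1}-v_j^i|\geq \hat v_j^i-\hat v_j^{i+1}\geq \delta v_j^i$, giving $\|v_{i+1}-v_i\|_p\geq \|(\delta v_1^i,\dots,\delta v_{Kn}^i)\|_p$.

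It remains to account for the extra $\delta m_i$ coordinate without double counting. The quantity $m_i=\min\{c\mid p_k+(c,c)\in I_i\}$ records the earliest diagonal grade at which $u_i$ is in the image of $\M(P_i)$, and by \cref{lem_I_border_gen} (applied to the strongly compatible pair $P_i',P_{i+1}$, with $p=p_k+(m_i,m_i)$ and $p'=p_k+(m_{i+1},m_{i+1})$ when $m_{i+1}>m_i$) there is a generator $g\in G_{i+1}$ whose $x$-grade moves by at least $m_{i+1}-m_i=\delta m_i$; moreover, because the points $p_k$ lie far apart, this generator must be one of the $g_j^i$ associated to the block at $p_k$. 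As in \cref{lem:single-step-merge}, if this generator equals some $g_j^i$, then one checks that $\delta v_j^i=0$ for that $j$ (the left endpoint is then moving \emph{towards} an even integer $m_i\in\{1,2\}$, not away from one, in the window where $m_i$ increases), while the coordinate $|v_j^{i+1}-v_j^i|$ already exceeds $\delta m_i$. So replacing the zero coordinate $\delta v_j^i$ by $\delta m_i$ in the vector still gives something dominated by $\|v_{i+1}-v_i\|_p$, whence $d^p(P_i',P_{i+1})\geq \|(\delta v_1^i,\dots,\delta v_{Kn}^i,\delta m_i)\|_p$.

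The main obstacle I anticipate is the bookkeeping in the last step: precisely verifying that the generator supplied by \cref{lem_I_border_gen} is one of the tracked $g_j^i$ and that, when $\delta m_i>0$, the corresponding $\delta v_j^i$ vanishes so that substituting $\delta m_i$ for it does not overcount the cost. This requires carefully using the separation conditions on the $p_k$ (so that the relevant generator lives in the block $M^k$ and its left endpoint sits at $p_k+(m_i,0)$ in the $x$-coordinate) together with the no-integer-crossing property of strong compatibility, exactly as the merge tree argument used condition (iv) of \cref{lem_refinement}. Everything else is a routine transcription of the merge tree proof.
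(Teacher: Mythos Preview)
Your overall strategy mirrors the paper's, but there are two genuine gaps in the ``extra $\delta m_i$ coordinate'' step.

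First, \cref{lem_I_border_gen} does \emph{not} guarantee a generator whose $x$-grade moves by at least $\delta m_i$; it gives a generator $h$ for which \emph{either} the $x$-coordinate \emph{or} the $y$-coordinate moves by that much. If it is the $y$-coordinate, then $\delta m_i$ is witnessed by a term $|\gr_{i+1}(h)_y-\gr'_i(h)_y|^p$ that is invisible in your vector $v_{i+1}-v_i$ (which only records $x$-coordinates of tracked generators). So you cannot argue via $\|v_{i+1}-v_i\|_p$ alone; the paper instead keeps the full sum $\sum_{g\in G_{i+1}}\bigl(|\Delta_x g|^p+|\Delta_y g|^p\bigr)$ and bounds each $(\delta v_j^i)^p$ and $(\delta m_i)^p$ by a \emph{specific} summand of that sum, checking that no summand is used twice. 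In the $y$-case there is automatically no collision.

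Second, your claim that ``because the points $p_k$ lie far apart, this generator must be one of the $g_j^i$ associated to the block at $p_k$'' is unfounded and unnecessary. The presentations $P'_i,P_{i+1}$ are arbitrary intermediate presentations, not $P_M$ or $P_N$; nothing ties their generators to the original blocks. The paper does not need this: if $h$ is \emph{not} one of the $g_j^i$ (or if the $y$-term is used), the term bounding $(\delta m_i)^p$ is fresh and there is no double counting. The only problematic case is $h=g_j^i$ together with the $x$-case. There the paper uses strong compatibility to pin $\gr'_i(h)_x$ and $\gr_{i+1}(h)_x$ inside $[(p_k)_x+1,(p_k)_x+2]$, and then the fact that $(p_k)_x$ is \emph{even} (built into the construction precisely for this purpose) forces $\hat v_j^i<\hat v_j^{i+1}$, hence $\delta v_j^i=0$. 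Your sketch ``moving towards an even integer $m_i\in\{1,2\}$'' conflates $m_i$ with the endpoint and misses the role of the parity of $(p_k)_x$.
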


\begin{proof}
We have
\begin{equation}
\label{eq_dp}
d^p(P_i',P_{i+1})\geq \left(\sum_{g\in G_{i+1}} |\gr_{i+1}(g)_x - \gr_i'(g)_x|^p + |\gr_{i+1}(g)_y - \gr_i'(g)_y|^p\right)^{1/p},
\end{equation}
so taking the $p$th power of both sides of the inequality in the lemma, we only need to show
\begin{equation}
\sum_{g\in G_{i+1}} \left(|\gr_{i+1}(g)_x - \gr_i'(g)_x|^p + |\gr_{i+1}(g)_y - \gr_i'(g)_y|^p\right)^{1/p} \geq (\delta m_i)^p + \sum_{j=1}^{Kn} (\delta v_j^i)^p.
\end{equation}

By definition, we have $\delta v_j^i \leq |\gr_{i+1}(g_j^i) - \gr_i'(g_j^i)|$ for all $i$ and $j$.
Let $q'=q_k+(m_i,m_i)$ and $q=q_k+(m_{i+1},m_{i+1})$.
By \cref{lem_I_border_gen}, there is an $h\in G_{i+1}$ such that either $\gr'_i(h)_x \leq q'_x$ and $\gr_{i+1}(h)_x \geq q_x$, or $\gr'_i(h)_y \leq q'_y$ and $\gr_{i+1}(h)_y \geq q_y$.
Thus, we have
\[
\delta m_i\leq |\gr_{i+1}(h)_x- \gr'_i(h)_x|
\]
in the first case and
\[
\delta m_i\leq |\gr_{i+1}(h)_y- \gr'_i(h)_y|
\]
in the second.

Thus, every term on the right hand side of \cref{eq_dp} is bounded above by a term on the left hand side, so to prove the lemma, it suffices to show that we have not used a term on the left hand side to bound two strictly positive terms on the right hand side.
This can only happen if $h=g_j^i$ for some $j$, we are using $\delta m_i\leq |\gr_{i+1}(h)_x- \gr'_i(h)_x|$, and $\delta v_j^i$ and $\delta m_i$ are both strictly positive.

In this case, we have $\gr'_i(h)_x\leq (q_k)_x + 2$ and $\gr_{i+1}(h)_x \geq (q_k)_x + 1$, so since $P_i'$ and $P_{i+1}$ are strongly compatible by \cref{lem_refinement_mods} (iii) and $\delta m_i>0$, we have
\[
(q_k)_x + 1\leq \gr'_i(h)_x<\gr_{i+1}(h)_x\leq (q_k)_x + 2.
\]
By construction, $(q_k)_x$ is even, so since $h=g_j^i$, we have $\hat{v}_j^{i} - \hat{v}_j^{i+1}<0$, and thus, $\delta v_j^i = 0$, which is a contradiction.
\end{proof}

\begin{theorem}
If $(\Pp,\Q)$ has no solution, then $d_I^p(M,N) \geq (Kn+1)^{1/p}$. 
\end{theorem}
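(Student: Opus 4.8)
The argument is the exact analogue of the merge-tree case, so the plan is to assemble the pieces that have already been set up in \cref{subsec_step_2_mods}. Fix $\epsilon>0$ with $d_I^p(M,N)+\epsilon<(Kn+1)^{1/p}$ (if no such $\epsilon$ exists the bound is already proved), and invoke \cref{lem_refinement_mods} to get the zigzag $M=M_1,\dots,M_\ell=N$ of modules and presentations $P_i'$, $P_{i+1}$ with $\sum_i d^p(P_i',P_{i+1})<d_I^p(M,N)+\epsilon$. By \cref{lem_bij_mods} the composite matching $\phi=\phi_{\ell-1}\circ\cdots\circ\phi_1\colon B(\bar M)\to B(\bar N)$ is a bijection, so for each $j$ the left endpoint sequence $v_j^1,\dots,v_j^\ell$ runs from an even integer (the left endpoint of $I_j^1$, which is even by the structure of $B(\bar M)$) to an odd integer (the left endpoint of its image in $B(\bar N)$). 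Hence each such sequence must cross a non-integer value between an even and an odd integer, and by the ``no skipping integers'' clause in strong compatibility (\cref{lem_refinement_mods}(iii)) the total upward movement toward the nearest odd integer satisfies $\Delta(I_j^1):=\sum_{i=1}^{\ell-1}\delta v_j^i\geq 1$.

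The extra unit comes from the tracked element $u\in M^\infty$ and index $k$ from \cref{lem_1_to_3_mods}: in case (i), $m_1\le 1$ and $m_\ell\ge 2$ (since $p_k\in I(u)=I_1$ forces $m_1\le 1$... more precisely $p_k+(m_1,m_1)\in I_1$ with $m_1$ minimal, and $p_k+(2,2)\notin I_\ell$ forces $m_\ell>2$, but $m_\ell$ is realized on the grid so $m_\ell\ge 2$), and $m_i$ therefore increases from $\le 1$ to $\ge 2$, so the total of the ``moving from $1$ toward $2$'' increments is $\Delta(u):=\sum_{i=1}^{\ell-1}\delta m_i\geq 1$; case (ii) is symmetric with $m_i$ decreasing from $\ge 2$ to $\le 1$. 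Now combine \cref{lem_single-step}, the triangle inequality in $\ell^p$, and the componentwise bounds $\sum_i\delta v_j^i\ge 1$, $\sum_i\delta m_i\ge 1$:
\begin{align*}
d_I^p(M,N)+\epsilon &> \sum_{i=1}^{\ell-1} d^p(P_i',P_{i+1}) \geq \sum_{i=1}^{\ell-1}\bigl\|(\delta v_1^i,\dots,\delta v_{Kn}^i,\delta m_i)\bigr\|_p \\
&\geq \bigl\|(\Delta(I_1^1),\dots,\Delta(I_{Kn}^1),\Delta(u))\bigr\|_p \geq \bigl\|(1,\dots,1)\bigr\|_p = (Kn+1)^{1/p}.
\end{align*}
Since this holds for all sufficiently small $\epsilon>0$, we conclude $d_I^p(M,N)\ge (Kn+1)^{1/p}$.

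The one point that needs genuine care — and the analogue of the ``double counting'' issue flagged in \cref{lem:single-step-merge} — is already handled inside \cref{lem_single-step}: the generator $h$ that \cref{lem_I_border_gen} produces to witness the movement of $m_i$ may coincide with some $g_j^i$ whose left endpoint is also moving. The proof of \cref{lem_single-step} shows that in the only dangerous sub-case (where $h=g_j^i$ and the witness is in the $x$-coordinate) strong compatibility plus $\delta m_i>0$ forces $\gr_i'(h)_x$ and $\gr_{i+1}(h)_x$ to lie strictly between the consecutive integers $(p_k)_x+1$ and $(p_k)_x+2$, with $(p_k)_x$ even, so the left endpoint is in fact moving \emph{toward} an even integer and $\delta v_j^i=0$. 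Thus no term of $d^p(P_i',P_{i+1})$ is charged twice, which is exactly what makes the componentwise inequality feeding the triangle inequality valid. Beyond this, the remaining steps are bookkeeping identical in spirit to the merge-tree proof, so I would simply cite \cref{lem_single-step,lem_bij_mods,lem_1_to_3_mods,lem_I_border_gen,lem_refinement_mods} and write out the displayed chain of inequalities above.
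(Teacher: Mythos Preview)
Your proposal is correct and follows essentially the same route as the paper's proof: invoke \cref{lem_refinement_mods}, use \cref{lem_bij_mods} to get a bijection so that each left endpoint travels from an even to an odd integer (giving $\Delta(I_j^1)\ge 1$), use \cref{lem_1_to_3_mods} to force $m_i$ to traverse $[1,2]$ (giving $\Delta(u)\ge 1$), and then apply \cref{lem_single-step} together with the triangle inequality to obtain $(Kn+1)^{1/p}$. Your contradiction framing (assuming $d_I^p(M,N)+\epsilon<(Kn+1)^{1/p}$) is in fact slightly cleaner than the paper's, since \cref{lem_bij_mods} explicitly requires the hypothesis $d_I^p(M,N)<(Kn+1)^{1/p}$, which the paper's direct argument leaves implicit.
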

\begin{proof}
Let $d_I^p(M,N) > \epsilon>0$.
We use the notation introduced after \cref{lem_bij_mods}.
Let $\Delta(I^1_j)= \sum_{i=1}^{\ell-1} \delta v_j^i$ and $\Delta(m) = \sum_{i=1}^{\ell-1} \delta m_i$.
Since all the left endpoints of intervals in $B(\bar M)$ are even integers and the left endpoints of intervals in $B(\bar N)$ are odd integers, we have $\hat{v}_1^i = 1$ and $\hat{v}_\ell^i = 0$, and thus $\Delta(I^1_j)\geq 1$.
Moreover, $\Delta(m)\geq 1$ because $m_1 \leq 0$ and $m_\ell > 2$ by \cref{lem_1_to_3_mods} (i), and the $m_i$ must therefore traverse the interval from $1$ to $2$.
By \cref{lem_refinement_mods} and \cref{lem_single-step},
\begin{align*}
d_I^p(M,N) + \epsilon > \sum_{i=1}^{\ell-1} d^p(P_i', P_{i+1}) &\geq \sum_{i=1}^{\ell-1} ||(\delta v_1^i, \ldots, \delta v_{Kn}^i, \delta m_i)||_p\\
\text{(triangle inequality)}&\geq ||(\Delta(I_1^1), \ldots, \Delta(I_{Kn}^1), \Delta(g,h))||_p \\
& \geq ||(1,1, \ldots,1)||_p \\
& = (Kn+1)^{1/p}.
\end{align*}
Since this holds for every sufficiently small $\epsilon>0$, the lemma follows.
\end{proof}

\textbf{Data availability statement:} No data was produced or analyzed.

\section{Discussion and future work}
A natural question to consider is the problem of approximating $p$-presentation distances within a constant factor $c$; for $p=\infty$ this is known to be NP-hard for $c<3$ both for merge trees \cite{agarwal2018computing} (see Theorem 3.3 and footnote 3) and for multiparameter modules \cite[Theorem 4]{bjerkevik2020computing}.
The techniques used to prove those results do not generalize immediately to the case of $p<\infty$, as the proofs implicitly involve cost vectors of the form $u=(1,\dots,1)$ and $v=(1,\dots,1,3)$ for which the ratio $\|v\|_\infty/\|u\|_\infty = 3$ gives the factor of 3. In contrast, $\|v\|_p/\|u\|_p$ goes to $1$ as the size of the vectors increases, and therefore we cannot get hardness of approximation up to any constant factor for finite $p$ using the methods of \cite{agarwal2018computing,bjerkevik2020computing}. 
It is not unreasonable to hope for positive approximation results; for instance, it is worth exploring whether polytime approximation algorithms for partition problems can be used to give a polytime algorithm for approximating $d_I^p$ for merge trees.

It remains an open question whether computing the $p$-presentation distance is in NP for $p < \infty$, both for merge trees and persistence modules. The main challenge is that these distances are defined as an infimum over sequences of merge trees or persistence modules, and currently there is no known upper bound on the length $\ell$ of such sequences.
A closely related open problem is whether the infimum can be replaced by a minimum.

\bibliographystyle{plainurl}
\bibliography{ref}

\begin{thebibliography}{10}

\bibitem{agarwal2018computing}
Pankaj~K Agarwal, Kyle Fox, Abhinandan Nath, Anastasios Sidiropoulos, and Yusu
  Wang.
\newblock Computing the {G}romov-{H}ausdorff distance for metric trees.
\newblock {\em ACM Transactions on Algorithms (TALG)}, 14(2):1--20, 2018.

\bibitem{bjerkevik2017computational}
H{\aa}vard~Bakke Bjerkevik and Magnus~Bakke Botnan.
\newblock Computational complexity of the interleaving distance.
\newblock {\em arXiv preprint arXiv:1712.04281}, 2017.

\bibitem{bjerkevik2020computing}
H{\aa}vard~Bakke Bjerkevik, Magnus~Bakke Botnan, and Michael Kerber.
\newblock Computing the interleaving distance is {NP}-hard.
\newblock {\em Foundations of Computational Mathematics}, 20:1237--1271, 2020.

\bibitem{bjerkevik2021ell}
H{\aa}vard~Bakke Bjerkevik and Michael Lesnick.
\newblock lp-distances on multiparameter persistence modules.
\newblock {\em arXiv preprint arXiv:2106.13589}, 2021.

\bibitem{botnan2022introduction}
Magnus~Bakke Botnan and Michael Lesnick.
\newblock An introduction to multiparameter persistence.
\newblock {\em arXiv preprint arXiv:2203.14289}, 2022.

\bibitem{bubenik2023exact}
Peter Bubenik, Jonathan Scott, and Donald Stanley.
\newblock Exact weights, path metrics, and algebraic wasserstein distances.
\newblock {\em Journal of Applied and Computational Topology}, 7(2):185--219,
  2023.

\bibitem{cardona2022universal}
Robert Cardona, Justin Curry, Tung Lam, and Michael Lesnick.
\newblock The universal $l^p$-metric on merge trees.
\newblock In {\em 38th International Symposium on Computational Geometry (SoCG
  2022)}. Schloss Dagstuhl-Leibniz-Zentrum f{\"u}r Informatik, 2022.

\bibitem{carlsson2004persistence}
Gunnar Carlsson, Afra Zomorodian, Anne Collins, and Leonidas Guibas.
\newblock Persistence barcodes for shapes.
\newblock In {\em Proceedings of the 2004 Eurographics/ACM SIGGRAPH symposium
  on Geometry processing}, pages 124--135, 2004.

\bibitem{cohen2005stability}
David Cohen-Steiner, Herbert Edelsbrunner, and John Harer.
\newblock Stability of persistence diagrams.
\newblock In {\em Proceedings of the twenty-first annual symposium on
  Computational geometry}, pages 263--271, 2005.

\bibitem{cohen2010lipschitz}
David Cohen-Steiner, Herbert Edelsbrunner, John Harer, and Yuriy Mileyko.
\newblock Lipschitz functions have l p-stable persistence.
\newblock {\em Foundations of computational mathematics}, 10(2):127--139, 2010.

\bibitem{crawley2015decomposition}
William Crawley-Boevey.
\newblock Decomposition of pointwise finite-dimensional persistence modules.
\newblock {\em Journal of Algebra and its Applications}, 14(05):1550066, 2015.

\bibitem{de2016categorified}
Vin De~Silva, Elizabeth Munch, and Amit Patel.
\newblock Categorified reeb graphs.
\newblock {\em Discrete \& Computational Geometry}, 55(4):854--906, 2016.

\bibitem{dey2022computational}
Tamal~Krishna Dey and Yusu Wang.
\newblock {\em Computational topology for data analysis}.
\newblock Cambridge University Press, 2022.

\bibitem{edelsbrunner2022computational}
Herbert Edelsbrunner and John~L Harer.
\newblock {\em Computational topology: an introduction}.
\newblock American Mathematical Society, 2022.

\bibitem{lesnick2015theory}
Michael Lesnick.
\newblock The theory of the interleaving distance on multidimensional
  persistence modules.
\newblock {\em Foundations of Computational Mathematics}, 15(3):613--650, 2015.

\bibitem{morozov2013interleaving}
Dmitriy Morozov, Kenes Beketayev, and Gunther Weber.
\newblock Interleaving distance between merge trees.
\newblock {\em Discrete and Computational Geometry}, 49(22-45):52, 2013.

\bibitem{otter2017roadmap}
Nina Otter, Mason~A Porter, Ulrike Tillmann, Peter Grindrod, and Heather~A
  Harrington.
\newblock A roadmap for the computation of persistent homology.
\newblock {\em EPJ Data Science}, 6:1--38, 2017.

\bibitem{robinson2017hypothesis}
Andrew Robinson and Katharine Turner.
\newblock Hypothesis testing for topological data analysis.
\newblock {\em Journal of Applied and Computational Topology}, 1:241--261,
  2017.

\end{thebibliography}
\end{document}